\definecolor{mygray}{gray}{.85}
\bfseries\color{blue}
\newcommand{\tool}{{\sc HOME}\xspace}
\newcommand{\Bb}{\mathds{I}}
\newcommand{\Dd}{\mathcal{D}}
\newcommand{\Op}{{\bf {OP}}}
\newcommand{\Oo}{\mathcal{O}}
\newcommand{\type}{\mathcal{T}}
\newcommand{\return}{{\tt return}}
\newcommand{\rud}{{\tau_{{\tt uf}}}}
\newcommand{\sid}{{\tau_{{\tt si}}}}
\newcommand{\Expr}{\mathcal{E}}
\newcommand{\simply}{{\sf Simply}}
\newcommand{\sdd}{{\tau_{{\tt lk}}}}
\newcommand{\Dom}{{\sf Dom}}
\newcommand{\Col}{{\sf Col}}
\newcommand{\Alg}{{\sf Alg}}
\newcommand{\SI}{{\bf SI}}
\newcommand{\Var}{{\sf Var}}
\newcommand{\RVar}{{\sf RVar}}
\newcommand {\sem}[1]{{\llbracket{#1}\rrbracket}}
  \providecommand\BibTeX{{%
    \normalfont B\kern-0.5em{\scshape i\kern-0.25em b}\kern-0.8em\TeX}}}
\begin{document}

\title[Formal Verification of Higher-Order Masked Arithmetic Programs]{A Hybrid Approach to Formal Verification of Higher-Order Masked Arithmetic Programs}

\author{Pengfei Gao}
\email{gaopf@shanghaitech.edu.cn}
\affiliation{%
  \institution{ShanghaiTech University}
  \city{Shanghai}
  \country{China}
  \postcode{201210}
}
\additionalaffiliation{%
  \institution{University of Chinese Academy of Sciences}
  \city{Beijing}
  \country{China}
}
\additionalaffiliation{%
  \institution{Shanghai Institute of Microsystem and Information Technology}
  \city{Shanghai}
  \country{China}
}

\author{Hongyi Xie}
\email{xiehy@shanghaitech.edu.cn}
\affiliation{%
  \institution{ShanghaiTech University}
  \city{Shanghai}
  \country{China}
  \postcode{201210}
}

\author{Fu Song}
\affiliation{%
  \institution{ShanghaiTech University}
  \city{Shanghai}
  \country{China}
  \postcode{201210}
}
\email{songfu@shanghaitech.edu.cn}

\author{Taolue Chen}
\affiliation{%
  \institution{Birkbeck, University of London}
  \city{London}
  \country{UK}
}
\email{taolue@dcs.bbk.ac.uk}

\renewcommand{\shortauthors}{Gao et al.}


\begin{abstract}
Side-channel attacks, which are capable of breaking secrecy via side-channel information, pose a growing threat to the implementation of cryptographic algorithms.
Masking is an effective countermeasure against side-channel attacks by removing the statistical dependence between secrecy and power consumption via randomization.
However, designing efficient and effective masked implementations turns out to be an error-prone task.
Current techniques for verifying whether masked programs are secure are limited in their applicability and accuracy, especially when they are applied.
To bridge this gap, in this article, we first propose a sound type system, equipped with an efficient type inference algorithm,
for verifying masked arithmetic programs against higher-order attacks. We then give novel model-counting based and pattern-matching based methods
which are able to precisely determine whether the potential leaky observable sets detected by the type system are genuine or simply spurious.
We evaluate our approach on  various implementations of arithmetic cryptographic programs. The experiments confirm that our approach outperforms  
the state-of-the-art baselines in terms of applicability, accuracy and efficiency.

%
\end{abstract}

\begin{CCSXML}
<ccs2012>
<concept>
<concept_id>10011007.10010940.10010992.10010998.10010999</concept_id>
<concept_desc>Software and its engineering~Software verification</concept_desc>
<concept_significance>500</concept_significance>
</concept>
<concept>
<concept_id>10002978.10002986.10002990</concept_id>
<concept_desc>Security and privacy~Logic and verification</concept_desc>
<concept_significance>500</concept_significance>
</concept>
<concept>
<concept_id>10002978.10003001.10010777.10011702</concept_id>
<concept_desc>Security and privacy~Side-channel analysis and countermeasures</concept_desc>
<concept_significance>500</concept_significance>
</concept>
</ccs2012>
\end{CCSXML}

\ccsdesc[500]{Software and its engineering~Software verification}
\ccsdesc[500]{Security and privacy~Logic and verification}
\ccsdesc[500]{Security and privacy~Side-channel analysis and countermeasures}

\keywords{Formal verification, higher-order masking, cryptographic programs, side-channel attacks}

\maketitle

\section{Introduction}\label{sec:intro}
Side-channel attacks are capable of breaking secrecy via side-channel information such as power consumption~\cite{KJJ99,MOP07}, execution time~\cite{WGS18}, faults~\cite{KSBM19,Wagner04},
acoustic~\cite{GST17} and cache~\cite{GWW18}, posing a growing threat to implementations of cryptographic algorithms. In this work, we focus on power side-channel attacks, arguably the most effective physical side-channel attack~\cite{ISW03},
where power consumption data are used as the side-channel information.

To thwart power side-channel attacks, \emph{masking} is one of the most widely-used and effective countermeasures~\cite{ChariRR02,ISW03}.
Essentially, masking is designed to remove the statistical dependence between secrecy and power consumption via randomization.  
Fix a sound security parameter $d$, an order-$d$ masking typically makes use of a secret-sharing scheme to logically split the secret data into $(d+1)$ shares such that any $d'\leq d$ shares are \emph{statistically independent} on the secret data.
Masked implementations of some specific cryptographic algorithms such as PRESENT, AES and its non-linear component (Sbox)
(e.g.,~\cite{ISW03,SP06,RP10,MPLPW11,PR13}), as well as secure conversion algorithms between Boolean and arithmetic maskings (e.g.,~\cite{Goubin01,CGV14,Coron17,BCZ18,HT19}),  have been published over years.
It is crucial to realize that an implementation that is based on a secure scheme does  \emph{not} provide the secure guarantee in practice automatically.
For instance, the order-$d$ masking of AES proposed in~\cite{RP10} and its extensions~\cite{KHL11,CGPQR12} were later shown to be vulnerable to an attack of order-$(\lceil\frac{d}{2}\rceil+1)$~\cite{CPRR13}.
Indeed, designing efficient and effective masked implementations is an error-prone process.
Therefore, it is vital to verify masked programs in addition to the underlying security scheme, which should ideally be done automatically.

The predominant approach addressing this problem is the empirical leakage assessment by
statistical significance tests or launching state-of-the-art side-channel attacks, e.g.,~\cite{GJJR11,ANR17,Standaert17} to cite a few.
Although these approaches are able to identify some flaws, they can neither prove their absence nor identify all possible flaws exhaustively.
In other words, even if no flaw is detected,
it is still inconclusive, as it is entirely possible that the implementation could be broken with a better measurement setup or more leakage traces.
Recently, approaches based on formal verification are emerging for automatically verifying
masked programs~\cite{BYT17,BBDFGS15,BBDFGSZ16,BBFG18,BMZ17,Coron18,BGIKMW18,OMHE17,EWS14a,EWS14b,ZGSW18,GXZSC19}.
As the state of the art, most of these methods can only tackle Boolean programs~\cite{BBFG18,BMZ17,BGIKMW18,EWS14a,EWS14b,BYT17,ZGSW18,BelaidGR18} or first-order security~\cite{OMHE17,OuahmaMHE19,GXZSC19},
thus are limited in applicability and usability. Some work~\cite{BBDFGS15,BBDFGSZ16,Coron18} is able to verify arithmetic programs against higher-order attacks, but is limited in accuracy in the sense that secure programs may fail to pass the verification whereas
potential leaky observable sets are hard to be resolved automatically
so tedious manual examination is usually necessary to differentiate genuine and spurious ones.
Therefore, formal verification of masked arithmetic programs against higher-order attacks (with full tool support to
automatically resolve potential leaky observable sets) is still an unsolved question and requires further research.


\smallskip
\noindent{\bf Main contributions}.
Our work focuses on formal verification of \emph{higher-order masked arithmetic programs} based on
the standard probing model (ISW model) proposed by Ishai, Sahai and Wagner~\cite{ISW03}. Arithmetic programs admit considerably richer operations such as finite-field multiplication, and are much more challenging than their Boolean counterparts whose variables are over the Boolean domain only. Transforming arithmetic programs to equivalent Boolean ones and then applying existing tools is theoretically possible,
but suffers from several disadvantages: (1) complicated arithmetic operations (e.g., finite-field multiplication)
have to be encoded as bitwise operations; (2) verifying order-$d$ security of a 8-bit arithmetic program
must be done by verifying order-$(8^d)$ security over its Boolean translation which has considerably more observable variables (at least 8$\times$).
Because of this, we hypothesize that this approach is practically unfavourable, if not infeasible. Indeed, the state-of-the-art tool {\tt maskVerif}~\cite{BBFG18} has already required over 18 minutes to accomplish verification of the fifth-order masked Boolean implementation of DOM Keccak Sbox~\cite{GSM17}
which  has only 618 observable variables.

In light of this, we pursue a direct verification approach for higher-order masked arithmetic programs.
To guarantee that a masked program is order-$d$ secure, one has to ensure that the joint distributions of all size-$d$ sets of observable variables (observable sets)  that are potentially exposed to an attacker are independent of secret data.
There are two key challenges: (1) the combinatorial explosion problem of observable sets when the number of observable
variables and the security order increasing, and (2) how to efficiently and automatically resolve potential leaky observable
sets. The first challenge is addressed by the \textbf{first} step of our hybrid approach, for which we propose a sound type system together with an efficient type inference algorithm, which can prescribe a \emph{distribution type} for each observable set.
One can often---but not always---deduce leakage-freeness of observable sets from their distribution types, whereas observable sets that cannot be solved by the type system are regarded as potential leaky observable sets.

In case that \emph{potential} leaky observable sets are produced by the type system (i.e., the second challenge),
we provide automated \emph{resolution methods} which are the \textbf{second} step of our hybrid approach. This step is important:  for instance,  \cite{BBDFGS15} reported 98,176 potential third-order sets on Sbox~\cite{SP06}, which are virtually impossible to check individually by human beings.
Technically, the second step is based on \emph{model-counting} and \emph{pattern matching} based methods.
For the model-counting, we consider two baseline algorithms: the first one transforms the problem
to the satisfiability problem of a (quantifier-free) first-order logic formula that can be solved by SMT solvers (e.g. Z3~\cite{MB08}), an extension of our previous one for first-order security~\cite{GXZSC19}; the second one computes the probability distribution of an observable set
by naively enumerating all possible valuations of variables. We give, for the first time in the current paper, a third, \emph{GPU-accelerated parallel algorithm}, to leverage GPU's  computing capability. Instead of creating a general GPU-based solver which is control-flow intensive and would downgrade the GPU performance, we automatically synthesize a GPU program for each potential leaky observable set,
which, in a nutshell, enumerates all possible valuations of variables by leveraging GPU parallel computing.
It turns out that the GPU-based parallel algorithm significantly outperforms the two baseline algorithms.

The pattern matching based method is devised to further reduce the cost of model-counting.
This method infers the distribution type of an observable set from observable sets whose distribution types are known,
by searching an ``isomorphism'' between the computation expressions of the variables in two observable sets.
If such an isomorphism exists, one can conclude that the
two observable sets have the same distribution type, by which one can save costly model-counting
procedures. The pattern matching based method also automatically summarizes patterns of leaky observable sets
which can be used for diagnosis and debugging.

Our hybrid approach enjoys several advantages over the existing approaches. Compared to the empirical methods based on the statistical
analysis of leakage traces, our approach is able to give conclusive  security assertions 
independent of assessment conditions, testing strategies or the amount of gathered leakage
traces. Compared to the existing formal verification approaches, our overall hybrid approach is both sound and complete, and is able to verify more types of masked implementations. Remarkably, our model-counting and pattern matching based methods could also be integrated into existing formal verification approaches, effectively making them complete and more efficient.

We implement our approach in a tool \textbf{\tool} ({\bf H}igher-{\bf O}rder {\bf M}asking v{\bf{E}}rifier),
and evaluate on various benchmarks including masked implementations of full AES and MAC-Keccak programs.
The results are very encouraging:  \tool can handle benchmarks that have never been verified by existing formal verification approaches,
e.g., implementations of Boolean to arithmetic mask conversion from~\cite{SchneiderPOG19}, arithmetic to Boolean mask conversion from~\cite{CGTV15},
the non-linear transformation and round function of {\sc Simon} from~\cite{TangZZQ15}.
Our tool is also significantly faster than \cite{BBDFGS15} on almost all secure programs (e.g., 110$\times$ and 31$\times$ speed-up for Key schedule~\cite{RP10} and 4th-order Sbox~\cite{SP06}; cf.\  Table~\ref{tab:higherorder}), which is the only available tool to verify masked 
higher-order arithmetic programs under an equivalent leakage model to the ISW model.
The experimental results confirm that \tool is superior to existing tools in both functionality and performance. 
%

\smallskip
To sum up, the main contributions of this work are as follows:
\begin{itemize}
	\item We propose a sound type system and provide an efficient type inference algorithm for proving security of masked arithmetic programs;

	\item We propose a novel GPU-accelerated parallel algorithm to resolve potential leaky observable sets which significantly outperforms two baselines;

    \item We propose a novel pattern matching based method to automatically summarize patterns of leakage sets, which can reduce the cost of model-counting;

    \item We implement our algorithms in a software tool and demonstrate the effectiveness and efficiency of our approach on various benchmarks.
\end{itemize}

\medskip
The remainder of this article is organized as follows.
In Section~\ref{sec:pre}, we introduce basic notations and recall the probing leakage model.
In Section~\ref{sec:overview}, we present a motivating example and the overview of our approach.
In Section~\ref{sec:typesystem}, we present the sound type system, its inference algorithm and sound transformations to facilitate type inference.
In section~\ref{sec:modelcounting}, we describe the model-counting and patter matching based methods.
In Section~\ref{sec:exper}, we evaluate the performance of our
approach on representative examples from the literature.
We discuss related work in Section~\ref{sect:relatedwork}.
Finally, we conclude the article in Section~\ref{sec:concl}.

\section{Preliminaries}\label{sec:pre}

In this section, we describe masked cryptographic programs, masking schemes,
leakage models and security notions.

\subsection{Masked Cryptographic Programs}
We fix an integer $\kappa>0$ and the domain $\Bb=\{0,\cdots,2^\kappa-1\}$.
For a set $R$ of random variables, let $\Dd(R)$ denote the set of joint  distributions over $R$.
%

\medskip
\noindent{\bf Syntax.} We focus on programs written in C-like code that implement cryptographic algorithms such as AES, as opposed to arbitrary software programs. The syntax is given as follows.
\[\begin{array}{lrl}
   \mbox{Operation:} &  \Op \ni  \circ::= \oplus \mid \wedge \mid \vee \mid  \odot \mid +\mid -\mid \times \\
   \mbox{Expression:}&          e ::= c\in\Bb \mid x \mid e\circ e \mid \neg e \mid e\ll c\mid e\gg c \\
   \mbox{Statememt:} &          {\tt stmt} ::=  x\leftarrow e \mid {\tt stmt}; {\tt stmt}  \\
   \mbox{Program:}  &          P::=  {\tt stmt}; \ \return \ x_1,...,x_m;
  \end{array}\]
A program $P$ is a sequence of assignments $x\leftarrow e$ followed by a $\return$ statement,
where $e$ is an expression building from a set of variables and $\kappa$-bit constants using the
bitwise operations: \emph{negation} ($\neg$), \emph{and} ($\wedge$),
\emph{or} ($\vee$), \emph{exclusive-or} ($\oplus$), \emph{left shift} $\ll$ and \emph{right shift} $\gg$;
modulo $2^\kappa$ arithmetic operations: \emph{addition} ($+$), \emph{subtraction} ($-$), \emph{multiplication} ($\times$);
and finite-field \emph{multiplication} ($\odot$) over the finite field $\mathds{F}_{2^\kappa}$.

To analyze a cryptographic program $P$, it is common to assume that it is in straight-line form (i.e., branching- and loop-free)~\cite{EWS14b,BBDFGS15}. Remark that our tool supports programs with non-recursive procedure calls and static loops by transforming to straight-line form by procedure inlining and loop unfolding.
We further assume that $P$ is in the \emph{single static assignment} (SSA) form (i.e., there is at most one assignment $x\leftarrow e$ in $P$ for $x$.) and each expression uses at most one operator. (One can easily transform an arbitrary straight-line program to the SSA form.) 
For an assignment $x\leftarrow e$, we will denote by ${\tt Operands}(x)$ the set of operands associated
with the operator of $e$.

We fix a program $P$ annotated by the \emph{public}, \emph{private} (such as keys) and \emph{random} input variables.
The set $X$ of variables in $P$ is partitioned into four sets: $X_p,\ X_k,\ X_i$ and $X_r$,
where $X_p$ denotes the set of public input variables,  $X_k$ denotes the set of private input variables,
$X_i$ denotes the set of \emph{intermediate} variables,
and $X_r$ denotes the set of uniform random variables  on the domain $\Bb$.
In general, random variables are used for masking the private input variables.

\medskip
\noindent{\bf Semantics.}
For each variable $x\in X$, we define the \emph{computation} of $x$, $\Expr(x)$, as an expression over input variables $X_p\cup X_k\cup X_r$.
Formally, for each $x\in X$, $\Expr(x)=x$ if $x\in X_p\cup X_k\cup X_r$,
otherwise  $x$ is an intermediate variable (i.e., $x\in X_i$) which must be uniquely defined by an assignment statement $x\leftarrow e$ (thanks to SSA form of $P$), and thus $\Expr(x)$ is defined as the expression obtained from $e$ by sequentially replacing all the occurrences of the intermediate variables in $e$ by their defining expressions in $P$.

A \emph{valuation} is a function $\eta:X_p\cup X_k\rightarrow \Bb$ that assigns a concrete value to each input variable in $X_p\cup X_k$. Let $\Theta$ denote the set of valuations. Two valuations $\eta_1$ and $\eta_2$ are \emph{$X_p$-equivalent}, denoted by $\eta_1\approx_{X_p}\eta_2$,
if $\eta_1(x)=\eta_2(x)$  for  $x\in X_p$, i.e., $\eta_1$ and $\eta_2$ must agree on their values on public input variables.
We denote by $\Theta^2_{=X_p}\subseteq\Theta\times\Theta$ the set of pairs of $X_p$-equivalent valuations.
For each variable $x\in X$, let $\Expr_\eta(x)$ denote the expression obtained from $\Expr(x)$ by
instantiating variables $y\in X_p\cup X_k$ with concrete values $\eta(y)$.

Given a valuation $\eta\in \Theta$, for each variable $x\in X$, the computation $\Expr(x)$ of $x$
under the valuation $\eta$ can be interpreted as the probability distribution, denoted by $\sem{x}_{\eta}$, over
the domain $\Bb$ with respect to the uniform distribution of the random variables $\Expr_\eta(x)$ may contain.
Concretely, for each $c\in \Bb$, $\sem{x}_{\eta}(c)$ is the probability
that $\Expr_\eta(x)$ evaluates to $c$. Accordingly, let $\sem{P}_\eta$ denote the joint distribution over $\Bb^{|X|}$
such that for every $C=(c_x\in\Bb)_{x\in X}$, $\sem{P}_\eta(V)$
is the probability that $(\Expr_\eta(x))_{x\in X}$ evaluates to $(c_x\in\Bb)_{x\in X}$. 
Therefore, the semantics of $P$ is interpreted as a function
$\sem{P}:\Theta\rightarrow \Dd(X)$
which takes a valuation $\eta\in \Theta$ as the input and  returns a joint  distribution $\sem{P}_\eta$.
For each subset $Y\subseteq X$ of variables, let $\sem{P}^{Y}$ denote the function $\sem{P}^{Y}:\Theta\rightarrow \Dd(Y)$
that returns, for each $\eta\in \Theta$, the marginal distribution of $Y$ under $\sem{P}_\eta$.

\subsection{Masking}\label{sec:masking}
Masking is a randomization technique used to break the statistical dependence of the private input variables and
observable variables of the adversary~\cite{ChariRR02,ISW03}.
Fix a sound security parameter $d$, an \emph{order-$d$ masking} typically
makes use of a secret-sharing scheme to logically split the private data into $(d+1)$ shares such that any $d'\leq d$ shares are \emph{statistically independent} on the value of the private input.
The computation of shares for each private input is usually called \emph{presharing}.
A masking transformation aims at transforming an unmasked program $P$ that directly operates on the private inputs
into a masked program $P'$ that operates on their shares.
Finally, the desired data are recovered via de-masking of the outputted shares of the masked program $P'$.

For example, using the order-$d$ Boolean masking~\cite{ISW03}, the
$(d+1)$ shares of a key $k$ are $(r_1,\cdots, r_{d+1})$, where the shares $r_1,\dots,r_d$ are generated uniformly at random and $r_{d+1}$ is computed such that $r_{d+1}=k\oplus\bigoplus_{i=1}^d r_i$.
The value of $k$ can be recovered via performing exclusive-or ($\oplus$) operations on all
the shares, i.e., $\bigoplus_{i=1}^{d+1} r_i$.

Besides Boolean masking schemes, there are arithmetic masking
schemes such as additive (e.g., $(k + r) \bmod n$) and multiplicative masking schemes (e.g., $(k\times r) \bmod n$)
for protecting arithmetic operations.
Secure conversion algorithms between them (e.g.,~\cite{Goubin01,CGV14,Coron17,BCZ18,HT19}) have been proposed
for masking cryptographic algorithms that embrace both
Boolean and arithmetic operations (such as IDEA~\cite{LM90} and RC6~\cite{CRRY99}). 

When increasing the masking order $d$, the attack cost usually increases exponentially, but the performance of the masked programs degrades polynomially~\cite{GrossM18}. Therefore, the masking order is chosen by a trade-off between attack cost and performance.

\subsection{Leakage Model and Security Notions} \label{sec:threatmodel}
To formally verify the security of masked programs, it is necessary to define
the set of observable variables to the adversary and a leakage model that formally captures the
leaked information from the  set of observable variables.

\medskip
\noindent{\bf Observable variables.}
In the context of side-channel attacks, the adversary is assumed to be able to observe the public ($X_p$), random ($X_r$), and intermediate ($X_i$) variables via side-channel information, but is not able to observe the private input variables $X_k$ or the intermediate variables
of presharing. Indeed, presharing of each private input variable
is performed outside of the program
and is included only for the verification purpose~\cite{BBDFGS15,RP10}.
Therefore, for each program $P$, it is easy to automatically identify the set of observable variables $X_o\subseteq X_p\cup X_i\cup X_r$
which is assumed to be observed by the adversary.
Each subset $\Oo\subseteq X_o$ of observable variables is called an \emph{observable set}.

\medskip
\noindent{\bf  Leakage model.} In this article, we adopt the standard $d$-threshold probing model proposed by Ishai, Sahai and Wagner~\cite{ISW03}, usually referred to as \emph{ISW $d$-threshold probing model} (ISW model for short),
where the adversary may have access to the values of at most $d$ observable variables of his/her choice (e.g., via side-channel information). The more variables  an adversary observes, the higher the attack cost is.

\medskip
\noindent{\bf Uniform and statistical independence.}
Given a program $P$ and an observable set $\Oo\subseteq X_o$,
\begin{itemize}
  \item $P$ is \emph{uniform} w.r.t. $\Oo$, denoted by \emph{$\Oo$-uniform}, iff
for all valuations $\eta \in \Theta$: $\sem{P}^\Oo_\eta$ is a uniform joint distribution;
  \item $P$ is \emph{statistically independent} of $X_k$ with respect to $\Oo$, denoted by $\Oo$-$\SI$, iff
for every $(\eta_1,\eta_2)\in \Theta^2_{=X_p}$ (i.e., $\eta_1$ and $\eta_2$ agreeing on their values on public input variables):
$\sem{P}^\Oo_{\eta_1}=\sem{P}^\Oo_{\eta_2}.$
\end{itemize}
We say $P$ is $\Oo$-leaky if it is \emph{not} $\Oo$-$\SI$.

\medskip
According to the above definitions, it is straightforward to verify the following proposition.
\begin{proposition}\label{prop:uniform2SI}
Given an observable set $\Oo$ of a program $P$,
\begin{enumerate}
  \item if $P$ is $\Oo$-uniform, then $P$ is $\Oo$-$\SI$ and $\Oo'$-uniform for all $\Oo'\subseteq\Oo$;
  \item if $P$ is $\Oo$-$\SI$, then $P$ is $\Oo'$-$\SI$ for all $\Oo'\subseteq\Oo$.
\end{enumerate}
\end{proposition}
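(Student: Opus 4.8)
The plan is to prove both claims directly from the definitions of $\Oo$-uniform and $\Oo$-$\SI$, using the elementary fact that marginal distributions of a joint distribution are obtained by summing out the remaining coordinates, and that marginals of a uniform joint distribution over a product domain are again uniform.

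First I would establish a lemma about marginalization: for any $\Oo'\subseteq\Oo\subseteq X_o$ and any valuation $\eta\in\Theta$, the distribution $\sem{P}^{\Oo'}_\eta$ is the marginal of $\sem{P}^{\Oo}_\eta$ obtained by projecting $\Bb^{|\Oo|}\to\Bb^{|\Oo'|}$; this is immediate from the definition of $\sem{P}^Y$ as the marginal of $\sem{P}_\eta$ on $Y$, together with transitivity of marginalization (marginalizing $\sem{P}_\eta$ to $\Oo$ and then to $\Oo'$ gives the same distribution as marginalizing directly to $\Oo'$). With this lemma in hand, part (2) is short: if $P$ is $\Oo$-$\SI$, then for every $(\eta_1,\eta_2)\in\Theta^2_{=X_p}$ we have $\sem{P}^\Oo_{\eta_1}=\sem{P}^\Oo_{\eta_2}$; applying the same projection map to both sides yields $\sem{P}^{\Oo'}_{\eta_1}=\sem{P}^{\Oo'}_{\eta_2}$, which is exactly $\Oo'$-$\SI$.

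For part (1), suppose $P$ is $\Oo$-uniform. For the $\Oo'$-uniformity claim: for any $\eta$, $\sem{P}^\Oo_\eta$ is the uniform distribution on $\Bb^{|\Oo|}$, and the marginal of a uniform distribution on a finite product set onto a subset of coordinates is the uniform distribution on the smaller product set; hence $\sem{P}^{\Oo'}_\eta$ is uniform for all $\eta$, so $P$ is $\Oo'$-uniform. For the $\Oo$-$\SI$ claim: if $P$ is $\Oo$-uniform then for all $\eta_1,\eta_2$ (in particular all $X_p$-equivalent pairs) both $\sem{P}^\Oo_{\eta_1}$ and $\sem{P}^\Oo_{\eta_2}$ equal the same fixed uniform distribution on $\Bb^{|\Oo|}$, hence they are equal; thus $P$ is $\Oo$-$\SI$.

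I do not expect any serious obstacle here — the statement is essentially a bookkeeping exercise. The only point requiring a little care is making the marginalization lemma precise: one must be explicit that $\sem{P}^\Oo_\eta$ and $\sem{P}^{\Oo'}_\eta$ are both defined as marginals of the \emph{same} joint distribution $\sem{P}_\eta$ over $\Bb^{|X|}$, so that the consistency of iterated marginalization applies verbatim; and that "uniform joint distribution" over $\Bb^{|\Oo|}$ really does pushforward to the uniform distribution over $\Bb^{|\Oo'|}$, which holds because $|\Bb^{|\Oo|}| = |\Bb^{|\Oo'|}|\cdot|\Bb^{|\Oo\setminus\Oo'|}|$ and each fibre of the projection has the same size.
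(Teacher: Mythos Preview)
Your proposal is correct and aligns with the paper's treatment: the paper does not give an explicit proof but simply states that the proposition is ``straightforward to verify'' from the definitions, and your argument via marginalization is precisely the direct verification the authors have in mind.
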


\begin{definition}[Security under the ISW $d$-threshold probing model~\cite{ISW03}]
A program $P$ is \emph{order-$d$ secure} if  $P$ is $\Oo$-$\SI$ or $\Oo$-uniform for every observable set $\Oo\subseteq X_o$ with $|\Oo| = d$.
\end{definition}
Intuitively, if $P$ is $\Oo$-$\SI$ or $\Oo$-uniform, then the distribution of the variables in $\Oo$ (hence power consumptions based on the variables in $\Oo$ in the ISW $d$-threshold probing model) does not rely on private data, and thus the adversary cannot deduce any information 
by observing variables in $\Oo$.

In literature, the ISW $d$-threshold probing model is also called order-$d$ perfect masking~\cite{EWS14b} or $d$-non-interference ($d$-NI)~\cite{BBDFGS15}.
There are other leakage models such as noise leakage model~\cite{PR13}, bounded moment model~\cite{BDFGSS17}, ISW model with transitions~\cite{CGPRRV12}
and with glitches~\cite{BGIKMW18,MPO05} and strong $d$-non-interference ($d$-SNI)~\cite{BBDFGSZ16,FGPPS18,BBFG18}.
It is known that all these models (except for $d$-SNI and $d$-NI  introduced in~\cite{BBDFGSZ16} and the extensions thereof)
can be reduced to the ISW $d$-threshold probing model~\cite{DDF14,BDFGSS17,BBDFGS15,BBFG18}
possibly at the cost of introducing higher orders when chosen plaintext attacks are adopted, namely, the adversary can use any plaintext during attack.
The $d$-SNI and $d$-NI models defined in~\cite{BBDFGSZ16} are  however stronger than the ISW $d$-threshold probing model. 
Namely, not all secure masked programs under the ISW $d$-threshold probing model are safe under $d$-SNI/$d$-NI, so cannot pass verification under this notion \cite{BDFGSS17}. In this work, we adopt the ISW $d$-threshold probing model, which is more common in side-channel analysis~\cite{EWS14b,ZGSW18,BGIKMW18,GXZSC19}.

Remark that the $d$-NI notion defined in~\cite{BBDFGSZ16,BBFG18} is strictly stronger than the one defined in~\cite{BBDFGS15}, although they bear the same name. Namely, in~\cite{BBDFGSZ16,BBFG18}, the $d$-NI notion requires that	the number of shares of each private input variable that can be accessed by the adversary is \emph{strictly} less than	$d+1$.

\medskip
\noindent{\bf Research objective.}
Our goal is to develop automated verification methods to determine
whether a given masked arithmetic program is order-$d$ secure under the
ISW $d$-threshold probing model.

\section{Motivating Example and  Overview of Approaches}\label{sec:overview}
In this section, we  present a motivating example and an overview of our approach.

\subsection{Motivating Example}
\label{sec:motivatingexample}

\begin{figure}[t]
\begin{lstlisting}[multicols=1]
BooleanToArithmetic$(k,r,r')${
  $x'\leftarrow k\oplus r$; //presharing
  $y_0 \leftarrow x' \oplus r'$;
  $y_1 \leftarrow y_0 - r'$;
  $y_2 \leftarrow y_1 \oplus x'$;
  $y_3 \leftarrow r' \oplus r$;
  $y_4 \leftarrow y_3 \oplus x'$;
  $y_5 \leftarrow y_4 - y_3$;
  $A \leftarrow y_5 \oplus y_2$;
  return $A$;
}
\end{lstlisting}
\caption{Goubin's Boolean to arithmetic mask conversion algorithm~\cite{Goubin01}.}
\label{fig:Runningexample}
\end{figure}

Figure~\ref{fig:Runningexample} presents an example which is an implementation of the Boolean to arithmetic mask conversion algorithm of Goubin~\cite{Goubin01}. The program assumes that the inputs are the private key $k$ and two random variables $r,r'$.
Line 2 is presharing which computes two shares $(x',r)$ of the private key $k$ via Boolean masking.
(Remark that Line 2 should be performed outside of the function \emph{BooleanToArithmetic} and is introduced for verification purpose only.
The actual implementation in Goubin~\cite{Goubin01} takes two shares $(x',r)$ as input and assigns $r'$ by a uniformly sampled random value.)
The function \emph{BooleanToArithmetic} returns two shares $(A,r)$ of the arithmetic masking of the private key $k$
such that $A+r=k$, but without directly recovers the key $k$ by $x'\oplus r$.

As setup for further use, we have: $X_p=\emptyset$, $X_k=\{k\}$, $X_r=\{r,r'\}$,
$X_i=\{x',A,y_1,\cdots,y_5\}$ and
$X_o=\{x',A,y_0,\cdots,y_5,r,r'\}$. The computations of variables in $X_i$ are:
\[\begin{array}{l}
\Expr(x')=k\oplus r; \\
\Expr(y_0)=(k\oplus r) \oplus r'; \\
\Expr(y_1)=((k\oplus r) \oplus r')-r'; \\
\Expr(y_2)=(((k\oplus r) \oplus r')-r') \oplus (k\oplus r);  \\
\Expr(y_3)=r' \oplus r;  \\
 \Expr(y_4)=(r'\oplus r)\oplus (k\oplus r);   \\
\Expr(y_5)=((r'\oplus r)\oplus (k\oplus r))- (r' \oplus r);  \\
\Expr(A)=\big(((r'\oplus r)\oplus (k\oplus r))- (r' \oplus r)\big) \oplus
    \big((((k\oplus r) \oplus r')-r') \oplus (k\oplus r)\big).
\end{array}\]

For each observable variable $z\in X_o$, the program is $\{z\}$-uniform (note that $\Expr(A)$ is equivalent to $k-r$), and thus
it is first-order secure. However, this program is \emph{not} second-order secure, e.g.,
$y_0\oplus y_3\equiv x'\oplus r\equiv k$ allowing to extract private key  $k$ by observing $\{y_0,y_3\}$.

\subsection{Overview of Approach}
The overview of our approach \tool\ is depicted in Figure~\ref{fig:overview},
consisting of four main components: pre-processor, type system, pattern matching based method,
and model-counting based method. Given a masked program $P$ and the security order $d$, \tool checks whether
the masked program $P$ is order-$d$ secure or not. If $P$ is not order-$d$ secure, then
\tool outputs the leaks, i.e., all the size-$d$ observable sets $\Oo$ such that $P$ is $\Oo$-leaky.

\begin{figure}[t]
  \centering
  \includegraphics[width=1\textwidth]{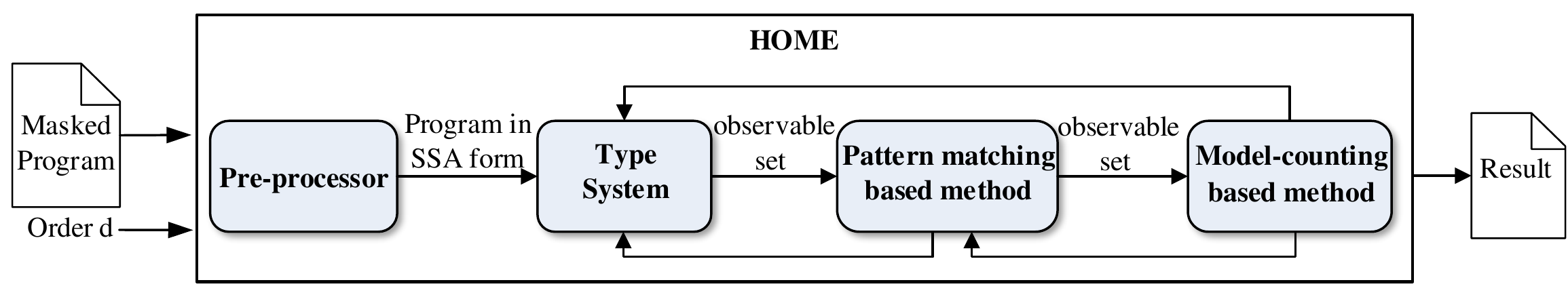}\\
  \caption{Overview of our approach \tool.}\label{fig:overview}
\end{figure}

Given a masked program $P$ and the order $d$, the pre-processor unfolds the static loops (i.e., loops with a predetermined bound of iterations) and inlines the procedure calls,
and then transforms the program into the SSA form.
The type system is used to check whether each size-$d$ observable set $\Oo$ is order-$d$ secure
by deriving valid type judgements. If we can deduce that the observable set $\Oo$ is either order-$d$ secure  or certainly not according to the distribution type, 
then the result is conclusive.

However, as usual, the type system is incomplete, namely,
it is possible that the distribution type cannot be inferred in which case  
we first apply the pattern matching based method. This method iteratively searches an ``isomorphism" between the computation expressions of the
variables in $\Oo$ and the variables in $\Oo'$, where $\Oo'$ is another size-$d$ observable set whose distribution type is already known.
If such an isomorphism exists, we can conclude that these two observable sets $\Oo$ and $\Oo'$ have the
same distribution type, effectively resolving the observable set $\Oo$. 
%
The result of the observable set $\Oo$ will be  fed back to the type system which can be used to gradually improve the accuracy of the type inference.

When the pattern matching based method fails to resolve the observable set $\Oo$,
we will apply the (normally expensive) model-counting based method, which is able to completely decide whether
the observable set $\Oo$ is order-$d$ secure. Finally,
the observable set $\Oo$ is cached for further invocation of pattern matching. As before, the result
of the observable set $\Oo$ will be fed back to the type system.

This procedure gives a sound and complete approach for verification of higher-order security. In the next two sections, we will elucidate the details of the type system, type inference algorithm, model-counting and pattern matching based methods.

\section{Type System} \label{sec:typesystem}
In this section, we first present a type system to infer the \emph{distribution type} of an observable set,
then propose three sound transformations to facilitate type inference, and finally
present the type inference algorithm based on the type system and the  sound transformations.
%
%
 \subsection{Dominant Variables}
We first introduce the notion of dominant variables. 
\begin{definition}
	\label{def:dom}
A random variable $r$ is called a \emph{dominant variable} of an expression $e$
if the following two conditions hold:
\begin{enumerate}
  \item  $r$ (syntactically) occurs in the expression $e$ exactly once; and
 \item for each operator $\circ$ on the path between the leaf $r$ and the root in the abstract syntax tree of the expression  $e$,
\begin{itemize}
    \item if $\circ=\odot$, then one of its children is a non-zero constant; or
    \item$\circ\in\{\oplus,\neg,+,-\}$; or
    \item $\circ$ is  a (univariate) bijective function.
\end{itemize}
\end{enumerate}
\end{definition}

We denote by $\Var(e)$ the set of variables appearing in an expression $e$,
by $\RVar(e)$ the set $\Var(e)\cap X_r$,
and by $\Dom(e)$ the set of all dominant (random) variables of $e$.
All of these sets can be computed in polynomial time in the size of $e$.
Furthermore, note that a particularly useful example of bijective functions is ${\tt Sbox}$ which is ubiquitous in cryptographic programs.

If $r_x$ is a dominant variable of the expression $\Expr(x)$ such that $r_x\not\in \bigcup_{x'\in\Oo.x'\neq x}\RVar(\Expr(x'))$,
then $\Expr(x)$ can be seen as a fresh random variable when evaluating $\sem{P}^{\Oo}$.
Therefore, if each expression $\Expr(x)$ for $x\in\Oo$ has such dominant variables,
we can deduce that $P$ is $\Oo$-uniform.

\begin{proposition}\label{prop:dom}
Given an observable set $\Oo\subseteq X_o$, if for every $x\in\Oo$
there exists a dominant variable $r_x\in \Dom(\Expr(x))$ such that $r_x\not\in \bigcup_{x'\in\Oo.x'\neq x}\RVar(\Expr(x'))$,
then $P$ is $\Oo$-uniform.
\end{proposition}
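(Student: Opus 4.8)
The plan is to prove the statement by induction on the size of $\Oo$, peeling off one variable at a time and using the dominant-variable hypothesis to argue that its computation acts as a fresh uniform random source conditioned on the rest.

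First I would fix, for each $x\in\Oo$, a dominant variable $r_x\in\Dom(\Expr(x))$ with $r_x\notin\bigcup_{x'\in\Oo,\,x'\neq x}\RVar(\Expr(x'))$; note that the map $x\mapsto r_x$ is injective, since $r_x$ occurs in $\Expr(x)$ but in no other $\Expr(x')$ for $x'\in\Oo$. Pick any $x_0\in\Oo$ and write $\Oo'=\Oo\setminus\{x_0\}$. The key structural lemma I would establish is: if $r$ is a dominant variable of an expression $e$, then for any fixed valuation of all other variables (and all other random variables) appearing in $e$, the induced map $r\mapsto$ (value of $e$) is a bijection on $\Bb$. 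This follows by induction on the path from the leaf $r$ to the root in the AST: each operator $\circ$ on that path is, by Definition~\ref{def:dom}, either $\oplus$, $\neg$, $+$, $-$ (each of which, with the other argument fixed, is a bijection on $\Bb=\{0,\dots,2^\kappa-1\}$ — recall arithmetic is mod $2^\kappa$), or $\odot$ with the sibling a nonzero constant (multiplication by a nonzero element of $\mathds{F}_{2^\kappa}$ is a bijection), or a univariate bijective function; composing bijections yields a bijection.

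Using this lemma, I would argue as follows. Fix an arbitrary valuation $\eta\in\Theta$. Condition on a concrete assignment of values to all random variables in $\bigcup_{x\in\Oo}\RVar(\Expr_\eta(x))$ \emph{except} $r_{x_0}$; call this partial assignment $\rho$. Since $r_{x_0}$ appears in no $\Expr_\eta(x')$ for $x'\in\Oo'$, the values of $(\Expr_\eta(x'))_{x'\in\Oo'}$ are already determined by $\rho$, while $\Expr_\eta(x_0)$, viewed as a function of the single still-free variable $r_{x_0}$, is a bijection on $\Bb$ by the lemma. Hence, conditioned on $\rho$, the tuple $(\Expr_\eta(x))_{x\in\Oo}$ is uniformly distributed over $\{v_\rho\}\times\Bb$ where $v_\rho$ is the fixed value of the $\Oo'$-coordinates — and by induction on $|\Oo'|$ (the inductive hypothesis being applied to the program/expressions with $\rho$'s randomness also fixed, so that $\Oo'$ still satisfies the dominant-variable condition with the same witnesses $r_{x'}$), the distribution of $(\Expr_\eta(x'))_{x'\in\Oo'}$ is uniform on $\Bb^{|\Oo'|}$. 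Averaging over $\rho$ (which is uniform and independent of $r_{x_0}$, since the $r$'s are independent uniform), the product structure gives that $\sem{P}^\Oo_\eta$ is the uniform distribution on $\Bb^{|\Oo|}$. As $\eta$ was arbitrary, $P$ is $\Oo$-uniform.

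The main obstacle I anticipate is making the inductive step fully rigorous, specifically the bookkeeping of \emph{which} random variables get conditioned and the verification that the remaining set $\Oo'$ still satisfies the hypothesis of the proposition after fixing $r_{x_0}$'s siblings. One has to be careful that the witnesses $r_{x'}$ for $x'\in\Oo'$ are untouched by $\rho$ — which holds because each $r_{x'}$ occurs only in $\Expr(x')$ among $\Oo$-expressions, hence is not among the randomness we condition on when isolating $x_0$ — and that $r_{x'}$ remains dominant in the residual expression, which is immediate since deleting/instantiating other variables does not change the operators on the root-to-$r_{x'}$ path. A secondary subtlety is the base case and the precise formalization of ``$\Expr(x)$ can be seen as a fresh random variable,'' which is exactly the bijection lemma above; once that lemma is stated cleanly, the rest is a standard independence-by-conditioning argument.
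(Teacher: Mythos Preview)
Your approach is essentially the paper's: both hinge on the same bijection lemma (the paper names it \emph{$i$-invertibility}) and prove it by the same induction on the AST path from $r$ to the root. The only difference is the finish: the paper replaces all $\Expr(x)$ by $r_x$ simultaneously and observes that the joint distribution of $\Oo$ coincides with that of the independent uniform tuple $(r_x)_{x\in\Oo}$, whereas you peel off one coordinate at a time by induction on $|\Oo|$.

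Your inductive finish is sound, but the bookkeeping is muddled in a way you should clean up. As you define it, $\rho$ fixes \emph{every} random variable except $r_{x_0}$, so it certainly fixes each $r_{x'}$ for $x'\in\Oo'$; your later claim that the $r_{x'}$ are ``not among the randomness we condition on when isolating $x_0$'' is therefore false, and the parenthetical about applying the inductive hypothesis ``with $\rho$'s randomness also fixed'' makes no sense (the $r_{x'}$ would no longer be random, hence not dominant). The fix is simple: apply the inductive hypothesis to $\Oo'$ in the \emph{original} program, where the dominant-variable condition is trivially inherited from $\Oo$. Then your conditioning argument gives $\Pr[\Oo=\vec c\,]=|\Bb|^{-1}\Pr[\Oo'=\vec c\,']$, and the induction closes.
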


\begin{proof} To prove this proposition, we first introduce the notion of $i$-invertibility.
An expression (i.e., function) $e(x_1,\cdots,x_n)$ is \emph{$i$-invertible} if, for any concrete
values $c_1,\cdots,c_{i-1},c_{i+1},\cdots, c_n\in \Bb$,
the expression
\begin{center}
$e(c_1/x_1,\cdots,c_{i-1}/x_{i-1},x_i,c_{i+1}/x_{i+1},\cdots, c_n/x_{n})$
\end{center}
obtained by instantiating all the variables $(x_j)_{j\neq i}$
with concrete values $(c_j)_{j\neq i}$ is bijective.
It is easy to see that
$e(c_1/x_1,\cdots,c_{i-1}/x_{i-1},x_i,c_{i+1}/x_{i+1},\cdots, c_n/x_{n})$ and $x_i$ have same distribution.
Thus, if $x_i$ is a random variable, then
$e(c_1/x_1,\cdots,c_{i-1}/x_{i-1},x_i,c_{i+1}/x_{i+1},\cdots, c_n/x_{n})$ must be a uniform distribution.

The following claim reveals the relation between dominated expressions and $i$-invertibility.

\medskip
\fbox{\parbox[c]{0.95\textwidth}{
{\bf Claim.} 
Given an expression $e(x_1,\cdots,x_n)$ over variables $\{x_1,\cdots,x_n\}$,
for every $1\leq i\leq n$, if $x_i$ is a dominant variable of $e(x_1,\cdots,x_n)$,
then $e(x_1,\cdots,x_n)$ is $i$-invertible.
}}

\medskip
We prove that $e(x_1,\cdots,x_n)$ is $i$-invertible by induction on the length $\ell$
of the path between the leaf $r$ and the root in the abstract syntax tree of $e$.

\medskip
\noindent
{\bf Base case $\ell=0$}.
The expression $e(x_1,\cdots,x_n)$ must be $x_r$ which is a bijective function. The result immediately follows.

\medskip
\noindent
{\bf Inductive step $\ell>0$}.
Let $\circ$ be the operator at the root of the syntax tree of $e(x_1,\cdots,x_n)$,
then $e(x_1,\cdots,x_n)$ is in the form of
\begin{enumerate}
  \item $\neg e_1(x_1,\cdots,x_n)$, or
\item $\circ(e_1(x_1,\cdots,x_n))$ where $\circ$ is a (univariate) bijective function, or
  \item $e_1(x_1,\cdots,x_n)\circ e_2(x_1,\cdots,x_{i-1},x_{i+1},\cdots,x_n)$
such that $x_i$ is a dominant variable of $e_1(x_1,\cdots,x_n)$, where $\circ \in\{\odot,\oplus,+,-\}$. (Note that $x_i$ does not appear in $e_2(x_1,\cdots,x_{i-1},x_{i+1},\cdots,x_n)$.)
\end{enumerate}
By the induction hypothesis, $e_1(x_1,\cdots,x_n)$ is $i$-invertible. By the definition of  $i$-invertibility,
for any concrete values $c_1,\cdots,c_{i-1},c_{i+1},\cdots, c_n\in \Bb$, $e_1(c_1/x_1,\cdots,c_{i-1}/x_{i-1},x_i,c_{i+1}/x_{i+1},\cdots,c_n/x_n)$
is bijective. Then, the result immediately follows if $e(x_1,\cdots,x_n)$ is $\neg e_1(x_1,\cdots,x_n)$ or $\circ(e_1(x_1,\cdots,x_n))$, i.e., Item (1) and Item (2).
It remains to consider Item (3).

\begin{itemize}
  \item If $\circ=\odot$, then $e_2(x_1,\cdots,x_{i-1},x_{i+1},\cdots,x_n)$ is a non-zero constant.
   Since the multiplicative group of the non-zero elements in $\Bb$ is cyclic and $0\odot e_2(x_1,\cdots,x_{i-1},x_{i+1},\cdots,x_n)=0,$
then $e_1(c_1/x_1,\cdots,c_{i-1}/x_{i-1},x_i,c_{i+1}/x_{i+1},\cdots,c_n/x_n)\odot e_2(c_1/x_1,\cdots,c_{i-1}/x_{i-1},c_{i+1}/x_{i+1},\cdots,c_n/x_n)$
is also bijective. Hence, the result follows.

  \item If $\circ\in\{\oplus,+,-\}$, then  $e_2(c_1/x_1,\cdots,c_{i-1}/x_{i-1},c_{i+1}/x_{i+1},\cdots,c_n/x_n)$ is a constant.
For any constant $c\in\Bb$,
$e_1(c_1/x_1,\cdots,c_{i-1}/x_{i-1},x_i,c_{i+1}/x_{i+1},\cdots,c_n/x_n)\circ c$
is still bijective (note that $+$ and $-$ are operators over the ring $\Bb$).
Hence, the result follows.
\end{itemize}

\medskip
Now, we prove the proposition.

Suppose for every $x\in\Oo$,
there exists $r_x\in \Dom(\Expr(x))$ such that $r_x\not\in \bigcup_{x'\in\Oo.x'\neq x}\RVar(\Expr(x'))$,
 let $\sem{P[r_x/x]_{x\in\Oo}}^{\Oo}_\eta$ denote the distribution of $\sem{P}^\Oo_\eta$ in which
$\Expr(x)$ is replaced by $r_x$ for all $x\in\Oo$, then  for all valuations $\eta \in \Theta$, $\sem{P}^\Oo_\eta=\sem{P[r_x/x]_{x\in\Oo}}^{\Oo}_\eta$
holds.

By applying the above claim,
we get that $\sem{P[r_x/x]_{x\in\Oo}}^{\Oo}_\eta$ is a uniform distribution.
Therefore, the result immediately follows.
\end{proof}

\begin{example}\label{eample:dom}
Let us consider the motivating example in Section~\ref{sec:motivatingexample}.
$\Expr(x')$ is dominated by the random variable $r$.
$\Expr(y_0)$ and $\Expr(y_3)$ both have two dominant variables $r$ and $r'$.
$\Expr(y_1)$ only has the dominant variable $r$, as $r'$ occurs twice. Similarly, $\Expr(y_4)$
only has the dominant variable $r'$, as $r$ occurs twice. Thus, for every observable set $\Oo \subseteq \{x',y_0,t_1,y_3,y_4\}$ with $|\Oo| =1$,
we can deduce that the program is
$\Oo$-uniform.
$\Expr(y_2)$, $\Expr(y_5)$ and $\Expr(A)$ have no dominant variables, as both $r$ and $r'$ occur more than once.

For the observable set $\{x',y_3\}$, although the dominant variable $r'$ of $\Expr(y_3)$ does not appear in $\Expr(x')$, the dominant variable $r$ of $\Expr(x')$ appears in $\Expr(y_3)$, thus we cannot deduce that the program is
$\{x',y_3\}$-uniform. Indeed, for any observable set $\Oo\subseteq \{x',A,y_1,\cdots,y_5,r,r'\}$ with $|\Oo|\geq 2$,
we cannot deduce that the program is
$\Oo$-uniform.
\end{example}



\subsection{Types and Type Inference Rules}
In this subsection, we introduce distribution types and their inference rules for proving higher-order security.

\begin{definition}
Let $\type$ be the set of (distribution)
types $\{\rud,\sid,\sdd\}$,
\begin{itemize}
\item $\rud$ stands for uniform distribution, i.e., $\Oo:\rud$ means that the program is $\Oo$-uniform;
\item $\sid$ stands for secret independent distribution, i.e.,  $\Oo:\sid$ means that the program is $\Oo$-$\SI$;
\item $\sdd$ stands for leak, i.e.,  $\Oo:\sdd$ means that the program is $\Oo$-leaky, namely, not $\Oo$-$\SI$.
\end{itemize}
where $\Oo$ is an observable set.
\end{definition}

The distribution type $\rud$ is a subtype of $\sid$, i.e., $\rud$ implies $\sid$,
but $\sid$ does not imply $\rud$.
Although, both $\sid$ and $\rud$ can be used to prove that the program is statistically independent of the
secret for an observable set $\Oo$, i.e., no leak, $\rud$ is more desired because the observable set
$\Oo$ not only is statistically independent of the secret (same as in $\sid$),
but also can be used like a set of random variables.
Therefore, we prefer $\rud$ over $\sid$ and want to infer as many $\rud$ as possible.

Type judgements  are in the form of $\vdash \Oo:\tau$
where $\Oo$ is an observable set, $\tau\in\type$ is
the type of $\Oo$. Note that we omitted the context of the type judgement for simplifying presentation.
The type judgement $\vdash \Oo:\tau$ is valid iff
the distribution of the values of variables from $\Oo$ satisfies the property specified by $\tau$ in the program $P$.

\begin{figure*}
	\centering

	\scalebox{.85}{	\begin{tabular}{rrr}
		$\inference{\vdash x_1\star x_2:\tau & x\leftarrow x_2\star x_1} {\vdash x:\tau }[({\sc Com})]$
		&  $\inference{\vdash x':\tau & x\leftarrow \neg x'} {\vdash x:\tau}[({\sc Ide}$_1$)]$ \\ \\
		$\inference{\vdash x':\sid & x\leftarrow  x' \bullet x' }{\vdash x:\sid}[({\sc Ide}$_2$)]$
		& $\inference{x\leftarrow x' \diamond x'}{\vdash x:\sid}[({\sc Ide}$_3$)]$ \\ \\
		$\inference{\vdash x':\sdd & x\leftarrow x' \bowtie x'}{\vdash x:\sdd}[({\sc Ide}$_4$)]$
		&  $\inference{\vdash x_1:\rud & \vdash x_2:\rud & x\leftarrow x_1\circ x_2 \\ \Dom(\Expr(x_1))\setminus\RVar(\Expr(x_2))\neq \emptyset} {\vdash x:\sid}[({\sc Sid}$_{4}$)]$  \\ \\ 
		$\inference{\vdash x_1:\sid &  \vdash x_2:\sid & x\leftarrow x_1\bullet x_2  \\ \RVar(\Expr(e_1))\cap\RVar(\Expr(e_2))= \emptyset} {\vdash x:\sid}[({\sc Sid}$_{5}$)]$
		&  $\inference{ \vdash x_1:\sdd & \vdash x_2:\rud & x\leftarrow x_1\circ x_2\\ \Dom(\Expr(x_2))\setminus\RVar(\Expr(x_1)) \neq \emptyset}  {\vdash x:\sdd }[({\sc Sdd})]$
	\end{tabular}}
	\caption{Type inference rules for first-order security, where  $\star\in\Op, \ \circ\in\{\wedge,\vee,\odot,\times\}$, $\bullet\in\Op^\ast$, $\bowtie\in\{\wedge,\vee\}$ and $\diamond\in \{\oplus,-\}$.}
	\label{tab:firstordersemrules}
\end{figure*}

Figure~\ref{tab:firstordersemrules} presents type inference rules for the \emph{first-order} security.
We denote by $\Op^\ast$ the set $\Op\cup\{\ll,\gg\}$.
Rule ({\sc Com}) captures the commutative law of operators $\star\in\Op$. 
Rules ({\sc Ide}$_i$) for $i=1,2,3,4$ are straightforward.
Rule ({\sc Sid}$_{4}$) states that
$x$ has type $\sid$ if $x\leftarrow x_1\circ x_2$ for $\circ\in\{\wedge,\vee,\odot,\times\}$, both $x_1$ and $x_2$ have type $\rud$, and
$\Expr(x_1)$ has a dominant variable $r$ which is not used by $\Expr(x_2)$.
Indeed, $\Expr(x)$ can be seen as $r\circ \Expr(x_2)$.
Rule ({\sc Sid}$_{5}$) states that expression
$x$ has type $\sid$ if $x\leftarrow x_1\bullet x_2$ for $\bullet\in\Op^\ast$,
both $x_1$ and $x_2$ have type $\sid$ (as well as its subtype $\rud$), and
the sets of random variables used by $\Expr(x_1)$ and $\Expr(x_2)$ are disjoint.
Indeed, for each valuation $\eta\in\Theta$, the distributions $\sem{x_1}_{\eta}$
and $\sem{x_2}_{\eta}$ are independent.
Rule ({\sc Sdd}) states that the variable $x$ has type $\sdd$ if $x\leftarrow x_1\circ x_2$ for $\circ\in\{\wedge,\vee,\odot,\times\}$,
$x_1$ has type $\sdd$, $x_2$ has type $\rud$, and
$\Expr(x_2)$ has a dominant variable $r$ which is not used by $\Expr(x_1)$.
Intuitively, $\Expr(x)$ can be safely seen as $\Expr(x_1)\circ r$.

Note that the type inference rules for the first-order security 
are similar to those from~\cite{GXZSC19}, which are reproduced here for completeness.
%
%
%
The new rules for the higher-order security are given in Figure~\ref{tab:semrules}.
We briefly explain these rules below.

%
%
%
Rule ({\sc No-Key}) states that if $\Oo$ is an observable set whose values are independent of private variables,
then $\Oo$ has type $\sid$.
Rule ({\sc Sid}$_1$) states that if $\Oo_1$ has type $\sid$ and the computations $\Expr(x)$ of variables $x\in\Oo_2$ only involve
public variables, then we can deduce that $\Oo_1\cup\Oo_2$ has type $\sid$.
Rule ({\sc Sid}$_2$) states that if $\Oo$ has type $\sid$ and a variable $x$ is defined using
constants, public variables or variables in $\Oo$, then adding $x$ into $\Oo$ does not change the type.
Intuitively, as the value of $x$ is determined by its operands, for every $(\eta_1,\eta_2)\in \Theta^2_{=X_p}$,
$\sem{P}^\Oo_{\eta_1}=\sem{P}^\Oo_{\eta_2}$ if and only if $\sem{P}^{\Oo\cup\{x\}}_{\eta_1}=\sem{P}^{\Oo\cup\{x\}}_{\eta_2}$.
Rule ({\sc Sid}$_3$) deals with a $\sid$-typed observable set
$\Oo_1$ and a $\rud$-typed observable set $\Oo_2$ (cf. Proposition~\ref{prop:dom}). 
Assume that each computation $\Expr(x)$ for $x\in\Oo_2$ has a dominant variable $r_x$ which
is not used in any computation of variable in $\Oo_1\cup \Oo_2$ except $x$, then $\Oo_1\cup\Oo_2$
has type $\sid$. Intuitively, each computation $\Expr(x)$ for $x\in\Oo_2$ can be seen as the random variable $r_x$,
and $\Oo_1$ has type $\sid$, hence, the distributions $\sem{x}_\eta$ for all $x$ in $\Oo_1\cup\Oo_2$ and all valuations
$\eta\in\Theta$ are independent.
Similarly, rule ({\sc Rud}) deals with two $\rud$-typed observable sets
$\Oo_1$ and $\Oo_2$. Although the dominant variables of $\Expr(x)$
for $x\in \Oo_1$ may appear in the computations of variable in $\Oo_2$, by Proposition~\ref{prop:dom},
all the variables $x\in O_2$ can be seen as fresh random variables $r_x$ so that Proposition~\ref{prop:dom}
can be applied.

\begin{figure}[t]
	\centering
	\scalebox{.85}{
		\begin{tabular}{rr}
			$\inference{(\bigcup_{x\in\Oo}\Var({\Expr(x)}))\cap X_k=\emptyset} {\vdash \Oo:\sid}[({\sc No-Key})]$  &
			$\inference{\vdash \Oo_1:\sid &  \bigcup_{x\in\Oo_2}\Var({\Expr(x)})\subseteq X_p}{\vdash \Oo_1\cup\Oo_2:\sid}[({\sc Sid}$_1$)]$ \\\\
		& 	$\inference{\vdash \Oo:\sid & {\tt Operands}(x)\subseteq \Oo\cup X_p\cup \Bb  }{\vdash \Oo\cup\{x\}:\sid}[({\sc Sid}$_2$)]$ \\\\
			\multicolumn{2}{r}{$\inference{\vdash \Oo_1:\sid & \forall x\in \Oo_2. \exists r_x\in \Dom(\Expr(x))\setminus \bigcup_{y\in \Oo_1\cup\Oo_2\wedge y\neq x } \RVar(\Expr(y))}{\vdash \Oo_1\cup\Oo_2:\sid}[({\sc Sid}$_3$)]$} \\\\
			\multicolumn{2}{r}{$\inference{\vdash \Oo_1:\rud & \forall x\in \Oo_2. \exists r_x\in \Dom(\Expr(x))\setminus \bigcup_{y\in \Oo_1\cup\Oo_2\wedge y\neq x } \RVar(\Expr(y))}{\vdash \Oo_1\cup\Oo_2:\rud}[({\sc Rud})]$}
	\end{tabular}}
	\caption{Type inference rules for \textbf{higher-order} security.}
	\label{tab:semrules}
\end{figure}
%
%

\begin{theorem}[Soundness of the type system] \label{thm:type}
For every set observable $\Oo\subseteq X_o$,
\begin{enumerate}
  \item if $\vdash \Oo:\sid$ is valid, then $P$ is $\Oo$-$\SI$;
  \item if $\vdash \Oo:\rud$ is valid, then $P$ is $\Oo$-uniform;
  \item if $\vdash \Oo:\sdd$ is valid, then $P$ is $\Oo$-leaky.
\end{enumerate}
\end{theorem}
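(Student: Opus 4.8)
The plan is to prove the three claims simultaneously by rule induction on the derivation of the type judgement $\vdash \Oo : \tau$. The structure is standard: for each inference rule in Figures~\ref{tab:firstordersemrules} and \ref{tab:semrules}, I assume the semantic property corresponding to the type of each premise (the induction hypothesis) and derive the semantic property corresponding to the conclusion. So the real content is a rule-by-rule verification, organized into three groups — rules concluding $\sid$, rules concluding $\rud$, and rules concluding $\sdd$ — and I will reuse the two propositions already available: Proposition~\ref{prop:uniform2SI} (uniform implies SI, and both are downward closed) and especially Proposition~\ref{prop:dom} (existence of ``private'' dominant variables for each $x\in\Oo$ implies $\Oo$-uniform).

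First I would dispatch the easy cases. Rule ({\sc No-Key}) is immediate: if no computation $\Expr(x)$ for $x\in\Oo$ mentions a private variable, then $\sem{P}^\Oo_\eta$ depends only on $\eta$ restricted to $X_p$, so $\sem{P}^\Oo_{\eta_1}=\sem{P}^\Oo_{\eta_2}$ for any $X_p$-equivalent $\eta_1,\eta_2$. Rule ({\sc Rud}) is exactly Proposition~\ref{prop:dom} applied to $\Oo_1\cup\Oo_2$: the hypothesis ``$\vdash \Oo_1:\rud$ valid'' gives (by IH) that $P$ is $\Oo_1$-uniform, hence each $x\in\Oo_1$ has a private dominant variable as well, and combined with the explicit dominant-variable side condition for $\Oo_2$, every $x\in\Oo_1\cup\Oo_2$ satisfies the hypothesis of Proposition~\ref{prop:dom} — note that when all of $\Oo_1$'s dominant variables may collide with $\Oo_2$, one still invokes Proposition~\ref{prop:dom} directly rather than reusing $\Oo_1$-uniformity as a black box, exactly as the paragraph before the theorem hints. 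Rule ({\sc Sid}$_3$) is then the same argument one level up: replace each $x\in\Oo_2$ by its fresh dominant variable $r_x$ (this does not change $\sem{P}^{\Oo_1\cup\Oo_2}_\eta$ since these $r_x$ are unused elsewhere), and observe that $\sem{P}^{\Oo_1}_\eta$ being $X_k$-independent is preserved under adjoining genuinely fresh uniform variables. Rules ({\sc Sid}$_1$) and ({\sc Sid}$_2$) are bookkeeping: in ({\sc Sid}$_1$), the $\Oo_2$-part is a deterministic function of $X_p$, so it does not break $X_p$-equivalence; in ({\sc Sid}$_2$), $x$ is a deterministic function of $\Oo\cup X_p\cup\Bb$, so the marginal on $\Oo\cup\{x\}$ is the pushforward of the marginal on $\Oo$ along a fixed (per $X_p$-class) map, hence equality of $\Oo$-marginals transfers to equality of $(\Oo\cup\{x\})$-marginals.

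For the first-order rules (Figure~\ref{tab:firstordersemrules}) the same template applies, but these are about singletons $\{x\}$ and lean on the ``$i$-invertibility'' machinery from the proof of Proposition~\ref{prop:dom}. Rule ({\sc Com}) is trivial since $x_1\star x_2$ and $x_2\star x_1$ denote the same function for $\star\in\Op$. Rules ({\sc Ide}$_1$)–({\sc Ide}$_4$) follow because applying $\neg$, or an operation $x'\bullet x'$/$x'\diamond x'$/$x'\bowtie x'$ to a single source variable, is either a bijection on the distribution (for $\neg$) or, for $\diamond\in\{\oplus,-\}$, produces the constant $0$ whose distribution is trivially $X_k$-independent — and in each case the type of the source is inherited. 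Rules ({\sc Sid}$_4$) and ({\sc Sdd}) are the single-variable analogues of the dominant-variable arguments: in ({\sc Sid}$_4$), $\Expr(x)=r\circ\Expr(x_2)$ where $r$ is a fresh dominant variable, so $\Expr(x)$ is a function of a fresh uniform variable and (the $\rud$-typed, hence $X_k$-free-after-replacement) $\Expr(x_2)$; in ({\sc Sdd}) the key point is that $\circ\in\{\wedge,\vee,\odot,\times\}$ need not be invertible, so one argues that the joint distribution of $(\Expr(x_1),r)$ already depends on $X_k$ (from $\vdash x_1:\sdd$) and that $\Expr(x)=\Expr(x_1)\circ r$ together with $r$ being a fresh dominant variable does not destroy this dependence — here one has to be slightly careful and I would phrase it contrapositively via the structure of leaky observable sets. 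Rule ({\sc Sid}$_5$) is independence of product distributions: disjoint random support means $\sem{x_1}_\eta$ and $\sem{x_2}_\eta$ are independent, and $x=x_1\bullet x_2$ is a deterministic function of the pair, so $X_k$-independence of each factor's distribution gives $X_k$-independence of the image.

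The main obstacle I anticipate is rule ({\sc Sdd}) — proving that leakiness is \emph{preserved}, not just that uniformity/SI is preserved. Claims (1) and (2) are ``positive'' and compose cleanly; claim (3) requires showing a distribution genuinely does depend on the secret, which is not monotone under arbitrary operations. The argument must use the specific side condition that $\Expr(x_2)$ has a dominant variable $r$ absent from $\Expr(x_1)$, so that $\Expr(x)=\Expr(x_1)\circ r$ with $r$ fresh; then one shows that from a valuation pair $(\eta_1,\eta_2)$ witnessing $\{x_1\}$-leakiness one can recover a pair witnessing $\{x\}$-leakiness, using that conditioning on $r$ (which is uniform and independent of $\Expr(x_1)$) and that $\circ$ applied with $r$ in its dominant position is invertible, so the map $\Expr(x_1)\mapsto\Expr(x_1)\circ r$ is a bijection for each fixed $r$ — hence the distribution of $\Expr(x)$ determines that of $\Expr(x_1)$ and inherits its $X_k$-dependence. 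I would make sure the quantifiers over valuations are handled correctly here, since this is exactly the subtlety that a naive ``soundness by induction'' can get wrong.
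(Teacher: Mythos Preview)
Your overall strategy (rule induction) and your treatment of ({\sc No-Key}), ({\sc Sid}$_1$), ({\sc Sid}$_2$), ({\sc Sid}$_3$) and the easy first-order rules match the paper's proof. There are, however, two genuine gaps.

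\textbf{Rule ({\sc Rud}).} You claim that from ``$P$ is $\Oo_1$-uniform'' one gets ``each $x\in\Oo_1$ has a private dominant variable,'' and then apply Proposition~\ref{prop:dom} to $\Oo_1\cup\Oo_2$. This inference is false: $\Oo_1$-uniform is a purely semantic statement about a joint distribution, whereas having a dominant variable is a syntactic condition on the expressions $\Expr(x)$; the former does not imply the latter. (You seem to sense this in your caveat about ``collisions,'' but the proposed resolution --- invoke Proposition~\ref{prop:dom} directly --- fails for exactly this reason.) The paragraph before the theorem says ({\sc Rud}) is a \emph{generalization} of Proposition~\ref{prop:dom}, not an instance of it. The paper's argument goes the other way: it uses $\Oo_1$-uniform as a black box, replaces each $x\in\Oo_2$ by its fresh dominant $r_x$ (which does not change the joint distribution of $\Oo_1\cup\Oo_2$ since each $r_x$ is absent from all other computations), and then observes that $\Oo_1\cup\{r_x:x\in\Oo_2\}$ is uniform because $\Oo_1$ is uniform and the $r_x$ are fresh, jointly uniform, and independent of $\Oo_1$.

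\textbf{Rule ({\sc Sdd}).} Your final step asserts that ``the map $\Expr(x_1)\mapsto\Expr(x_1)\circ r$ is a bijection for each fixed $r$.'' For $\circ\in\{\wedge,\vee,\odot,\times\}$ this is simply false: $a\mapsto a\wedge r$ is a bijection only when $r$ is all-ones, $a\mapsto a\odot r$ only when $r\neq 0$, etc. So the argument ``distribution of $\Expr(x)$ determines that of $\Expr(x_1)$'' does not go through as stated; the relevant map is on the level of \emph{distributions} (convolution with uniform $r$), and one must argue that this linear map is injective, which is operator-specific and more delicate than a pointwise bijection. The paper does not attempt this here --- it simply cites the first-order development in~\cite{GXZSC19} for Item~(3) --- so you are trying to prove more than the paper does, but the argument you sketch is incorrect.
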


\begin{proof} We only show the soundness of rules for the higher-order security. First, Item (3) directly follows from 
	the first-order case~\cite{GXZSC19}. We now deal with the rules in Figure~\ref{tab:semrules}.

\begin{itemize}
%
 \item
{\bf Rule}  ({\sc No-Key}). Suppose $(\bigcup_{x\in\Oo}\Var({\Expr(x)}))\cap X_k=\emptyset$, then
the expression $\Expr(x)$ does not use any private variable for all $x\in\Oo$. This implies that $\sem{P}^\Oo_{\eta_1}=\sem{P}^\Oo_{\eta_2}$ for every $(\eta_1,\eta_2)\in \Theta^2_{=X_p}$ (note that $\eta_1$ and $\eta_2$ must agree on their values on public input variables).
 \item
{\bf Rule}  ({\sc Sid}$_1$).  Suppose $\vdash \Oo_1:\sid$, then
$\sem{P}^{\Oo_1}_{\eta_1}=\sem{P}^{\Oo_1}_{\eta_2}$ for every $(\eta_1,\eta_2)\in \Theta^2_{=X_p}$.
Consider $\Oo_2$ such that $\bigcup_{x\in\Oo_2}\Var({\Expr(x)})\subseteq X_p$, then for every $x\in\Oo_2$, $(\eta_1,\eta_2)\in \Theta^2_{=X_p}$ and assignment of random variables
$f:X_r\rightarrow \Bb$, the expression $\Expr(x)$ evaluates to same value under $(\eta_1,f)$ and $(\eta_2,f)$.
This implies that  $\sem{P}^{\Oo_1\cup O_2}_{\eta_1}=\sem{P}^{\Oo_1\cup O_2}_{\eta_2}$ for every $(\eta_1,\eta_2)\in \Theta^2_{=X_p}$.
 \item
{\bf Rule}  ({\sc Sid}$_2$).  Suppose $\vdash \Oo:\sid$, then
$\sem{P}^{\Oo}_{\eta_1}=\sem{P}^{\Oo}_{\eta_2}$ for every $(\eta_1,\eta_2)\in \Theta^2_{=X_p}$.
Suppose the observable set $\Oo$ is $\{x_1,\cdots,x_n\}$, then for each vector of concrete values $(c_1,\cdots,c_n)\in\Bb^n$,
$\sem{P}^{\Oo}_{\eta_1}(c_1,\cdots,c_n)=\sem{P}^{\Oo}_{\eta_2}(c_1,\cdots,c_n)$.
Consider $x_{n+1}$ such that ${\tt Operands}(x_{n+1})\subseteq \Oo\cup X_p\cup \Bb$,
let  $c_{n+1}$ denote the value of $x_{n+1}$ under the valuation $\eta_1$ and $x_1=c_1,\cdots,x_n=c_n$,
and $c_{n+1}'$ denote the value of $x_{n+1}$ under the valuation $\eta_2$ and $x_1=c_1,\cdots,x_n=c_n$.
Since $\eta_1$ and $\eta_2$ must agree on their values on public input variables, then $c_{n+1}=c_{n+1}'$.
Therefore, for every concrete value $c$,
$\sem{P}^{\Oo\cup\{x_{n+1}\}}_{\eta_1}(c_1,\cdots,c_n,c)=\sem{P}^{\Oo\cup\{x_{n+1}\}}_{\eta_2}(c_1,\cdots,c_n,c)=0$ if $c\neq c_{n+1}$,
$\sem{P}^{\Oo\cup\{x_{n+1}\}}_{\eta_1}(c_1,\cdots,c_n,c)=\sem{P}^{\Oo}_{\eta_1}(c_1,\cdots,c_n)$ and
$\sem{P}^{\Oo\cup\{x_{n+1}\}}_{\eta_2}(c_1,\cdots,c_n,c)=\sem{P}^{\Oo}_{\eta_2}(c_1,\cdots,c_n)$ if $c=c_{n+1}$.
Hence, the result immediately follows.
 \item
{\bf Rule}  ({\sc Sid}$_3$).  Suppose $\vdash \Oo_1:\sid$, then
$\sem{P}^{\Oo_1}_{\eta_1}=\sem{P}^{\Oo_1}_{\eta_2}$ for every $(\eta_1,\eta_2)\in \Theta^2_{=X_p}$.
Consider $\Oo_2$ such that $\forall x\in \Oo_2. \exists r_x\in \Dom(\Expr(x))\setminus \bigcup_{y\in \Oo_1\cup\Oo_2\wedge x\neq y} \RVar(\Expr(y))$,
i.e., for each $x\in\Oo_2$, there exists a dominant random variable $r_x\in  \Dom(\Expr(x))$ which is not used in other expressions in
$\Expr(y)$ for  $y\in \Oo_1\cup\Oo_2$ with $x\neq y$. Thus,
$\sem{P}^{\Oo_1\cup O_2}_{\eta_1}=\sem{P}^{\Oo_1\cup O_2}_{\eta_2}$ for every $(\eta_1,\eta_2)\in \Theta^2_{=X_p}$.
 \item
{\bf Rule}  ({\sc Rud}). Consider $\Oo_2$ such that $\forall x\in \Oo_2. \exists r_x\in \Dom(\Expr(x))\setminus \bigcup_{y\in \Oo_1\cup\Oo_2\wedge x\neq y} \RVar(\Expr(y))$,
i.e., for each $x\in\Oo_2$, there exists a dominant random variable $r_x\in  \Dom(\Expr(x))$ which is not used in other expressions in
$\Expr(y)$ for  $y\in \Oo_1\cup\Oo_2$ with $x\neq y$. Let $\Oo_3$ denote the set of such random variables $r_x$.
Then, $\sem{P}^{\Oo_1\cup O_2}_{\eta_1}=\sem{P}^{\Oo_1\cup \Oo_3}_{\eta_2}$, for every $(\eta_1,\eta_2)\in \Theta^2_{=X_p}$.

Since the program is $\Oo_1$-uniform, then we get that the program is $\Oo_1\cup \Oo_3$-uniform.
 The result follows from $\sem{P}^{\Oo_1\cup O_2}_{\eta_1}=\sem{P}^{\Oo_1\cup \Oo_3}_{\eta_2}$.
%
%
\end{itemize}
The proof is completed.
\end{proof}

Remark that our type inference rules are designed to be redundant for efficiency consideration.
Namely, they have distinct complexities to check the premises. For instance, rule ({\sc Rud}) is special case of rule ({\sc Sid}$_3$), as $\rud$ is a subtype of $\sid$, but we prefer $\rud$ over $\sid$.
Also, rule ({\sc Rud}) is a reformation and generalization of Proposition~\ref{prop:dom} which allows to add more observable variables to the observable set $\Oo_1$ without searching dominant variables in the computations of variables in $\Oo_1$.
A valid judgement derived by rule ({\sc Sid}$_2$) in  constant-time can also be derived by using other rules, but
rule ({\sc Sid}$_2$) could avoid unfolding the definitions of variables.
When applying these rules, we start with those which can infer the type $\rud$ and whose premises can be established at a lower cost, namely,
in the order of rules ({\sc Rud}), ({\sc Sid}$_2$), ({\sc Sid}$_1$), ({\sc No-Key}) and  ({\sc Sid}$_3$).

\begin{example}
Let us consider the motivating example in Section~\ref{sec:motivatingexample}.
Recalling that $\Expr(y_3)=r' \oplus r$, although
we can derive both $\vdash \{y_3\}:\rud$ by applying rule ({\sc Rud})
and $\vdash \{y_3\}:\sid$ by applying rule ({\sc No-Key}). We will prefer $\vdash \{y_3\}:\rud$.

In Example~\ref{eample:dom}, we claim that for any observable set $\Oo\subseteq \{x',A,y_1,\cdots,y_5,r,r'\}$ with $|\Oo|\geq 2$,
we cannot deduce that the program is
$\Oo$-uniform by applying Proposition~\ref{prop:dom}. As an example,
let us consider the observable set $\{x',y_3\}$.
As $\Expr(x')$ is dominated by the random variable $r$, we derive
that $\vdash \{x'\}:\rud$. As  $\Expr(y_3)$ has the dominant variable $r'$
which does not appear in $\Expr(x')$, we can derive that $\vdash \{x',y_3\}:\rud$
by applying rule ({\sc Rud}). However, $r$ appears in $\Expr(y_3)$, so
we cannot directly apply Proposition~\ref{prop:dom} to prove that the program is $\{x',y_3\}$-uniform.
Similarly, we can deduce that $\vdash \{x',y_0\}:\rud$ and $\vdash \{x',y_4\}:\rud$, but we still cannot deduce the distribution types of the other size-2 observable sets.
For instance, we cannot infer the distribution type of the observable set $\{x',y_1\}$,
as $\Dom(\Expr(x'))=\Dom(\Expr(y_1))=\{r\}$.
\end{example}

\subsection{Sound Transformations}\label{sec:trans}
In this subsection, we describe three sound, domain-specific transformations for facilitating type inference.

The first transformation is based on the observation that 
some computations may share common sub-expressions which are dominated by some random variables, and  these random variables are only used in these sub-expressions.
Such sub-expressions, treated as random variables (i.e., replaced by the dominant variables) when analyzing the computations, are uniform and independent.
This  may enable type inference rules, as the other random variables
in sub-expressions will be eliminated. Therefore, we can simplify computations by leveraging the notion of dominant variables.

For instance, consider the observable set $\{x',y_1\}$ in the motivating example.
Recall that $\Expr(x')=k\oplus r$ and $\Expr(y_1)=((k\oplus r) \oplus r')-r'$.
We can observe that the sub-expression $k\oplus r$ is dominated by the random variable $r$ which occurs exclusively in  $k\oplus r$.  
Therefore, $\Expr(x')$ and $\Expr(y_1)$ can be simplified as $r$ and $(r\oplus r')-r'$ respectively,
as the distributions of $v\oplus r$ and $r$ are identical for any value $v\in \Bb$ of $k$, and $r$ does not affect the values of
other sub-expressions. Using the simplified computations $r$ and $(r\oplus r')-r'$ of $\Expr(x')$ and $\Expr(y_1)$, we can deduce
$\vdash \{x',y_1\}:\sid$ by applying rule ({\sc No-Key}).
This simple, but crucial, observation is formalized as the following definition.

\begin{definition} \label{def:domr}
	A sub-expression $e$ in a set of computations $E$ is \emph{dominated by a random variable $r$}
	if $r\in\Dom(e)$ and $r$ only occurs in $e$, namely, does not occur in $E$ elsewhere.
\end{definition}	

To facilitate type inference, we may replace the largest $r$-dominated sub-expression $e$ by $r$, which can be done in polynomial-time by traversing the abstract syntax tree.
Let $\simply_{\Dom}(E)$ be the set of computations obtained from
$E$ by repeatedly applying this strategy.

$\simply_{\Dom}$ is generally very effective in our experiments, but fails on one
benchmark. This is because $\simply_{\Dom}$ only relies upon syntactic information of the computation.
For instance, consider the observable set $\{x_1,x_2\}$ taking from the second-order masked implementation of the AES Sbox~\cite{SP06}, where
\begin{itemize}
  \item $\Expr(x_1)=\Big({\tt Sbox}\big((0\oplus((k\oplus r_0)\oplus r_1))\oplus r_1\big) \oplus  r_2  \Big) \oplus r_3$,
  \item $\Expr(x_2)=\Big({\tt Sbox}\big((r_0 \oplus ((k\oplus r_0) \oplus r_1)) \oplus r_1\big) \oplus r_2  \Big) \oplus r_3$,
\end{itemize}
$k$ is a private input variable, and $r_0,r_1,r_2,r_3$ are random variables.
$\simply_{\Dom}$ is not able to simplify the sub-expression $r_2\oplus r_3$ into a random variable, though
both $r_2$ and $r_3$ are dominant variables of $r_2\oplus r_3$.


$\simply_{\Dom}$ could be applied if we could transform $\Expr(x_1)$ and $\Expr(x_2)$ to equivalent forms (by the associativity of $\oplus$), i.e.,
\begin{center}
${\tt Sbox}\big((0\oplus((k\oplus r_0)\oplus r_1))\oplus r_1\big) \oplus  (r_2   \oplus r_3)$
and ${\tt Sbox}\big((r_0 \oplus ((k\oplus r_0) \oplus r_1)) \oplus r_1\big)\oplus (r_2   \oplus r_3)$.
\end{center}
However, carrying out such a transformation automatically is very challenge in general, as there is no canonical
representation of the computation to which $\simply_{\Dom}$ can be applied.
To address this challenge, we propose the sound transformation
which aims to collapse several variables into one variable, e.g.,
collapse $r_2$ and $r_3$ into a new random variable even if they do not appear
as the sub-expression  $r_2 \oplus r_3$.
This idea is formalized as the following definition.

\begin{definition} \label{def:coll}
Given a set of computations $E$ and
a set of variables $Z\subseteq \bigcup_{e\in E}\Var(e)$,
$Z$ is \emph{collapsible} with respect to $E$
if the following two conditions hold:
\begin{enumerate}
  \item $Z\subseteq X_p$ or $Z\subseteq X_k$ or $Z\subseteq X_r$, namely, variables in $Z$ have the same type;
  \item and there exist sub-expressions $e_1,\cdots, e_k$ in $E$ such that:
  \begin{itemize}
    \item sub-expression $e_j$ for each $1\leq j\leq k$ can be rewritten as $(\bigoplus_{z\in Z}z)\oplus e_j'$, i.e., clustering the variables in $Z$ together,
    \item and each variable $z\in Z$ only occurs in $\{e_1,\cdots, e_k\}$, and occurs in $e_j$ for each $1\leq j\leq k$ exactly once.
  \end{itemize}
\end{enumerate}
\end{definition}

One can observe that if $Z$ is collapsible, then $\bigoplus_{z\in Z}z$  can be replaced by 
a fresh variable respecting the type (i.e. public, key, or random) when analyzing $\{\Expr(x)\mid x\in\Oo\}$ for the observable set $\Oo$.
For simplicity, we usually use $\overline{Z}$ to denote the fresh variable.
We denote by $\simply_{\Col}(E)$ the set of computations computed from
$E$ by repeatedly applying this strategy.
$\simply_{\Col}(E)$ is implemented in polynomial-time by iteratively searching
pairs of variables $\{x_1,x_2\}$ that are collapsible 
and replacing them by $\overline{\{x_1,x_2\}}$.

The third transformation is the application of algebra laws.
We denote by $\simply_{\Alg}(E)$ the set of computations computed from
$E$ by repeatedly applying algebra laws such as $e\oplus e\equiv 0$, $0\oplus e\equiv e$, $0\times e\equiv 0$, $0\odot e\equiv0$ and $e-e\equiv0$.
For $0\oplus e\equiv e$, $0\times e\equiv 0$, $0\odot e\equiv0$, we directly search for the constant $0$.
For $e\oplus e\equiv 0$ and $e-e\equiv0$, the representation of computations in $E$ shares the same common sub-expressions
so that we do not need to compare whether two sub-expressions are same or not when applying $\simply_{\Alg}(E)$.
Moreover, instead of considering only sub-expressions of the form $e\oplus e$ (resp. $e-e$),
we search for two occurrences of the sub-expression $e$ such that the operators on the path between the roots of two occurrences of $e$ are all $\oplus$ (resp. $-$).

\medskip

It is straightforward to verify the following proposition.
\begin{proposition}
Given a program $P$ and an observable set $\Oo$,
let $\overline{P}$ denote the program
\[(x\leftarrow \overline{\Expr(x)};)_{x\in\Oo}\ \return;\]
where
$\overline{\Expr(x)}$ is obtained from $\Expr(x)$
by applying $\simply_{\Dom}(E)$, $\simply_{\Col}(E)$ and/or $\simply_{\Alg}(E)$,
then	$\sem{\overline{P}}^{\Oo}$ and $\sem{P}^{\Oo}$ generate the same distribution over $\Oo$.
\end{proposition}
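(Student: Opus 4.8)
The plan is to prove the proposition by reducing it to three separate soundness claims, one for each transformation, and then observing that the distribution-preserving property is preserved under composition. Concretely, I would first state the key invariant: if $E'$ is obtained from $E$ by a single application of any of the three rewriting strategies, then for every valuation $\eta \in \Theta$ the tuple of expressions in $E'$, evaluated under $\eta$ with the uniform distribution on the (possibly freshly introduced) random variables, induces exactly the same joint distribution as the tuple of expressions in $E$. Since $\overline{\Expr(x)}$ is obtained by finitely many such single steps, a trivial induction on the number of steps then yields $\sem{\overline{P}}^{\Oo} = \sem{P}^{\Oo}$ as functions $\Theta \to \Dd(\Oo)$.

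For $\simply_{\Alg}$, the argument is purely syntactic/semantic equality of expressions: each rewrite ($e \oplus e \equiv 0$, $0 \oplus e \equiv e$, $0 \times e \equiv 0$, $0 \odot e \equiv 0$, $e - e \equiv 0$) is a valid identity in the ring $\Bb$ and the finite field $\mathds{F}_{2^\kappa}$, so the rewritten expression computes the same function of the inputs pointwise; hence the joint distribution is unchanged for every $\eta$. The generalized form (two occurrences of $e$ joined by a path of $\oplus$'s, resp. $-$'s) reduces to the basic case by commutativity/associativity of $\oplus$ (resp.\ by the fact that subtraction modulo $2^\kappa$ combined additively lets the two copies of $e$ cancel). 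For $\simply_{\Dom}$, I would invoke the Claim established in the proof of Proposition~\ref{prop:dom}: if $e$ is a sub-expression of $E$ dominated by $r$ and $r$ occurs nowhere else in $E$, then conditioning on any fixed values of the other variables occurring in $e$, the map $r \mapsto e$ is a bijection on $\Bb$; since $r$ is uniform and independent of everything else (it appears only inside $e$), the value of $e$ is itself uniform and independent of the remaining randomness, so replacing the whole sub-expression $e$ by a fresh uniform random variable $r$ leaves every joint distribution $\sem{P}^\Oo_\eta$ intact. For $\simply_{\Col}$, given a collapsible set $Z$ with witnessing sub-expressions $e_1, \dots, e_k$ each of the form $(\bigoplus_{z\in Z} z) \oplus e_j'$, I would note that the partial sum $\bigoplus_{z \in Z} z$ is a uniform random element of $\Bb$ (a $\kappa$-bit XOR of independent uniform variables of the same type, or a deterministic value if they are public/key variables — in which case ``uniform'' is replaced by ``determined by $\eta$'', which is exactly what a fresh variable of that type ranges over), and crucially it is independent of all the $e_j'$ and of every other expression in $E$, because each $z \in Z$ occurs only inside the $e_j$'s and exactly once in each; therefore substituting the fresh variable $\overline{Z}$ for $\bigoplus_{z\in Z} z$ induces an identity on joint distributions.

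The main obstacle I expect is not any single rewrite but the bookkeeping of \emph{independence} when the strategies are applied repeatedly and in combination: after one application, the ``program'' $\overline{P}$ lives over an enlarged (or altered) set of random variables, and one must be careful that the freshness and single-occurrence hypotheses needed for the next rewrite are still meaningful with respect to the current set of computations $E$ rather than the original program. I would handle this by phrasing the invariant entirely in terms of the current set $E$ of expressions (as Definitions~\ref{def:domr} and~\ref{def:coll} already do) and by always introducing genuinely fresh variable names, so that no accidental aliasing occurs; the inductive step then only ever needs the single-step soundness lemma applied to whatever $E$ is current. A secondary subtlety is the public/key case of $\simply_{\Col}$: there $\bigoplus_{z\in Z} z$ is not random but a fixed function of $\eta$, and one must check that replacing it by a fresh public (resp.\ key) variable does not change $\sem{P}^\Oo_{\eta_1}$ versus $\sem{P}^\Oo_{\eta_2}$ — but since the proposition only asserts equality of $\sem{\overline{P}}^\Oo$ and $\sem{P}^\Oo$ as functions on $\Theta$ (with the fresh variable's value on a given $\eta$ taken to be the value of $\bigoplus_{z \in Z} z$ under $\eta$), this is immediate and no case analysis on $\eta_1 \approx_{X_p} \eta_2$ is needed here.
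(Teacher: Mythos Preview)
The paper does not actually prove this proposition: it is introduced with ``It is straightforward to verify the following proposition'' and no argument is given. Your proposal is therefore not competing with an existing proof but supplying one, and the decomposition you choose --- reduce to single-step soundness of each of the three transformations and close under composition --- is exactly the natural one and is correct.

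A few remarks on points you flag. Your treatment of $\simply_{\Dom}$ is right and the appeal to the Claim inside the proof of Proposition~\ref{prop:dom} is the intended mechanism; note that no fresh variable is introduced here (the sub-expression $e$ is replaced by the existing $r$), so the valuation space does not change and the bijection $r \mapsto e$ (for fixed values of the other variables) gives a measure-preserving change of variables on the random inputs, yielding literal equality of $\sem{P}^\Oo_\eta$ before and after. For $\simply_{\Col}$ with $Z \subseteq X_r$, your independence argument is correct because the definition forces every $z \in Z$ to occur nowhere outside the $e_j$'s and, after the rewrite to $(\bigoplus_{z\in Z} z)\oplus e_j'$, nowhere in any $e_j'$; hence $\bigoplus_{z\in Z} z$ is uniform and jointly independent of all remaining sub-expressions. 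The public/key case you worry about is indeed only a bookkeeping issue: the statement should be read as equality of distributions under the obvious correspondence of valuations ($\overline{Z} \mapsto \bigoplus_{z\in Z}\eta(z)$), which is a bijection between the relevant valuation spaces since the $z$'s no longer occur in $\overline{P}$. With that reading, the proposition follows, and the downstream uses (preservation of $\Oo$-uniform and $\Oo$-$\SI$) are immediate.
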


\begin{example}Let us consider the above example, i.e.,  the observable set $\{x_1,x_2\}$, where
\begin{itemize}
  \item $\Expr(x_1)=\Big({\tt Sbox}\big((0\oplus((k\oplus r_0)\oplus r_1))\oplus r_1\big) \oplus  r_2  \Big) \oplus r_3$,
  \item $\Expr(x_2)=\Big({\tt Sbox}\big((r_0 \oplus ((k\oplus r_0) \oplus r_1)) \oplus r_1\big) \oplus r_2  \Big) \oplus r_3$,
\end{itemize}
$k$ is a private input variable, and $r_0,r_1,r_2,r_3$ are random variables.
The type system in Figure~\ref{tab:semrules} fails to prove $\vdash\{x_1,x_2\}:\rud$.

One can observe that $Z=\{r_2,r_3\}$ is collapsible with respect to $\{\Expr(x_1),\Expr(x_2)\}$, so by replacing $Z=\{r_2,r_3\}$ with a new random variable $\overline{Z}$,
$\{\Expr(x_1),\Expr(x_2)\}$ can be simplified to
\[E_1=\{ {\tt Sbox}\big((0\oplus((k\oplus r_0)\oplus r_1))\oplus r_1\big) \oplus \overline{Z}, ~
{\tt Sbox}\big((r_0 \oplus ((k\oplus r_0) \oplus r_1)) \oplus r_1\big)\oplus \overline{Z}\}.\]

By iteratively applying $\simply_{\Alg}$ to $E_1$ using algebraic laws $r_0\oplus r_0\equiv 0$
and $r_1\oplus r_1\equiv 0$, we obtain
\[E_2=\{ {\tt Sbox}\big(0\oplus((k\oplus r_0)\oplus 0)\big) \oplus \overline{Z}, ~
{\tt Sbox}\big((k\oplus 0) \oplus 0\big)\oplus \overline{Z}\}.\]

Since $0\oplus e \equiv 0\oplus e \equiv e$, by iteratively applying $\simply_{\Alg}$ to $E_2$, we obtain
\[E_3=\{ {\tt Sbox}(k\oplus r_0) \oplus \overline{Z}, ~
{\tt Sbox}(k)\oplus \overline{Z}\}.\]
Since $r_0$ is the dominant variable of ${\tt Sbox}(k\oplus r_0)\oplus \overline{Z}$ but does not occur in ${\tt Sbox}(k)\oplus \overline{Z}$,
by applying $\simply_{\Dom}$, we obtain $E_4=\{r_0, ~
{\tt Sbox}(k)\oplus \overline{Z}\}.$ Now, $\overline{Z}$ becomes the dominant variable of ${\tt Sbox}(k)\oplus \overline{Z}$
but does not occur in $r_0$, by applying $\simply_{\Dom}$ again, we obtain that
$E_5=\{r_0, ~ \overline{Z}\}$, from which we can deduce 
$\vdash\{x_1,x_2\}:\rud$.
\end{example}

\begin{algorithm}[t]
\SetAlgoNoLine
\small
\SetKwProg{myfunc}{Function}{}{}
${\tt PLS}:=\emptyset$; $\lambda:=$empty\_map; $\pi:=$empty\_map\;
\myfunc{{\sf HOME}$(P,X_p,X_k,X_r,X_o,d)$}{
    $X_{\tt check}:=\{x\in X_o\mid \Var(\Expr(x))\not\subseteq X_p\}$\;
    \ForAll{$x\in X_{\tt check}$}{
        \If{$\simply_{\Alg}(\Expr(x))\neq \Expr(x)$}{
             $\lambda(x):=\simply_{\Alg}(\Expr(x))$\;
            $\pi(x):=\Dom(\lambda(x))$\;
        }
        \lElse{$\pi(x):=\Dom(\Expr(x))$}
    }
	{\sf Explore}$(\{(d,X_{\tt check})\})$\;
    \Return ${\tt PLS}$\;
}
\medskip
\myfunc{{\sf Explore}$(\mathcal{Y})$}{
    \ForAll{$(i,\Oo)\in \mathcal{Y}$}{
        Choose a subset $\mathcal{C}_{i,\Oo}\subseteq \Oo$ in a topological order from leaf to root  s.t. ${|\mathcal{C}_{i,\Oo}|}=i$\;
    }
    \If{{\sf Check}$( \{\mathcal{C}_{i,\Oo}\}_{(i,\Oo)\in \mathcal{Y}} )=\top$}
    {
        \ForAll{$(i,\Oo)\in \mathcal{Y}, \ x\in\Oo\setminus  \mathcal{C}_{i,\Oo}$ in a topological order from leaf to root}{
            \If{{\sf Check}$(\{\mathcal{C}_{i,\Oo}\}_{(i,\Oo)\in \mathcal{Y}},\{x\} )=\top$}{
                $\mathcal{C}_{i,\Oo}:=\mathcal{C}_{i,\Oo}\cup\{x\}$\;
            }
        }
    }
    \lElse{${\tt PLS}:={\tt PLS}\cup \{\bigcup_{(i,\Oo)\in \mathcal{Y}} \mathcal{C}_{i,\Oo}\}$}
    $\mathcal{Y}':=\{(i,\Oo)\in \mathcal{Y} \mid |\Oo|>i\wedge i\neq0\}$\;
    \lIf{$\mathcal{Y}'=\emptyset$}{\Return}
    \ForAll{$(i,\Oo)\in \mathcal{Y}',  0\leq i_j\leq \min(i,|\Oo\setminus \mathcal{C}_{i,\Oo}|)$ s.t. $\sum_{(i,\Oo)\in \mathcal{Y}'} i_j\neq 0$}{
        {\sf Explore}$((\mathcal{Y}\setminus\mathcal{Y}')\cup\bigcup_{(i,\Oo)\in \mathcal{Y}'}\{(i-i_j,\mathcal{C}_{i,\Oo}),(i_j,\Oo\setminus \mathcal{C}_{i,\Oo})\})$\;
    }
    \Return\;
}
\medskip
\myfunc{{\sf Check}$(\{\mathcal{C}_{i,\Oo}\}_{(i,\Oo)\in \mathcal{Y}}, Y=\emptyset )$}{
    \If{$\vdash  Y\cup\bigcup_{(i,\Oo)\in \mathcal{Y}} \mathcal{C}_{i,\Oo}:\tau$ for some $\tau\in\{\rud,\sid\}$ is valid}{
        \Return{$\top$}\;
    }\ElseIf{$\vdash_{\simply_{\Dom}} Y\cup\bigcup_{(i,\Oo)\in \mathcal{Y}} \mathcal{C}_{i,\Oo}:\tau$ for some $\tau\in\{\rud,\sid\}$ is valid}{
        \Return{$\top$}\;
    }\ElseIf{$\vdash_{\simply_{\Col}} Y\cup\bigcup_{(i,\Oo)\in \mathcal{Y}} \mathcal{C}_{i,\Oo}:\tau$ for some $\tau\in\{\rud,\sid\}$ is valid}{
        \Return{$\top$}\;
    }
    \Return{$\bot$\;}

}
\caption{Type inference algorithm.}
\label{alg:DivHOMaRer}
\end{algorithm}

\subsection{Type Inference Algorithm}
In this subsection, we present our type inference algorithm.

To prove that $P$ is order-$d$ secure,
it is necessary to ensure that, for all size-$d$ observable subsets $\Oo\subseteq X_o$,
$P$ is $\Oo$-$\SI$. Evidently, exhaustive enumeration of $\binom{|X_o|}{d}$ subsets 
may not scale.
To address this issue,
the key idea is Proposition~\ref{prop:uniform2SI} which states that
if the program $P$ is $\Oo$-$\SI$ (resp. $\Oo$-uniform), then $P$ is also $\Oo'$-$\SI$ (resp. $\Oo'$-uniform)
for any subset $\Oo'\subseteq \Oo$. Therefore,
the main strategy is to find observable sets $\{\Oo_i\}_{i=1}^n$ as large as possible
such that $P$ is $\Oo_i$-$\SI$ for all $1\leq i\leq n$, and
for each size-$d$ subset $\Oo\subseteq X_o$,
$\Oo\subseteq\Oo_i$ for some $1\leq i\leq n$.

Our idea is formalized in
Algorithm~\ref{alg:DivHOMaRer}, where $\vdash \Oo:\tau$ denotes the type inference without applying the  transformations
$\simply_{\Dom}$ or $\simply_{\Col}$; $\vdash_{\simply_{\Dom}}\Oo:\tau$ denotes the type inference aided with
the transformation $\simply_{\Dom}$; $\vdash_{\simply_{\Col}} \Oo:\tau$ denotes the type inference aided by
both transformations $\simply_{\Dom}$ and $\simply_{\Col}$.
Taking a program $P$, sets of public ($X_p$), private ($X_k$),  random
($X_r$) and observable $(X_o)$ variables, and the order $d$ as inputs,
the algorithm first  initializes three data structures:
${\tt PLS}$ for storing all \emph{potential} leaky observable sets,
$\lambda$ for storing the simplified computation of each variable, and
$\pi$ for storing the set of dominant variables of the (simplified) computation $\Expr(x)$ for each variable $x$.

At Line 3, Algorithm~\ref{alg:DivHOMaRer} computes the set $X_{\tt check}$ of observable variables whose computation involves either private or random variables. This allows  to isolate the set of observable variables 
whose computation
involves public input variables only.
Hence, according to rule ({\sc Sid}$_1$), it suffices to consider
 size-$d$ subsets $\Oo\subseteq X_o\setminus \{x\in X_o\mid \Var(\Expr(x))\subseteq X_p\}$.
At Lines 4-8, it simplifies the computation $\Expr(x)$ for each variable $x\in X_{\tt check}$ by invoking $\simply_{\Alg}$
and computes its dominant variables; the results are stored in $\lambda$ and $\pi$
for later use. 
After that, it invokes the function {\sf Explore} with the set
$\{(d,X_{\tt check})\}$ (Line 9). We assume that $|X_{\tt check}|\geq d$, otherwise we can directly check whether
$\vdash X_{\tt check}:\sid$ is valid or not.

The function {\sf Explore} is more involved.
It aims at proving that for all pairs $(i,\Oo)\in \mathcal{Y}$
and all possible subsets $\Oo_i\subseteq\Oo$ with size $i$,
the type judgement $\vdash \bigcup_{(i,\Oo)\in \mathcal{Y}} \Oo_i:\tau_{\sid}$ is valid.
Taking a set $\mathcal{Y}$ of pairs $(i,\Oo)$ as input which 
satisfies the following three properties:
\begin{itemize}
  \item[(1)] $\sum_{\{i\mid (i,\Oo)\in \mathcal{Y}\}} i=d$, namely, the sum of orders' $i$ for subsets $\Oo$ in $\mathcal{Y}$ is the target order $d$;
  \item[(2)] $\biguplus\{\Oo\mid (i,\Oo)\in \mathcal{Y}\}=X_{\tt check}$, namely, the subsets $\Oo$ in $\mathcal{Y}$ form a partition of $X_{\tt check}$; and
  \item[(3)]  $|\Oo|\geq i$ for all $(i,\Oo)\in \mathcal{Y}$, namely, there are at least $i$ variables in $\Oo$ for each $(i,\Oo)\in \mathcal{Y}$.
\end{itemize}
Remark that these properties are maintained and required to show the  correctness and termination
of our algorithm.

An illustration of the  function {\sf Explore}  is given in Figure~\ref{fig:alg}.
The function {\sf Explore} first chooses a size-$i$ subset $\mathcal{C}_{i,\Oo}\subseteq\Oo$ for each pair $(i,\Oo)\in \mathcal{Y}$  in a topological order from leaf to root (Line 13).
Then it checks whether the type judgement $\vdash \bigcup_{(i,\Oo)\in \mathcal{Y}}\mathcal{C}_{i,\Oo}:\tau$ for some $\tau\in\{\rud,\sid\}$ is valid or not
by invoking the function {\sf Check} (Line 14).

\begin{itemize}
  \item If it is valid, i.e., the observable set $\bigcup_{(i,\Oo)\in \mathcal{Y}}\mathcal{C}_{i,\Oo}$ has distribution type $\rud$ or $\sid$ (as shown in the middle-part of Figure~\ref{fig:alg}), then {\sf Explore} iteratively tries to add the remaining observable variables $x$ to $\mathcal{C}_{i,\Oo}$ for $x \in \Oo\setminus \mathcal{C}_{i,\Oo}$ and $(i,\Oo)\in \mathcal{Y}$ by invoking the function {\sf Check} (Lines 15-17).
      The effect of this addition is shown in the right-part of Figure~\ref{fig:alg}.

  \item Otherwise $\vdash \bigcup_{(i,\Oo)\in \mathcal{Y}}\mathcal{C}_{i,\Oo}:\tau$ for any $\tau\in\{\rud,\sid\}$ is invalid, then $\bigcup_{(i,\Oo)\in \mathcal{Y}}\Oo_i$ is a
potentially leaky set and  is added to the set ${\tt PLS}$ (Line 18).
\end{itemize}
Finally, to cover $\bigcup_{(i,\Oo)\in \mathcal{Y}} \Oo_i$ for all possible size-$i$ subsets $\Oo_i\subseteq\Oo$ and pairs $(i,\Oo)\in \mathcal{Y}$,
it remains to check the observable sets
$\bigcup_{(i,\Oo)\in \mathcal{Y}} \Oo_i$,
where there exists at least one
pair $(i,\Oo)\in \mathcal{Y}$ such that $\Oo_i$ contains at least one variable from $\Oo\setminus \mathcal{C}_{i,\Oo}$.
(Otherwise $\bigcup_{(i,\Oo)\in \mathcal{Y}} \Oo_i\subseteq \bigcup_{(i,\Oo)\in \mathcal{Y}}\mathcal{C}_{i,\Oo}$.)
To do this, we first extract the pairs $(i,\Oo)$ such that $|\Oo|>i$ and $i\neq 0$, i.e.,
$\mathcal{Y}':=\{(i,\Oo)\in \mathcal{Y} \mid |\Oo|>i\wedge i\neq 0\}$ at Line 19.
If $\mathcal{Y}'$ is empty, then all the possible subsets $\bigcup_{(i,\Oo)\in \mathcal{Y}} \Oo_i$ are covered
and Algorithm~\ref{alg:DivHOMaRer} terminates (Line 20).
Otherwise, we partition all the pairs $(i,\Oo)\in \mathcal{Y}'$ into pairs $(i-i_j,\mathcal{C}_{i,\Oo}),(i_j,\Oo\setminus \mathcal{C}_{i,\Oo})$
for all combinations of values $0\leq i_j\leq \min(i,|\Oo\setminus\mathcal{C}_{i,\Oo}|)$ such that
$\sum_{(i,\Oo)\in \mathcal{Y}'} i_j\neq 0$. The condition $\sum_{(i,\Oo)\in \mathcal{Y}'} i_j\neq 0$ is used to avoid the case $\bigcup_{(i,\Oo)\in \mathcal{Y}} \Oo_i\subseteq \bigcup_{(i,\Oo)\in \mathcal{Y}}\mathcal{C}_{i,\Oo}$.
For each such combination of values, 
the partitioned pairs $\{(i-i_j,\mathcal{C}_{i,\Oo}),(i_j,\Oo\setminus\mathcal{C}_{i,\Oo}) \mid (i,\Oo)\in \mathcal{Y}'\}$ together with the pairs $\{(i,\Oo)\in \mathcal{Y} \mid {|\Oo|}=i\vee i=0\}$ (i.e., $\mathcal{Y}\setminus \mathcal{Y}'$)
are checked by recursively calling the function {\sf Explore}.
It is easy to observe that the recursion maintains the above three properties.

The function {\sf Check} first verifies whether $\vdash Y\cup\bigcup_{(i,\Oo)\in \mathcal{Y}} \Oo_i:\tau$ for some $\tau\in\{\rud,\sid\}$ is valid, which may be aided with data structures $\lambda$ and $\pi$  (Line 25).
If it is valid,   $\top$ is returned (Line 26).
Otherwise, it is verified
with the additional transformation $\simply_{\Dom}$  (Line 27).
If it still fails, $\vdash Y\cup\bigcup_{(i,\Oo)\in \mathcal{Y}} \Oo_i\tau$ for some $\tau\in\{\rud,\sid\}$ is checked
using the additional transformation $\simply_{\Col}$ on the expressions yielded by $\simply_{\Dom}$ (Line 29).
Once $\vdash Y\cup\bigcup_{(i,\Oo)\in \mathcal{Y}} \Oo_i:\tau$ for some $\tau\in\{\rud,\sid\}$ is derived,
{\sf Check} returns $\top$ (Lines 28 and 30).
If all of these steps fail, $\bot$ is returned (Line 31).
Notices that during the above type inference, once $\vdash Y\cup\bigcup_{(i,\Oo)\in \mathcal{Y}} \Oo_i:\sdd$ becomes valid,
 {\sf Check} also returns $\bot$.
Moreover, in order to avoid recomputing $\simply_{\Dom}$ and $\simply_{\Col}$,
the sequence of applied transformations are recorded, and the simplified expressions are cached.
When the function {\sf Check} is invoked at Line 16,
i.e., $Y$ is nonempty, we first check whether
the recorded sequence of applied transformations is still legal.
If it is still applicable, we will reuse the simplified expressions 
and apply $\simply_{\Dom}$ and/or $\simply_{\Col}$ to $\Expr(x)$ as well.
Otherwise, the function {\sf Check} immediately returns $\bot$.

Remark that 
the type inference rules are applied in the order of increasing complexities of checking the premises while preferring
$\rud$ over $\sid$.
We also remark that  the choice of the subsets at Line 13 and the variable $x$ at Line 15 may have significant impact on the performance.
We choose variables  from leaf to root following the order of the size of the defining computation, in light of
Rule ({\sc Sid}$_2$) in Figure~\ref{tab:semrules}.

\begin{figure}[t]
  \centering
  \includegraphics[width=.85\textwidth]{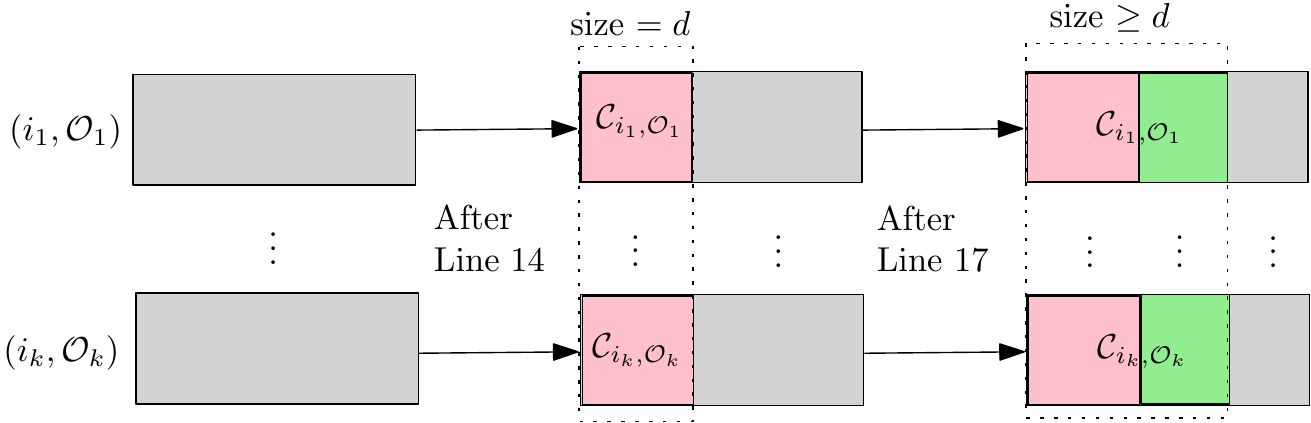}\\
  \caption{Intuition of Algorithm~\ref{alg:DivHOMaRer}.}\label{fig:alg}
\end{figure}

%


The procedure terminates as we only partition pairs $(i,\Oo)\in \mathcal{Y}$ such that $|\Oo|>i$ and $i\neq 0$
and the sizes of $\mathcal{C}_{i,\Oo}$ and $\Oo\setminus\mathcal{C}_{i,\Oo}$ in partitioned pairs $(i-i_j,\mathcal{C}_{i,\Oo})$ and $(i_j,\Oo\setminus\mathcal{C}_{i,\Oo})$
eventually become smaller and smaller in recursive calls until ${|\Oo|}=i$ or $i=0$.
(Note that we keep pairs of the form $(0,\Oo)$ in the worklist for simplifying presentation.
They are indeed removed in our implementation.)

\begin{theorem}\label{thm:HOMV}
$P$ is order-$d$ secure if ${\tt PLS}=\emptyset$.
Moreover, if $P$ is $\Oo$-leaky for $\Oo\subseteq X_{\tt check}$ with ${|\Oo|}=d$,
then $\Oo\in {\tt PLS}$.

\end{theorem}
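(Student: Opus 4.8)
The plan is to prove both assertions by first establishing a single \emph{coverage invariant} for the recursive procedure {\sf Explore}, and then combining it with the soundness of the type system (Theorem~\ref{thm:type}) together with the downward closure of $\sid$/$\rud$ under subsets (Proposition~\ref{prop:uniform2SI}). As a preliminary step I would check that every call to {\sf Explore} is made with an argument $\mathcal{Y}$ satisfying properties (1)--(3): this holds for the initial call $\{(d,X_{\tt check})\}$ by the assumption $|X_{\tt check}|\geq d$, and for the recursive calls at Line~22 it is preserved because replacing $(i,\Oo)$ by $(i-i_j,\mathcal{C}_{i,\Oo})$ and $(i_j,\Oo\setminus\mathcal{C}_{i,\Oo})$ keeps the order-sum fixed ($(i-i_j)+i_j=i$), keeps a partition of $X_{\tt check}$ (since $\mathcal{C}_{i,\Oo}\uplus(\Oo\setminus\mathcal{C}_{i,\Oo})=\Oo$), and respects the size constraint because $0\leq i_j\leq\min(i,|\Oo\setminus\mathcal{C}_{i,\Oo}|)$ forces $|\mathcal{C}_{i,\Oo}|=i\geq i-i_j$ and $|\Oo\setminus\mathcal{C}_{i,\Oo}|\geq i_j$. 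Termination follows as already sketched, since the multiset of values $|\Oo|-i$ over pairs with $i\neq 0$ strictly decreases along recursion.

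The heart of the argument is the following claim, proved by induction on the recursion depth (equivalently on $\sum_{(i,\Oo)\in\mathcal{Y}}(|\Oo|-i)$): after {\sf Explore}$(\mathcal{Y})$ returns, for every choice of subsets $\Oo_i\subseteq\Oo$ with $|\Oo_i|=i$, either $\bigcup_{(i,\Oo)\in\mathcal{Y}}\Oo_i$ is a subset of some observable set $\Oo'$ on which {\sf Check} returned $\top$ (hence $\vdash\Oo':\tau$ is valid for some $\tau\in\{\rud,\sid\}$), or $\bigcup_{(i,\Oo)\in\mathcal{Y}}\Oo_i\in{\tt PLS}$. For the base case $\mathcal{Y}'=\emptyset$, every pair has $|\Oo|=i$ or $i=0$, forcing $\Oo_i\subseteq\mathcal{C}_{i,\Oo}$ and hence $\bigcup\Oo_i=\bigcup\mathcal{C}_{i,\Oo}$; the {\sf Check} at Line~14 then decides it, either returning $\top$ (giving coverage) or $\bot$ (at which point Line~18 puts exactly $\bigcup\mathcal{C}_{i,\Oo}$, a size-$d$ set, into ${\tt PLS}$). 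For the inductive step, decompose each $\Oo_i$ as $(\Oo_i\cap\mathcal{C}_{i,\Oo})\uplus(\Oo_i\setminus\mathcal{C}_{i,\Oo})$, using the $\mathcal{C}_{i,\Oo}$ \emph{after} the greedy enlargement at Lines~15--17. If all $\Oo_i\subseteq\mathcal{C}_{i,\Oo}$, the Lines~14--18 block handles $\bigcup\Oo_i\subseteq\bigcup\mathcal{C}_{i,\Oo}$ directly. Otherwise set $i_j:=|\Oo_i\setminus\mathcal{C}_{i,\Oo}|$ for $(i,\Oo)\in\mathcal{Y}'$; since pairs in $\mathcal{Y}\setminus\mathcal{Y}'$ necessarily have $\Oo_i\subseteq\mathcal{C}_{i,\Oo}$, the excess variables come from $\mathcal{Y}'$, so $\sum_{(i,\Oo)\in\mathcal{Y}'}i_j\geq 1$ and each $i_j$ lies in the range iterated at Line~22. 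The corresponding recursive call receives a family for which $\Oo_i\cap\mathcal{C}_{i,\Oo}$ (size $i-i_j$) and $\Oo_i\setminus\mathcal{C}_{i,\Oo}$ (size $i_j$), together with $\Oo_i$ for $(i,\Oo)\in\mathcal{Y}\setminus\mathcal{Y}'$, form an admissible choice whose union is again $\bigcup\Oo_i$; the induction hypothesis finishes the case.

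With the claim in hand, the two statements follow quickly. Instantiating it at the top-level call $\{(d,X_{\tt check})\}$: if ${\tt PLS}=\emptyset$ then every size-$d$ subset $\Oo\subseteq X_{\tt check}$ is contained in a set $\Oo'$ with $\vdash\Oo':\tau$ valid, so $P$ is $\Oo'$-$\SI$ (Theorem~\ref{thm:type}(1)--(2)) and hence $\Oo$-$\SI$ (Proposition~\ref{prop:uniform2SI}(2)). An arbitrary size-$d$ observable set $\Oo\subseteq X_o$ splits as $\Oo_1\uplus\Oo_2$ with $\Oo_1\subseteq X_{\tt check}$ and $\bigcup_{x\in\Oo_2}\Var(\Expr(x))\subseteq X_p$; since $P$ is $\Oo_1$-$\SI$, rule ({\sc Sid}$_1$) and its soundness give that $P$ is $\Oo$-$\SI$, so $P$ is order-$d$ secure. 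For the second part, if $P$ is $\Oo$-leaky with $\Oo\subseteq X_{\tt check}$ and $|\Oo|=d$, the claim says $\Oo\in{\tt PLS}$ or $\Oo\subseteq\Oo'$ for some {\sf Check}-accepted $\Oo'$; the latter would yield $P$ is $\Oo'$-$\SI$ and hence $\Oo$-$\SI$ by Proposition~\ref{prop:uniform2SI}, contradicting leakiness, so $\Oo\in{\tt PLS}$.

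The main obstacle I anticipate is phrasing the coverage claim so that it remains airtight under the greedy enlargement of $\mathcal{C}_{i,\Oo}$ at Lines~15--17 and the cached/re-checked transformation sequences used when {\sf Check} is invoked with a nonempty second argument at Line~16: one must verify that enlarging $\mathcal{C}_{i,\Oo}$ never destroys coverage (monotonicity of ``being a subset of an accepted set'') while simultaneously guaranteeing that the recursion at Line~22, whose range $0\leq i_j\leq\min(i,|\Oo\setminus\mathcal{C}_{i,\Oo}|)$ is computed from the \emph{enlarged} $\mathcal{C}_{i,\Oo}$, still enumerates every combination of sizes of the $\Oo_i\setminus\mathcal{C}_{i,\Oo}$ so that no size-$d$ set slips through. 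Stating the claim directly in terms of the post-enlargement $\mathcal{C}_{i,\Oo}$ is what makes the induction go through; the soundness steps themselves are routine given Theorem~\ref{thm:type} and Proposition~\ref{prop:uniform2SI}.
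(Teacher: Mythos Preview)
Your approach is essentially the paper's: it too first verifies the three invariants on $\mathcal{Y}$, then termination, then proves a coverage lemma for {\sf Explore} (the paper phrases it via the set $Cover(\mathcal{Y})$ of size-$d$ subsets selecting $i$ elements from each $\Oo$), and finally derives both assertions of the theorem from that lemma together with Theorem~\ref{thm:type} and Proposition~\ref{prop:uniform2SI}. One small correction: the parenthetical ``equivalently on $\sum_{(i,\Oo)\in\mathcal{Y}}(|\Oo|-i)$'' is wrong, since that sum is actually \emph{invariant} under the split at Line~22 (both sides equal $|X_{\tt check}|-d$); the valid well-founded order is the multiset of values $|\Oo|-i$ over pairs with $i\neq 0$ (or simply reverse induction on call depth, as the paper does), which does strictly decrease.
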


Note that the reverse of Theorem~\ref{thm:HOMV} may not hold.
To prove Theorem~\ref{thm:HOMV}, we start with the following lemmas.
First, we show that the above three properties always hold.

\begin{lemma}\label{lem:3property}
In Algorithm~\ref{alg:DivHOMaRer}, at each call {\sf Explore}$(\mathcal{Y})$, the following three properties are
hold:
\begin{itemize}
  \item[(1)] $\sum_{\{(i,\Oo)\in \mathcal{Y}\}} i=d$;
  \item[(2)] $\biguplus\{\Oo\mid (i,\Oo)\in \mathcal{Y}\}=X_{\tt check}$;
  \item[(3)] and $|\Oo|\geq i$ for all $(i,\Oo)\in \mathcal{Y}$.
\end{itemize}
\end{lemma}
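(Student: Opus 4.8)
The plan is to prove Lemma~\ref{lem:3property} by induction on the depth of the recursion of {\sf Explore}. The base case concerns the top-level call {\sf Explore}$(\{(d, X_{\tt check})\})$ made at Line 9 of Algorithm~\ref{alg:DivHOMaRer}. Here I would simply check the three properties directly: property (1) holds since $\sum i = d$ trivially; property (2) holds since the single pair $(d, X_{\tt check})$ gives the partition $\{X_{\tt check}\}$ of $X_{\tt check}$; and property (3) holds by the assumption $|X_{\tt check}| \geq d$ stated just after Line 9 in the text.

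For the inductive step, I would assume the three properties hold for some call {\sf Explore}$(\mathcal{Y})$ and show they are preserved in each recursive call made at Line 22. Recall that such a call passes the argument $(\mathcal{Y}\setminus\mathcal{Y}')\cup\bigcup_{(i,\Oo)\in \mathcal{Y}'}\{(i-i_j,\mathcal{C}_{i,\Oo}),(i_j,\Oo\setminus \mathcal{C}_{i,\Oo})\}$, where $\mathcal{Y}' = \{(i,\Oo)\in\mathcal{Y}\mid |\Oo|>i\wedge i\neq 0\}$ and each $i_j$ ranges over $0 \leq i_j \leq \min(i, |\Oo\setminus\mathcal{C}_{i,\Oo}|)$. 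For property (1): each pair $(i,\Oo)\in\mathcal{Y}'$ is replaced by two pairs whose first components sum to $(i-i_j)+i_j = i$, so the total sum over the new argument still equals $\sum_{(i,\Oo)\in\mathcal{Y}} i = d$ by the induction hypothesis. For property (2): since $\mathcal{C}_{i,\Oo}\subseteq\Oo$ (chosen at Line 13), the sets $\mathcal{C}_{i,\Oo}$ and $\Oo\setminus\mathcal{C}_{i,\Oo}$ partition $\Oo$; replacing each $\Oo$ (for $(i,\Oo)\in\mathcal{Y}'$) by these two parts preserves the disjoint union, so $\biguplus$ of the second components is still $X_{\tt check}$. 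For property (3): for the untouched pairs in $\mathcal{Y}\setminus\mathcal{Y}'$ the bound $|\Oo|\geq i$ is inherited; for a split pair, $|\mathcal{C}_{i,\Oo}| = i \geq i-i_j$ (since $i_j\geq 0$, using that $\mathcal{C}_{i,\Oo}$ has size exactly $i$ by Line 13), and $|\Oo\setminus\mathcal{C}_{i,\Oo}| \geq i_j$ since we required $i_j \leq \min(i, |\Oo\setminus\mathcal{C}_{i,\Oo}|)$.

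The one place that needs a little care is verifying that the subset $\mathcal{C}_{i,\Oo}$ chosen at Line 13 indeed has size exactly $i$ and is a subset of $\Oo$; this is guaranteed because property (3) of the \emph{current} call ($|\Oo|\geq i$) ensures such a subset exists, so the choice at Line 13 is well-defined. I would make this observation explicit at the start of the inductive step, since properties (1) and (3) together are exactly what guarantee the function is well-formed and the bookkeeping at Lines 19--22 makes sense. The main (minor) obstacle is just being careful about the corner cases where $i_j = 0$ or $i_j = i$ or $\mathcal{C}_{i,\Oo} = \Oo$, and confirming that pairs of the form $(0, \Oo)$ or $(i, \Oo)$ with $|\Oo| = i$ (which drop out of $\mathcal{Y}'$) still satisfy all three properties trivially — indeed property (3) is immediate for them and they contribute correctly to (1) and (2). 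No deep argument is required; the lemma is essentially a loop-invariant bookkeeping check, and the proof is a routine structural induction.
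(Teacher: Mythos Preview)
Your proposal is correct and follows essentially the same inductive argument as the paper. One small inaccuracy: at the point of the recursive call (Line~22), $\mathcal{C}_{i,\Oo}$ may have been enlarged by the loop at Lines~15--17, so you only have $|\mathcal{C}_{i,\Oo}|\geq i$ rather than $|\mathcal{C}_{i,\Oo}|=i$; this does not affect your conclusion $|\mathcal{C}_{i,\Oo}|\geq i-i_j$, but you should state the bound as an inequality (as the paper does, citing Lines~13 and~17).
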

\begin{proof}
Let $\mathcal{Y}_\ell$ denote the parameter $\mathcal{Y}$ at the $\ell^{th}$ call of {\sf Explore}.
Let us apply induction on $\ell$.
The base case $\ell=1$ immediately follows from the fact that $\mathcal{Y}_1=\{(d,X_{\tt check})\}$ (note that we assumed $|X_{\tt check}|\geq d$).
It remains to prove the inductive step.
Suppose the result holds at $\ell>1$ and $\mathcal{Y}_{\ell+1}=(\mathcal{Y}_\ell\setminus\mathcal{Y}')\cup\{(i-i_j,\mathcal{C}_{i,\Oo}),(i_j,\Oo\setminus\mathcal{C}_{i,\Oo}) \mid (i,\Oo)\in \mathcal{Y}'\}$, where $\mathcal{Y}'=\{(i,\Oo)\in \mathcal{Y}_\ell \mid |\Oo|>i\wedge i\neq 0\}$.

\begin{itemize}
  \item By applying the induction hypothesis, we get that $\sum\{i\mid (i,\Oo)\in \mathcal{Y}_{\ell}\}=d$. Since
\[\begin{array}{ll}
  \sum\{i\mid (i,\Oo)\in \mathcal{Y}_{\ell+1}\} &  =  \sum\{i\mid (i,\Oo)\in \mathcal{Y}_\ell\setminus\mathcal{Y}'\}+\sum\{i-i_j,i_j\mid (i,\Oo)\in \mathcal{Y}'\}\\
  &  = \sum\{i\mid (i,\Oo)\in \mathcal{Y}_\ell\setminus\mathcal{Y}'\}+\sum\{i\mid (i,\Oo)\in \mathcal{Y}'\}\\
  & =  \sum\{i\mid (i,\Oo)\in \mathcal{Y}_{\ell}\},\\
\end{array}\]
we conclude the proof of Item (1).

  \item By applying the induction hypothesis, we get that $\biguplus\{\Oo\mid (i,\Oo)\in \mathcal{Y}_\ell\}=X_{\tt check}$.
  Since \[\begin{array}{ll}
  \biguplus\{\Oo\mid (i,\Oo)\in \mathcal{Y}_{\ell+1}\} & =   \biguplus\{\Oo\mid (i,\Oo)\in \mathcal{Y}_\ell\setminus\mathcal{Y}'\}\uplus\biguplus\{\mathcal{C}_{i,\Oo},\Oo\setminus\mathcal{C}_{i,\Oo}\mid (i,\Oo)\in \mathcal{Y}'\}\\
  &  = \biguplus\{\Oo\mid (i,\Oo)\in \mathcal{Y}_\ell\setminus\mathcal{Y}'\}\uplus\biguplus \{\Oo\mid (i,\Oo)\in \mathcal{Y}'\}\\
  &  = \biguplus\{\Oo\mid (i,\Oo)\in \mathcal{Y}_\ell\},\\
\end{array}\]

we conclude the proof of Item (2).

  \item By applying the induction hypothesis, $|\Oo|\geq i$ for all $(i,\Oo)\in \mathcal{Y}_\ell\setminus\mathcal{Y}'$. For each pair $(i,\Oo)\in \mathcal{Y}_{\ell}$, according to Lines 13 and 17, $|\mathcal{C}_{i,\Oo}|\geq i$ , hence $|\mathcal{C}_{i,\Oo}|\geq i-i_j$.
Since $0\leq i_j\leq \min(i,|\Oo\setminus\mathcal{C}_{i,\Oo}|)$, we get that
$|\Oo\setminus\mathcal{C}_{i,\Oo}|\geq i_j$. We conclude the proof of Item (3).
\end{itemize}
\end{proof}

We now prove the termination of Algorithm~\ref{alg:DivHOMaRer}.
\begin{lemma}\label{lem:terminate}
Algorithm~\ref{alg:DivHOMaRer} always terminates.
\end{lemma}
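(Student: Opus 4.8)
The plan is to reduce termination of the whole algorithm to termination of its sole recursive procedure, {\sf Explore}, and to prove the latter by well-founded recursion: I will attach to the argument $\mathcal{Y}$ of {\sf Explore} a measure $\mu(\mathcal{Y})$ lying in a well-founded order and show that $\mu$ strictly decreases at every recursive call. This suffices because a single invocation of {\sf Explore} does only finitely much work before recursing --- the loop at Lines~12--13 and the loop at Lines~15--17 range over the finite sets $\mathcal{Y}$ and $\Oo\setminus\mathcal{C}_{i,\Oo}$, each call to {\sf Check} performs boundedly many applications of the finitely many (polynomial-time checkable) type-inference rules, and the recursive-call loop enumerates only finitely many tuples $(i_j)_{(i,\Oo)\in\mathcal{Y}'}$ --- and because $\mathcal{Y}$ stays finite throughout (immediate by induction, since each recursive argument is obtained from $\mathcal{Y}$ by replacing finitely many pairs by two new pairs each, and by Lemma~\ref{lem:3property} the invariant $|\Oo|\geq i$ makes the choice at Line~13 always possible).

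For the measure, call a pair $(i,\Oo)\in\mathcal{Y}$ \emph{active} if $i\geq 1$ and $|\Oo|>i$; these are exactly the pairs collected into $\mathcal{Y}'$ at Line~19, hence the only ones ever partitioned. Let $\mu(\mathcal{Y})$ be the finite multiset of natural numbers $\{\!\{\,|\Oo|\ \mid\ (i,\Oo)\in\mathcal{Y}\text{ active}\,\}\!\}$ ordered by the Dershowitz--Manna multiset ordering, which is well-founded because $<$ on $\mathbb{N}$ is.

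The heart of the proof is that $\mu$ strictly decreases at each recursive call. Fix a call and a tuple $(i_j)_{(i,\Oo)\in\mathcal{Y}'}$ with $\sum i_j\neq 0$: every active pair $(i,\Oo)$ is replaced by $(i-i_j,\mathcal{C}_{i,\Oo})$ and $(i_j,\Oo\setminus\mathcal{C}_{i,\Oo})$, and inactive pairs are untouched. From Line~13, together with the fact that Lines~15--17 can only enlarge $\mathcal{C}_{i,\Oo}$, we get $i\leq|\mathcal{C}_{i,\Oo}|\leq|\Oo|$, and the bound $i_j\leq\min(i,|\Oo\setminus\mathcal{C}_{i,\Oo}|)$ gives $|\mathcal{C}_{i,\Oo}|\leq|\Oo|-i_j$. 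If $i_j\geq 1$, the first child's set has size $|\mathcal{C}_{i,\Oo}|\leq|\Oo|-i_j<|\Oo|$ and the second child's set has size $|\Oo|-|\mathcal{C}_{i,\Oo}|\leq|\Oo|-i<|\Oo|$, so every still-active child of this split carries a value strictly below $|\Oo|$; if $i_j=0$, the only possibly-active child is $(i,\mathcal{C}_{i,\Oo})$, carrying $|\mathcal{C}_{i,\Oo}|\leq|\Oo|$, with equality precisely in the ``stuck'' case $\mathcal{C}_{i,\Oo}=\Oo$. Writing $X$ for the multiset of the values $|\Oo|$ of the \emph{non-stuck} active pairs and $Y$ for the multiset of values carried by the active children of their splits, one has $\mu(\widehat{\mathcal{Y}})=(\mu(\mathcal{Y})\setminus X)\cup Y$, every element of $Y$ is strictly below some element of $X$, and $X\neq\emptyset$ since $\sum i_j\neq 0$ forces some pair to have $i_j\geq 1$ and such a pair is necessarily non-stuck ($i_j\leq|\Oo\setminus\mathcal{C}_{i,\Oo}|$ excludes $\mathcal{C}_{i,\Oo}=\Oo$). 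Hence $\mu(\widehat{\mathcal{Y}})$ is Dershowitz--Manna-smaller than $\mu(\mathcal{Y})$, which completes the decreasing-measure step.

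The main obstacle is exactly this ``stuck'' case, which is also why the naive measure $\sum_{(i,\Oo)\in\mathcal{Y}}|\Oo|$ does not work: when {\sf Check} keeps succeeding in Lines~15--17 and $\mathcal{C}_{i,\Oo}$ grows to all of $\Oo$, the $i_j=0$ partition returns the child $(i-i_j,\mathcal{C}_{i,\Oo})=(i,\Oo)$, i.e.\ the parent pair itself, so the total size is unchanged. The multiset argument tolerates this because a stuck pair contributes the \emph{same} element to $\mu$ before and after (so it never spoils the Dershowitz--Manna comparison), and because a stuck pair on its own cannot even trigger a recursive call --- if $\mathcal{Y}'=\{(i,\Oo)\}$ with $|\Oo\setminus\mathcal{C}_{i,\Oo}|=0$ then $\sum i_j=0$ is forced --- so every genuine recursive step must strictly shrink some non-stuck active pair. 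The delicate point to get right in the write-up is precisely this bookkeeping: verifying that stuck pairs contribute identically to $\mu$ on both sides, so that the strict decrease witnessed by the $i_j\geq 1$ pair is not cancelled out.
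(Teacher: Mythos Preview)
Your proof is correct and in fact more rigorous than the paper's. The paper's argument is essentially an informal ``sizes get smaller'' claim: it asserts, by appeal to Lemma~\ref{lem:3property}(1,2), that eventually every pair satisfies $|\Oo|=i$ or $i=0$, without giving a formal ranking. Your Dershowitz--Manna multiset measure on the sizes of \emph{active} pairs makes this precise.

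The genuine added value of your route is the explicit identification and handling of the ``stuck'' case (when {\sf Check} keeps succeeding and $\mathcal{C}_{i,\Oo}$ grows to all of $\Oo$, so the partition reinjects $(i,\Oo)$ unchanged). The paper's proof glosses over this, and the naive total-size measure $\sum|\Oo|$ indeed fails here, as you observe. Your multiset argument absorbs stuck pairs for free---they contribute identically on both sides---while the side condition $\sum i_j\neq 0$ forces at least one non-stuck pair with $i_j\geq 1$, witnessing the strict multiset decrease. That is exactly the right level of care, and the bookkeeping you flag as ``delicate'' is straightforward once you state clearly that inactive pairs in $\mathcal{Y}\setminus\mathcal{Y}'$ are carried over untouched and can contribute no active children.
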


\begin{proof} It suffices to show that the recursive procedure call of {\sf Explore}  always terminates.
Let $\mathcal{Y}_\ell$ denote the parameter $\mathcal{Y}$ at the $\ell^{th}$ call of {\sf Explore}.
By Lemma~\ref{lem:3property}(3), $|\Oo|\geq i$ for all $(i,\Oo)\in \mathcal{Y}_\ell$.
\begin{itemize}
  \item If ${|\Oo|}=i$ or $i=0$  for all $(i,\Oo)\in \mathcal{Y}_\ell$, then $\mathcal{Y}'=\emptyset$.
In this case, {\sf Explore} will not be called at Line 22 during the $\ell^{th}$ call. Hence, Algorithm~\ref{alg:DivHOMaRer} terminates.

  \item Otherwise, there are some $(i,\Oo)\in \mathcal{Y}_\ell$ such that ${|\Oo|}>i$ and $i\neq 0$.
Then, such pairs $(i,\Oo)$ are always partitioned into $(i-i_j,\mathcal{C}_{i,\Oo})$ and $(i_j,\Oo\setminus\mathcal{C}_{i,\Oo})$.
Since $\sum_{(i,\Oo)\in \mathcal{Y}'} i_j\neq 0$, by Lemma~\ref{lem:3property}(1 and 2),
there exists $\ell'> \ell$ for $\mathcal{Y}_{\ell'}$ such that ${|\Oo|}=i$ or $i=0$  for all $(i,\Oo)\in \mathcal{Y}_{\ell'}$. Hence, Algorithm~\ref{alg:DivHOMaRer} always terminates.
\end{itemize}
\end{proof}

We show the soundness of Algorithm~\ref{alg:DivHOMaRer}.

\begin{lemma}\label{lem:cover}
For every subset $\Oo\subseteq X_{\tt check}$ such that ${|\Oo|}=d$,
$\Oo$ is covered by Algorithm~\ref{alg:DivHOMaRer}, namely,
either $\Oo$ is added into ${\tt PLS}$,
or there exists a subset $\Oo'$ such that $\vdash \Oo:\sid$ is valid and  $\Oo\subseteq\Oo'$.
\end{lemma}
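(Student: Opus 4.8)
The plan is to establish a stronger statement about the function {\sf Explore} and then specialize it to the top-level call. Say that $(\Oo_i)_{(i,\Oo)\in\mathcal{Y}}$ is a \emph{selection} for $\mathcal{Y}$ if $\Oo_i\subseteq\Oo$ and $|\Oo_i|=i$ for each $(i,\Oo)\in\mathcal{Y}$, and call a set \emph{covered} if it lies in ${\tt PLS}$ or is contained in some set on which {\sf Check} has returned $\top$. I will prove: for every call {\sf Explore}$(\mathcal{Y})$ made by the algorithm and every selection $(\Oo_i)$ for $\mathcal{Y}$, the set $\bigcup_{(i,\Oo)\in\mathcal{Y}}\Oo_i$ is covered. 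Lemma~\ref{lem:cover} is then immediate, since a selection for the top-level parameter $\{(d,X_{\tt check})\}$ is exactly a size-$d$ subset $\Oo\subseteq X_{\tt check}$, and ``{\sf Check} returned $\top$ on $\Oo'$'' is what ``$\vdash\Oo':\sid$ is valid'' means for the algorithm ({\sf Check} succeeds only by deriving type $\rud$ or $\sid$, and $\rud$ is a subtype of $\sid$).

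The stronger statement is proved by induction on the recursion tree rooted at {\sf Explore}$(\mathcal{Y})$, which is finite by Lemma~\ref{lem:terminate}, the structural invariants of Lemma~\ref{lem:3property} guaranteeing that each recursive call is legitimate. Fix $\mathcal{Y}$ and a selection $(\Oo_i)$, and let $\mathcal{C}^{\star}_{i,\Oo}$ be the value of $\mathcal{C}_{i,\Oo}$ after the growth loop on Lines 15--17 (so $\mathcal{C}^{\star}_{i,\Oo}=\mathcal{C}_{i,\Oo}$, of size $i$, whenever {\sf Check} on Line 14 fails, and $|\mathcal{C}^{\star}_{i,\Oo}|\ge i$ in general). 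If $\Oo_i\subseteq\mathcal{C}^{\star}_{i,\Oo}$ for \emph{all} pairs $(i,\Oo)\in\mathcal{Y}$, there are two sub-cases. When {\sf Check} on Line 14 returned $\bot$ there is no growth, so $\Oo_i=\mathcal{C}_{i,\Oo}$ for every pair and $\bigcup_{(i,\Oo)\in\mathcal{Y}}\Oo_i=\bigcup_{(i,\Oo)\in\mathcal{Y}}\mathcal{C}_{i,\Oo}$ is exactly the set added to ${\tt PLS}$ on Line 18. When {\sf Check} on Line 14 returned $\top$, a short sub-induction on Lines 15--17 (each iteration invokes {\sf Check} on the current union enlarged by one variable, updating $\mathcal{C}_{i,\Oo}$ only when it succeeds) shows that {\sf Check} succeeds on $\Oo':=\bigcup_{(i,\Oo)\in\mathcal{Y}}\mathcal{C}^{\star}_{i,\Oo}$, and $\bigcup_{(i,\Oo)\in\mathcal{Y}}\Oo_i\subseteq\Oo'$, so the set is covered.

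It remains to treat selections with $\Oo_i\not\subseteq\mathcal{C}^{\star}_{i,\Oo}$ for some pair; note that Lines 19--22 are executed regardless of the Line-14 outcome. For each $(i,\Oo)$ put $i_j:=|\Oo_i\cap(\Oo\setminus\mathcal{C}^{\star}_{i,\Oo})|$. A pair with $i_j>0$ has $|\Oo|>|\mathcal{C}^{\star}_{i,\Oo}|\ge i$ and $i\ne 0$, hence lies in $\mathcal{Y}'$; for the other pairs $\Oo_i\subseteq\mathcal{C}^{\star}_{i,\Oo}$ and $i_j=0$. Thus $(i_j)_{(i,\Oo)\in\mathcal{Y}'}$ satisfies $0\le i_j\le\min(i,|\Oo\setminus\mathcal{C}^{\star}_{i,\Oo}|)$ and $\sum_{(i,\Oo)\in\mathcal{Y}'}i_j\ge 1$, so it is one of the combinations enumerated on Line 21, triggering the recursive call on $\mathcal{Y}'':=(\mathcal{Y}\setminus\mathcal{Y}')\cup\bigcup_{(i,\Oo)\in\mathcal{Y}'}\{(i-i_j,\mathcal{C}^{\star}_{i,\Oo}),(i_j,\Oo\setminus\mathcal{C}^{\star}_{i,\Oo})\}$. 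Splitting each $\Oo_i$ with $(i,\Oo)\in\mathcal{Y}'$ into $\Oo_i\cap\mathcal{C}^{\star}_{i,\Oo}$ and $\Oo_i\cap(\Oo\setminus\mathcal{C}^{\star}_{i,\Oo})$, and keeping $\Oo_i$ for $(i,\Oo)\in\mathcal{Y}\setminus\mathcal{Y}'$, produces a selection for $\mathcal{Y}''$ whose union is again $\bigcup_{(i,\Oo)\in\mathcal{Y}}\Oo_i$; the induction hypothesis finishes the case.

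The main obstacle is the bookkeeping in the last paragraph: one must check that every way in which a selection can ``stick out'' of the grown sets $\mathcal{C}^{\star}_{i,\Oo}$ corresponds to an $(i_j)$-tuple that Line 21 actually iterates over---this is where the guard $\sum i_j\neq 0$ and the bounds $i_j\le\min(i,|\Oo\setminus\mathcal{C}^{\star}_{i,\Oo}|)$ are used---and one must keep straight that membership is tested against the \emph{grown} set $\mathcal{C}^{\star}_{i,\Oo}$ rather than the size-$i$ set picked on Line 13. The remaining ingredients---preservation of the three invariants of Lemma~\ref{lem:3property} so that every recursive call is well formed, and termination (Lemma~\ref{lem:terminate}) so that the induction is well founded---are already in hand.
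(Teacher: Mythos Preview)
Your proof is correct and follows essentially the same route as the paper's: strengthen the claim to ``every selection for $\mathcal{Y}$ is covered by the call {\sf Explore}$(\mathcal{Y})$'' (the paper calls this set $Cover(\mathcal{Y})$) and argue by structural induction on the recursion tree. Your write-up is in fact slightly more careful than the paper's in two places: you explicitly split on whether the Line-14 {\sf Check} returned $\top$ or $\bot$ (the paper's inductive step tacitly assumes the $\top$ branch when it asserts ``$\vdash\bigcup\mathcal{C}_{i,\Oo'}:\sid$ is valid''), and you distinguish the grown sets $\mathcal{C}^{\star}_{i,\Oo}$ from the size-$i$ sets picked at Line~13, which is exactly what the recursive call at Line~22 uses.
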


\begin{proof} Given a set $\mathcal{Y}$ of pairs,
let $Cover(\mathcal{Y})$ denote the set of subsets $\Oo\subseteq X_{\tt check}$ such that ${|\Oo|}=d$ and
$\Oo$ contains $i$ elements of $\Oo'$  for each pair $(i,\Oo')\in \mathcal{Y}$.
It suffices to show that for every call {\sf Explore}$(\mathcal{Y})$,
each subset $\Oo\in Cover(\mathcal{Y})$ is covered.

Let $\mathcal{Y}_\ell$ denote the parameter $\mathcal{Y}$ at the $\ell^{th}$ call of {\sf Explore}.
We apply induction on $\ell$, where the base case is the largest $\ell$. Note that such $\ell$ exists by Lemma~\ref{lem:terminate}.

\smallskip
\noindent{\bf Base case}. The base case is the largest $\ell$ such that $\mathcal{Y}'=\emptyset$ at the $\ell^{th}$ call of {\sf Explore}.
Since $\mathcal{Y}'=\{(i,\Oo')\in \mathcal{Y}_\ell \mid |\Oo'|>i\wedge i\neq 0\}=\emptyset$, by Lemma~\ref{lem:3property}(3),
${|\Oo'|}=i$ or $i=0$ for all $(i,\Oo')\in \mathcal{Y}_\ell$.
By Lemma~\ref{lem:3property}(1 and 2), $Cover(\mathcal{Y})$ is singleton set.
Suppose $Cover(\mathcal{Y})=\{\Oo\}$, then $\Oo$ is covered.
Indeed, either $\vdash \Oo:\sid$ is valid or $\Oo$ is added into ${\tt PLS}$.

\smallskip
\noindent  {\bf Inductive step}. There exists some pair $(i,\Oo')\in \mathcal{Y}_\ell$ such that ${|\Oo'|}>i$ and $i\neq 0$.
For every subset $\Oo\in Cover(\mathcal{Y}_\ell)$,
either $\Oo\subseteq \bigcup_{(i,\Oo')\in \mathcal{Y}_\ell} \mathcal{C}_{i,\Oo'}$
or $\Oo\not\subseteq \bigcup_{(i,\Oo')\in \mathcal{Y}_\ell} \mathcal{C}_{i,\Oo'}$.

\begin{itemize}
\item
If $\Oo\subseteq \bigcup_{(i,\Oo')\in \mathcal{Y}_\ell} \mathcal{C}_{i,\Oo'}$, then
$\Oo$ is covered. Indeed, $\vdash \bigcup_{(i,\Oo')\in \mathcal{Y}_\ell}\mathcal{C}_{i,\Oo'}:\sid$ is valid.

\smallskip
\item Otherwise $\Oo\not\subseteq \bigcup_{(i,\Oo')\in \mathcal{Y}_\ell} \mathcal{C}_{i,\Oo'}$, then
$\Oo$ contains at least one variable from $\Oo'\setminus \mathcal{C}_{i,\Oo'}$ for some pair $(i,\Oo')\in \mathcal{Y}_\ell$,
i.e., $\Oo\cap (\Oo'\setminus \mathcal{C}_{i,\Oo'})\neq \emptyset$.
There must exist a combination of values $i_j: 0\leq i_j\leq \min(i,|\Oo'\setminus\mathcal{C}_{i,\Oo'}|)$
for $(i,\Oo')\in \mathcal{Y}'$ such that
\[\Oo\in cover((\mathcal{Y}_\ell\setminus\mathcal{Y}')\cup\{(i-i_j,\mathcal{C}_{i,\Oo'}),(i_j,\Oo'\setminus\mathcal{C}_{i,\Oo'}) \mid (i,\Oo')\in \mathcal{Y}'\}).\]
By applying induction hypothesis, the subset $\Oo$ is covered.
\end{itemize}
We complete the proof.
\end{proof}

\begin{proof}[Proof of Theorem~\ref{thm:HOMV}]
%
If ${\tt PLS}=\emptyset$, then
by Lemma~\ref{lem:cover}, $\vdash \Oo:\sid$ is valid
for every size-$d$ subset $\Oo\subseteq X_{\tt check}$. Hence $P$ is order-$d$ secure.

On the other hand, for every size-$d$ subset $\Oo\subseteq X_{\tt check}$,
if $P$ is $\Oo$-leaky, then $\vdash \Oo:\sid$ is not valid.
By Lemma~\ref{lem:cover},
all the size-$d$ subsets $\Oo\subseteq X_{\tt check}$ are covered by Algorithm~\ref{alg:DivHOMaRer},
hence, $\Oo$ is added into ${\tt PLS}$.
\end{proof}

\begin{example}
We demonstrate Algorithm~\ref{alg:DivHOMaRer} on the motivating example (cf.\ Section~\ref{sec:motivatingexample}) for $d=2$.
First of all,  $X_{\tt check}=X_o=\{x',A,y_0,\cdots,y_5,r,r'\}$ as $X_p=\emptyset$.
After applying the transformation $\simply_{\Alg}$,
$\lambda$ and $\pi$ are given below:
\begin{center}
\begin{tabular}{lll}
 $\lambda(y_4)=r'\oplus k$,   & $\lambda(y_5)=(r'\oplus k)- (r' \oplus r)$, \\
 $\lambda(A)=\big(r'\oplus k)- (r' \oplus r)\big) \oplus
    \big((((k\oplus r) \oplus r')-r') \oplus (k\oplus r)\big)$.
\end{tabular}
\end{center}

\begin{center}
\begin{tabular}{lllll}
  $\pi(x')=\{r\}$, &  $\pi(r)=\{r\}$, & $\pi(r')=\{r'\}$,  & $\pi(y_0)=\{r,r'\}$,  & $\pi(y_1)=\{r\}$, \\
  $\pi(y_2)=\emptyset$,  & $\pi(y_3)=\{r,r'\}$, & $\pi(y_4)=\{r'\}$, & $\pi(y_5)=\{r\}$, & $\pi(A)=\emptyset$.
\end{tabular}
\end{center}
{\sc HOME} invokes {\sf Explore}$(\{(2,X_o)\})$.
Suppose  {\sf Explore} chooses $\{r,r'\}$ at Line 13, i.e., $\mathcal{C}_{2,X_o}=\{r,r'\}$
then, $\vdash \mathcal{C}_{2,X_o}:\rud$ is valid, namely, {\sf Check}$(\{\mathcal{C}_{2,X_o}\})$
will return $\top$. The loop at Lines 15-17
will iteratively test $x',y_0,\cdots,y_5,A$. Among them,
only $y_3$ can be added into $\mathcal{C}_{2,X_o}$ according to rule ({\sc Sid}$_2$).
Now, we can deduce that all size-2 subsets $\Oo\subseteq \mathcal{C}_{2,X_o}=\{r,r',y_3\}$
have type $\rud$ or $\sid$.

It is easy to see that $\mathcal{Y}'=\{(2,X_o)\}$, as $|X_o|>2$.
Therefore, at Line 22, the following two procedure calls will be made:
\begin{itemize}
  \item {\bf Call$_1$:}  {\sf Explore}$(\mathcal{Y}_1)$, where $\mathcal{Y}_1=\{(0,\{r,r',y_3\}),(2,\{x',A,y_0,y_1,y_2,y_4,y_5\})\}$
  \item {\bf Call$_2$:}  {\sf Explore}$(\mathcal{Y}_2)$, where $\mathcal{Y}_2= \{(1,\{r,r',y_3\}),(1,\{x',A,y_0,y_1,y_2,y_4,y_5\})\}$.
\end{itemize}

For {\bf Call$_1$}, suppose {\sf Explore} chooses $\{x',y_0\}$ at Line 13, i.e., $\mathcal{C}_{2,{\{x',A,y_0,y_1,y_2,y_4,y_5\}}}=\{x',y_0\}$.
We derive $\vdash \{x',y_0\}:\rud$ by applying rule ({\sc Rud}) with $\Oo_1=\{x'\}$ and $\Oo_2=\{y_0\}$.
The loop at Lines 15-17 will add $y_1$ into $\mathcal{C}_{2,{\{x',A,y_0,y_1,y_2,y_4,y_5\}}}$, as
replacing the common sub-expression $k\oplus r$ in $\{\Expr(x'),\Expr(y_0),\Expr(y_1)\}$ with $r$ using the transformation $\simply_{\Dom}$,
i.e., $\Expr(x')$, $\Expr(y_0)$ and $\Expr(y_1)$ respectively becoming $r$, $r\oplus r'$, $(r\oplus r')-r'$,
allow us to derive $\vdash \{x',y_0,y_1\}:\sid$. By replaying this transformation on $\Expr(y_2)$,
$y_2$ can also be added into $\mathcal{C}_{2,{\{x',A,y_0,y_1,y_2,y_4,y_5\}}}$, which becomes $\{x',y_0,y_1,y_2\}$.
However, $y_4,y_5$ and $A$ cannot be added into $\mathcal{C}_{2,{\{x',A,y_0,y_1,y_2,y_4,y_5\}}}$.

For  {\bf Call$_2$}, suppose {\sf Explore} chooses $\{r\}$ from $\{r,r',y_3\}$ and $x'$ from $\{x',A,y_0,y_1,y_2,y_4,y_5\}$
at Line 13. We cannot derive any type judgement $\vdash \{r,x'\}:\tau$ for $\tau\in \type$.
So $\{r,x'\}$ is added to ${\tt PLS}$.
Finally,
\[{\tt PLS}=\left\{
\begin{array}{ccccc}
 \{r,r',x',y_2,y_3,y_4,y_5\}\times \{A\}& \cup &  \{r',y_2\}\times \{y_4\} &\cup& \\
 \{r,x',y_0,y_1,y_2\}\times \{y_5\} &\cup &   \{r\}\times \{x',y_1,y_2\}& \cup  &\{y_2,y_3\}
\end{array}
\right\}.\]
 \end{example}

\section{Model-Counting and Pattern Matching based Methods}
\label{sec:modelcounting}
We propose three model-counting based methods (cf.\ Section~\ref{sec:smtmethod}---\ref{sec:gpu}) for resolving potential leaky observable sets prescribed by type inference algorithm.
Generally, model-counting is very costly, so we propose a complementary pattern matching based method
to efficiently resolve potential leaky observable sets from known sets, avoiding
a vast amount of model-counting usage.

\subsection{SMT-based Method}\label{sec:smtmethod}
We first lift the SMT-based method~\cite{GXZSC19} from first-order to higher-order.

Recall that $P$ is $\Oo$-leaky iff
$\sem{P}^\Oo_{\eta_1}\neq\sem{P}^\Oo_{\eta_2}$ for some pair $(\eta_1,\eta_2)\in \Theta^2_{=X_p}$.
Let $\Oo=\{x_1,\ldots,x_m\}$. For every valuation $\eta\in \Theta$ and tuple of values
$(c_1,\ldots,c_m)\in \Bb^m$,
let $\sharp_{\eta}(x_1=c_1,\cdots,x_m=c_m)$ denotes the number of assignments $\eta_r:X_r\rightarrow\Bb$ such that
for all $ 1\leq j\leq m$, $\Expr(x_j)$ evaluates to $c_j$ under $\eta$ and $\eta_r$.
Then,  $\Oo$-leaky can be characterized as the following 
logical formula:

 \begin{equation}\label{eq:omega}
\begin{split}
\Omega^\Oo:=\exists (\eta_1,\eta_2)\in \Theta^2_{=X_p} & \exists (c_1,\cdots,c_m)\in\Bb^m. \\
 &\big(\sharp_{\eta_1}(x_1=c_1,\cdots,x_m=c_m)\neq\sharp_{\eta_2}(x_1=c_1,\cdots,x_m=c_m)\big)
\end{split}
\end{equation}

\begin{proposition}\label{prop:logicalchara}
$\Omega^\Oo$ is satisfiable iff
$P$ is $\Oo$-leaky.
\end{proposition}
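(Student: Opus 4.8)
The plan is to unfold the definition of the marginal distribution $\sem{P}^\Oo_\eta$, express it in terms of the counting function $\sharp_\eta$, and then reduce equality of two such distributions to a pointwise comparison of counts; once this is done, the claim becomes a routine rewriting of quantifiers matching the shape of $\Omega^\Oo$.

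First I would recall from Section~\ref{sec:pre} that, for a fixed valuation $\eta\in\Theta$, the variables in $X_r$ are interpreted as mutually independent variables, each uniform over $\Bb$, so there are exactly $|\Bb|^{|X_r|}$ equiprobable assignments $\eta_r:X_r\rightarrow\Bb$. Hence, for every tuple $(c_1,\dots,c_m)\in\Bb^m$,
\[
\sem{P}^\Oo_\eta(c_1,\dots,c_m)=\frac{\sharp_\eta(x_1=c_1,\dots,x_m=c_m)}{|\Bb|^{|X_r|}},
\]
because $\sharp_\eta(x_1=c_1,\dots,x_m=c_m)$ counts precisely those assignments $\eta_r$ under which each $\Expr(x_j)$ evaluates to $c_j$ (under $\eta$ and $\eta_r$). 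The crucial point is that the normalising denominator $|\Bb|^{|X_r|}$ is a fixed positive constant, independent of $\eta$, of $\Oo$, and of the tuple $(c_1,\dots,c_m)$.

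Next I would invoke the elementary fact that two probability distributions over the finite set $\Bb^m$ coincide iff they assign the same value to every point. Combining this with the displayed identity and cancelling the common positive denominator gives, for any $(\eta_1,\eta_2)\in\Theta^2_{=X_p}$,
\[
\sem{P}^\Oo_{\eta_1}=\sem{P}^\Oo_{\eta_2}\iff \forall(c_1,\dots,c_m)\in\Bb^m.\ \sharp_{\eta_1}(x_1=c_1,\dots,x_m=c_m)=\sharp_{\eta_2}(x_1=c_1,\dots,x_m=c_m).
\]
Negating both sides turns the universally quantified equality of counts into the existentially quantified disequality that occurs inside $\Omega^\Oo$.

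Finally I would close by matching definitions: by Section~\ref{sec:threatmodel}, $P$ is $\Oo$-leaky iff it is not $\Oo$-$\SI$, i.e.\ iff there exists a pair $(\eta_1,\eta_2)\in\Theta^2_{=X_p}$ with $\sem{P}^\Oo_{\eta_1}\neq\sem{P}^\Oo_{\eta_2}$; by the previous step this is equivalent to the existence of $(\eta_1,\eta_2)\in\Theta^2_{=X_p}$ together with $(c_1,\dots,c_m)\in\Bb^m$ such that $\sharp_{\eta_1}(x_1=c_1,\dots,x_m=c_m)\neq\sharp_{\eta_2}(x_1=c_1,\dots,x_m=c_m)$, which is exactly satisfiability of $\Omega^\Oo$. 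There is no deep obstacle; the only points needing care are making explicit that $\sharp_{\eta_1}$ and $\sharp_{\eta_2}$ are normalised by the \emph{same} constant, so that comparing counts is equivalent to comparing probabilities, and keeping the quantification over $X_p$-equivalent valuation pairs in place throughout, so that public inputs are handled correctly.
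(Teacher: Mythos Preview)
Your proof is correct and follows essentially the same route as the paper: both express $\sem{P}^\Oo_\eta$ as $\sharp_\eta/2^{\kappa|X_r|}$ (you write $|\Bb|^{|X_r|}$, which is the same constant), cancel the common positive denominator, and then match quantifiers with the definition of $\Oo$-leaky. If anything, you are slightly more explicit than the paper in spelling out the pointwise comparison and the role of the shared normalising constant, but there is no substantive difference.
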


\begin{proof}
The program $P$ is $\Oo$-leaky iff  the following formula holds:
\begin{align*}
  \exists (\eta_1,\eta_2)\in \Theta^2_{=X_p}&  \exists (c_1,\cdots,c_m)\in\Bb^m.~ \sem{P}_{\eta_1}^\Oo(c_1,\cdots,c_m)\neq\sem{P}_{\eta_2}^\Oo(c_1,\cdots,c_m)
\end{align*}
Since
$\sem{P}_{\eta}^\Oo(c_1,\cdots,c_m) =\frac{\sharp_{\eta}(x_1=c_1,\cdots,x_m=c_m)}{2^{\kappa\times |X_r|}}$
for $\eta\in \{\eta_1, \eta_2\}$, then the program $P$ is $\Oo$-leaky iff the following formula holds:
\begin{align*}
  \exists (\eta_1,\eta_2)\in \Theta^2_{=X_p}&  \exists (c_1,\cdots,c_m)\in\Bb^m.~ \frac{\sharp_{\eta_1}(x_1=c_1,\cdots,x_m=c_m)}{2^{\kappa\times |X_r|}} \neq\frac{\sharp_{\eta_2}(x_1=c_1,\cdots,x_m=c_m)}{2^{\kappa\times |X_r|}}
\end{align*}
%
The result follows immediately.
\end{proof}

%
We further encode $\Omega^\Oo$
as a first-order logic formula that can be solved by SMT solvers (e.g., Z3~\cite{MB08}).
Suppose $\Expr(x_j)=e_j$ for $1\leq j\leq m$, let $E_\Oo=E^1\uplus E^2$ with $E^{1}=\{e \mid  \Var(e)\cap X_k\neq\emptyset\}$,
and $E^{2}=\{e \mid \Var(e)\cap X_k=\emptyset\}$.
We define the first-order logic formula $\Psi^\Oo$ as
\begin{center}
$\Psi^\Oo:=\left(
\begin{array}{c}
  \left(\bigwedge_{e\in E^{1}}\bigwedge_{f:\RVar(e)\rightarrow \Bb}(\Theta_{e,f}\wedge \Theta'_{e,f})\right)\\
    \bigwedge \\
   \left(\bigwedge_{e\in E^{2}}\bigwedge_{f:\RVar(e)\rightarrow \Bb}\Theta_{e,f}\right)\\
   \bigwedge \\
    \left (\Theta_{v2i}\wedge \Theta_{v2i}' \wedge \Theta_{\neq}\right)
\end{array}
\right)$, where
\end{center}

 \begin{itemize}[topsep=1pt,leftmargin=*]
\item {\bf Program logic ($\Theta_{e,f}$ and $\Theta_{e,f}'$)}:
for every $e=\Expr(x)\in E_\Oo$ and $f:\RVar(e)\rightarrow \Bb$,
$\Theta_{e,f}$ encodes the expression $e$ into a first-order logic formula
and asserts that the value of $e$ under the assignment $f$ is equal to
 a fresh variable $x_f$. (Note there are $2^{|\RVar(e)|}$ distinct conjuncts all of which share the variables $X_p\cup X_k$.)
  \medskip

  $\Theta_{e,f}'$ is similar to $\Theta_{e,f}$ except that $x_f$ and $k\in X_k$
  are replaced by fresh variables $x_f'$ and $k'$. Note that for every $e\in E^2$,
  we do not construct $\Theta_{e,f}'$, as $e\in E^2$ does not have any private variable $k\in X_k$
  and hence $\Theta_{e,f}'$ would be same as $\Theta_{e,f}$.
\medskip

 \item{\bf Vector to integer ($\Theta_{v2i}$ and $\Theta_{v2i}'$)}:
$\Theta_{v2i}$ asserts that for every $f:\bigcup_{e\in E_\Oo}\RVar(e)\rightarrow \Bb$,
a fresh integer variable $I_f$ is $1$
  if the vector $(x_f)_{x\in\Oo}$ is equal to the vector $(c_1,\cdots,c_m)$, otherwise $0$.
  By doing so,   we can count the number of assignments $f$'s of random variables under which
  $(e_1,\cdots,e_m)$ evaluates to $(c_1,\cdots,c_m)$ when variables $x\in X_p\cup X_k$ take some concrete values.
  Formally,
\[\Theta_{v2i}:=\bigwedge_{f:\bigcup_{e\in E_\Oo}\RVar(e)\rightarrow \Bb} ~ \Big(I_f= \big((c_1,\cdots,c_m)==(x_f)_{x\in\Oo}\ ? \ 1 : 0\big)\Big).
\]

  $\Theta_{v2i}'$ is similar to $\Theta_{v2i}$ except that $I_f$ is replaced by $I_f'$,
  and $x_f$ is replaced by $x_f'$ for all $x\in\Oo$ such that $\Expr(x)\in E^1$. Note that
  $k'\in X_k$ may have different value than $k$, but $x\in X_p$ has same value in $\Theta_{v2i}$ and $\Theta_{v2i}'$.
  This conforms to  $(\eta_1,\eta_2)\in \Theta^2_{=X_p}$ in Equ.~(\ref{eq:omega}).
\item{\bf Different sums ($\Theta_{\neq}$)}:  It asserts two sums of assignments $f$'s of random variables for variables $(k)_{k\in X_k}$ and $(k')_{k\in X_K}$ (i.e., integers $I_f$ and $I_f'$)  differ.
This conforms to $\big(\sharp_{\eta_1}(x_1=c_1,\cdots,x_m=c_m)\neq\sharp_{\eta_2}(x_1=c_1,\cdots,x_m=c_m)\big)$ in Equ.~(\ref{eq:omega}).
Formally,
\[
\Theta_{\neq}:=\sum_{f:\bigcup_{e\in E_\Oo}\RVar(e)\rightarrow \Bb} I_f\neq \sum_{f:\bigcup_{e\in E_\Oo}\RVar(e)\rightarrow \Bb} I_f'.
\]
\end{itemize}

\medskip
It is straightforward to get the following proposition,

\begin{proposition}\label{prop:unsat2SMT}
$\Omega^\Oo$ is satisfiable iff
$\Psi^\Oo$ is satisfiable, where the size of $\Psi^\Oo$ is exponential in the number of (bits of) random variables.
\end{proposition}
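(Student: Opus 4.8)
The plan is to prove the biconditional by exhibiting an explicit correspondence between satisfying witnesses of $\Omega^\Oo$ --- that is, choices of $(\eta_1,\eta_2)\in\Theta^2_{=X_p}$ and $(c_1,\ldots,c_m)\in\Bb^m$ realizing the count mismatch --- and models of the first-order formula $\Psi^\Oo$. The guiding observation is that $\Psi^\Oo$ is merely a static unrolling of the semantic counting function $\sharp_{\eta}$: for each assignment $f$ of the random variables it instantiates a fresh copy of the expression logic, binds the value of each $\Expr(x)$ to a fresh variable $x_f$ (respectively $x_f'$ for the primed, key-perturbed run), reads off a $0/1$ indicator $I_f$ (respectively $I_f'$) of whether the resulting tuple equals $(c_1,\ldots,c_m)$, and then $\Theta_{\neq}$ asserts that $\sum_f I_f\neq\sum_f I_f'$. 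Since $\sum_f I_f$ and $\sum_f I_f'$ are by construction exactly $\sharp_{\eta_1}(x_1=c_1,\ldots,x_m=c_m)$ and $\sharp_{\eta_2}(x_1=c_1,\ldots,x_m=c_m)$, the formula $\Theta_{\neq}$ is a literal transcription of the body of $\Omega^\Oo$.

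For the ($\Rightarrow$) direction I would start from witnesses $(\eta_1,\eta_2)$ and $(c_1,\ldots,c_m)$ and build a model of $\Psi^\Oo$: assign each $x\in X_p$ its common value $\eta_1(x)=\eta_2(x)$, each $k\in X_k$ the value $\eta_1(k)$, and each primed copy $k'$ the value $\eta_2(k)$. Each conjunct $\Theta_{e,f}$ then forces $x_f$ to equal the value of $\Expr_{\eta_1}(x)$ under the random assignment $f$ (using that $\Theta_{e,f}$ is a faithful functional encoding of $e$), and $\Theta_{e,f}'$ forces $x_f'$ to $\Expr_{\eta_2}(x)$ under $f$; for $e\in E^2$ there is nothing to reconcile since such $e$ contains no key and its value is insensitive to the $X_k$/primed split. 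Then $\Theta_{v2i}$ and $\Theta_{v2i}'$ pin down $I_f$ and $I_f'$, their sums are $\sharp_{\eta_1}$ and $\sharp_{\eta_2}$ by definition of $\sharp$, and $\Theta_{\neq}$ holds by the hypothesis on $\Omega^\Oo$.

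The ($\Leftarrow$) direction is the reverse read-off: from any model of $\Psi^\Oo$, define $\eta_1$ on $X_p\cup X_k$ and $\eta_2$ on $X_p$ together with the primed keys from the assigned values; since the construction keeps $X_p$ shared, $\eta_1$ and $\eta_2$ agree on $X_p$, so $(\eta_1,\eta_2)\in\Theta^2_{=X_p}$, and $(c_1,\ldots,c_m)$ is read off directly. The program-logic conjuncts guarantee that the $x_f$'s (respectively $x_f'$'s) are exactly the evaluations of the computations under $\eta_1$ (respectively $\eta_2$) for every random assignment $f$, hence again $\sum_f I_f=\sharp_{\eta_1}(\cdots)$ and $\sum_f I_f'=\sharp_{\eta_2}(\cdots)$, and $\Theta_{\neq}$ yields the inequality demanded by $\Omega^\Oo$ (one may then chain Proposition~\ref{prop:logicalchara} to recover $\Oo$-leakiness). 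The size claim follows by counting conjuncts: $\Theta_{e,f}$ ranges over all $f:\RVar(e)\to\Bb$, of which there are $|\Bb|^{|\RVar(e)|}=2^{\kappa|\RVar(e)|}$, and $\Theta_{v2i},\Theta_{v2i}'$ range over $f:\bigcup_{e\in E_\Oo}\RVar(e)\to\Bb$, i.e.\ up to $2^{\kappa|X_r|}$ conjuncts, each of polynomial size; thus $|\Psi^\Oo|$ is polynomial in $|P|$ but exponential in the total bit-width $\kappa\cdot|X_r|$ of the random variables.

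The routine but error-prone part --- and the main thing to get right --- is the bookkeeping of which variables are shared across the many instantiated copies: $X_p$ must be common to all of $\Theta_{e,f},\Theta_{e,f}',\Theta_{v2i},\Theta_{v2i}'$, the keys $X_k$ common only to the unprimed copies, and the primed keys common only to the primed copies, so that the two sums in $\Theta_{\neq}$ genuinely correspond to $\sharp_{\eta_1}$ and $\sharp_{\eta_2}$ for an $X_p$-equivalent pair rather than to some spurious mixed valuation. The only other point needing care is the standard claim that $\Theta_{e,f}$ encodes the semantics of $e$ exactly, which is where this proposition tacitly relies on the backend theory providing bitwise, modular-arithmetic and finite-field operations matching $\Op$.
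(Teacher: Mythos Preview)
Your proposal is correct and matches the paper's (implicit) reasoning: the paper itself gives no proof beyond the remark ``It is straightforward to get the following proposition,'' and your argument is exactly the unrolling correspondence that makes it straightforward. Your bookkeeping on which variables are shared across the primed and unprimed copies, and your size analysis via the $2^{\kappa|\RVar(e)|}$ many conjuncts, are both accurate fillings-in of the details the paper omits.
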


By Proposition~\ref{prop:logicalchara} and Proposition~\ref{prop:unsat2SMT}, we get that:

\begin{corollary}\label{cor:unsat2SMT}
$\Psi^\Oo$ is satisfiable iff
$P$ is $\Oo$-leaky.
\end{corollary}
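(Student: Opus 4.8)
The plan is to obtain Corollary~\ref{cor:unsat2SMT} simply by composing the two equivalences that precede it: Proposition~\ref{prop:logicalchara} states that $P$ is $\Oo$-leaky iff $\Omega^\Oo$ is satisfiable, and Proposition~\ref{prop:unsat2SMT} states that $\Omega^\Oo$ is satisfiable iff $\Psi^\Oo$ is satisfiable, so chaining them yields that $\Psi^\Oo$ is satisfiable iff $P$ is $\Oo$-leaky. Since the paper asserts Proposition~\ref{prop:unsat2SMT} without proof, the substantive part of the proposal is to justify that the SMT encoding $\Psi^\Oo$ faithfully captures $\Omega^\Oo$; the corollary then follows at no extra cost.

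First I would fix notation, writing $R=\bigcup_{e\in E_\Oo}\RVar(e)$ for the random variables relevant to $\Oo$, and recalling that a model of $\Psi^\Oo$ assigns values to the shared inputs $X_p\cup X_k$, to a primed copy $\{k'\mid k\in X_k\}$, to fresh intermediate variables $x_f$ and $x_f'$ indexed by $f:R\to\Bb$, and to indicator integers $I_f,I_f'$. From such a model I read off a candidate witness for $\Omega^\Oo$: let $\eta_1$ be its restriction to $X_p\cup X_k$, let $\eta_2$ agree with it on $X_p$ and take the primed values on $X_k$ (so $(\eta_1,\eta_2)\in\Theta^2_{=X_p}$ holds by construction), and let $(c_1,\dots,c_m)$ be the target vector appearing in $\Theta_{v2i}$. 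In the forward direction I argue that the conjuncts force the intended meaning: the program-logic conjuncts $\Theta_{e,f}$ and $\Theta'_{e,f}$ guarantee $x_f$ is the value of $\Expr_{\eta_1}(x)$ under $f$ and $x_f'$ the value of $\Expr_{\eta_2}(x)$ under $f$; for $e\in E^2$ only $\Theta_{e,f}$ is needed, since $e$ is key-free and $\eta_1,\eta_2$ agree on $X_p$, whence $\Expr_{\eta_1}(x)=\Expr_{\eta_2}(x)$ and a primed copy would be redundant. Then $\Theta_{v2i}$ and $\Theta'_{v2i}$ make $I_f$ (resp.\ $I_f'$) the $0/1$ indicator of the event that $(\Expr_{\eta_1}(x))_{x\in\Oo}$ (resp.\ $(\Expr_{\eta_2}(x))_{x\in\Oo}$) equals $(c_1,\dots,c_m)$ under $f$, and since each $\Expr(x)$ with $x\in\Oo$ depends only on $R$, the sum $\sum_{f:R\to\Bb}I_f$ equals $\sharp_{\eta_1}(x_1=c_1,\dots,x_m=c_m)$ up to the fixed positive factor $2^{\kappa(|X_r|-|R|)}$, which cancels in $\Theta_{\neq}$; hence $\Theta_{\neq}$ is exactly the body of $\Omega^\Oo$ instantiated at $(\eta_1,\eta_2,c_1,\dots,c_m)$.

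For the backward direction I run the same reading in reverse: given a witness $(\eta_1,\eta_2,c_1,\dots,c_m)$ of $\Omega^\Oo$, I build a model of $\Psi^\Oo$ by evaluating each $\Expr(x)$ under the appropriate valuation and each $f$ to fix $x_f$ and $x_f'$, setting $I_f,I_f'$ to the corresponding indicators, and using $(c_1,\dots,c_m)$ as the target vector; every conjunct holds by construction, and $\Theta_{\neq}$ holds because $(\eta_1,\eta_2,c_1,\dots,c_m)$ witnesses $\Omega^\Oo$. The size bound stated in Proposition~\ref{prop:unsat2SMT} is then a routine count: there are $2^{\kappa|\RVar(e)|}$ copies of $\Theta_{e,f}$ for each $e\in E_\Oo$ and $2^{\kappa|R|}\le 2^{\kappa|X_r|}$ indicator variables, so $\Psi^\Oo$ is exponential in the number of bits of random variables.

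The main obstacle is not any deep argument but the careful bookkeeping of the shared-versus-duplicated variable discipline: one must verify that $X_p$ is genuinely shared between the primed and unprimed parts (so the constructed pair lies in $\Theta^2_{=X_p}$), that each key $k$ is duplicated as $k'$, that each $f:R\to\Bb$ owns its own fresh intermediate variables $x_f$, and, above all, that omitting $\Theta'_{e,f}$ for $e\in E^2$ is sound. This last point is the only place a careless argument could fail, and its resolution is precisely the observation that $E^2$-expressions are key-free and hence insensitive to the difference between $\eta_1$ and $\eta_2$.
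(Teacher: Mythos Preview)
Your proposal is correct and follows the same approach as the paper: the corollary is obtained by chaining Proposition~\ref{prop:logicalchara} and Proposition~\ref{prop:unsat2SMT}, with no additional argument. You go further than the paper by sketching a proof of Proposition~\ref{prop:unsat2SMT} (the faithfulness of the SMT encoding), which the paper leaves as ``straightforward''; that extra justification is fine and your bookkeeping of the shared-versus-duplicated variable discipline is accurate, but strictly speaking the corollary itself requires only the one-line composition.
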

%

\begin{figure}[t]
\centering
$\begin{array}{c}
    \left(\begin{array}{c}  \big(y_{000}=(k\oplus 0)\oplus 0 \big)\wedge \big(y_{000}'=(k'\oplus 0)\oplus 0\big)  \wedge
      \big(y_{001}=(k\oplus 1)\oplus 0 \big)\wedge \big(y_{001}'=(k'\oplus 1)\oplus 0 \big) \wedge \\
      \big(y_{010}=(k\oplus 0)\oplus 1 \big)\wedge \big(y_{010}'=(k'\oplus 0)\oplus 1 \big) \wedge
      \big(y_{011}=(k\oplus 1)\oplus 1 \big)\wedge \big(y_{011}'=(k'\oplus 1)\oplus 1 \big)  \wedge\end{array} \right)  \\
\Big((y_{300}=0\oplus 0 ) \wedge (y_{301}=0\oplus 1)\wedge
(y_{310}=1\oplus 0) \wedge (y_{311}=1\oplus 1) \wedge\Big) \\
 \left(\begin{array}{c} \big(I_{00}= (y_{000}=c_1\wedge y_{300}=c_2)  \ ? \ 1 : 0\big) \wedge
      \big(I_{01}= (y_{001}=c_1\wedge y_{301}=c_2)  \ ? \ 1 : 0\big) \wedge \\
      \big(I_{10}= (y_{010}=c_1\wedge y_{310}=c_2)  \ ? \ 1 : 0\big) \wedge
      \big(I_{11}= (y_{011}=c_1\wedge y_{311}=c_2)  \ ? \ 1 : 0\big)\wedge \end{array} \right) \\
 \left(\begin{array}{c} \big(I_{00}'= (y_{000}'=c_1\wedge y_{300}=c_2) \ ? \ 1 : 0\big) \wedge
     \big(I_{01}'= (y_{001}'=c_1\wedge y_{301}=c_2) \ ? \ 1 : 0\big) \wedge \\
      \big(I_{10}'= (y_{010}'=c_1\wedge y_{310}=c_2) \ ? \ 1 : 0\big) \wedge
           \big(I_{11}'= (y_{011}'=c_1\wedge y_{311}=c_2) \ ? \ 1 : 0\big) \wedge \end{array} \right) \\
 \Big(  (I_{00}+I_{01}+I_{10}+I_{11})\neq (I_{00}'+I_{01}'+I_{10}'+I_{11}')\Big)
\end{array}$ 
\caption{The SMT encoding $\Psi^{\{y_0,y_3\}}$.}
\label{fig:smty1y4}
\end{figure}

\begin{example}
Let us consider the observable set $\{y_0,y_3\}$ in the motivating example (cf. Section~\ref{sec:motivatingexample}).
Recall that $\Expr(y_0)=(k\oplus r) \oplus r'$ and $\Expr(y_3)=r' \oplus r$. In this case, $E^1=\{\Expr(y_0)\}$ and $E^2=\{\Expr(y_3)\}$.
The SMT formula $\Psi^{\{y_0,y_3\}}$ is shown in Figure~\ref{fig:smty1y4}, where the first two lines correspond to the logical formulas $\Theta_{\Expr(y_0),f}$,
the third line corresponds to the logical formulas $\Theta_{\Expr(y_3),f}$, the next four lines correspond to the logical formulas $\Theta_{v2i}$ and  $\Theta_{v2i}'$,
and the last one corresponds to the logical formula $\Theta_{\neq}$.
(For clarity, we only show the case when all variables are Boolean.)
$\Psi^{\{y_0,y_3\}}$  is satisfiable,
implying that the program is $\{y_0,y_3\}$-leaky.
\end{example}


\subsection{Brute-force Method}
The brute-force method (cf. Alg.~\ref{alg:bfalg}) enumerates all possible valuations 
and then computes corresponding distributions 
again by enumerating the assignments of random variables.

\begin{algorithm}[t]
\SetAlgoNoLine
\small
\SetKwProg{myfunc}{Function}{}{}
\myfunc{{\sc BFEnum}$(P,\Oo=\{x_1,\cdots,x_m\})$}{
    \ForAll{$\eta_p:\bigcup_{x\in \Oo}X_p\cap \Var(\Expr(x)) \rightarrow \Bb$}{
        $D_1:=\lambda (c_1,\cdots,c_m)\in \Bb^m.0$\;
        $b:={\tt false}$\;
        \ForAll{$\eta_k:\bigcup_{x\in \Oo}X_k\cap \Var(\Expr(x)) \rightarrow \Bb$}{
            $D_2:=\lambda (c_1,\cdots,c_m)\in \Bb^m.0$\;
            \If{$b={\tt false}$}{
                $D_1:=${\sc Counting}$(P,\Oo,\eta_p,\eta_k)$\;
                $b:={\tt true}$\;
            }
            \Else{
                $D_2:=${\sc Counting}$(P,\Oo,\eta_p,\eta_k)$\;
                \lIf{$D_1\neq D_2$}{\Return{${\tt SAT}$}}
            }
        }
    }
    \Return{${\tt UNSAT}$}\;
}
\medskip
\myfunc{{\sc Counting}$(P,\Oo=\{x_1,\cdots,x_m\},\eta_p,\eta_k)$}{
    \ForAll{$\eta_r:\bigcup_{x\in \Oo}\RVar(\Expr(x)) \rightarrow \Bb$}{
        $D[\Expr_{\eta_p,\eta_k,\eta_r}(x_1),\cdots,\Expr_{\eta_p,\eta_k,\eta_r}(x_m)]++$\;
    }
    \Return{$D$}\;
}
\caption{A brute-force algorithm\label{alg:bfalg}}
\end{algorithm}

\begin{proposition}
$\Omega^\Oo$ is satisfiable iff  Algorithm~\ref{alg:bfalg} returns ${\tt SAT}$.
\end{proposition}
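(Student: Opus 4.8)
The plan is to route through Proposition~\ref{prop:logicalchara}, which already establishes that $\Omega^\Oo$ is satisfiable iff $P$ is $\Oo$-leaky, i.e.\ iff there exists a pair $(\eta_1,\eta_2)\in\Theta^2_{=X_p}$ with $\sem{P}^\Oo_{\eta_1}\neq\sem{P}^\Oo_{\eta_2}$. Hence it suffices to prove that Algorithm~\ref{alg:bfalg} returns ${\tt SAT}$ exactly when such a pair exists. Throughout I write $V=\bigcup_{x\in\Oo}\Var(\Expr(x))$ and $N=\big|\bigcup_{x\in\Oo}\RVar(\Expr(x))\big|$.

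First I would record two elementary facts about the marginal $\sem{P}^\Oo_\eta$. \emph{Fact (i):} $\sem{P}^\Oo_\eta$ depends only on the restriction of $\eta$ to $V\cap(X_p\cup X_k)$, because no other input variable occurs in any $\Expr(x)$ for $x\in\Oo$, and the random variables outside $\bigcup_{x\in\Oo}\RVar(\Expr(x))$ contribute only a uniform factor that cancels in the marginal. \emph{Fact (ii):} for any $\eta_p\colon V\cap X_p\to\Bb$ and $\eta_k\colon V\cap X_k\to\Bb$, the array returned by {\sc Counting}$(P,\Oo,\eta_p,\eta_k)$ equals $2^{\kappa N}\cdot\sem{P}^\Oo_\eta$, for any full valuation $\eta\in\Theta$ extending $\eta_p\cup\eta_k$; this is immediate from the definition of {\sc Counting}, which tallies, over all assignments of the $N$ occurring random variables, the tuple of values taken by $\Expr(x_1),\dots,\Expr(x_m)$. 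Consequently, two such arrays are equal iff the corresponding marginal distributions are equal.

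Next I would analyse a single outer iteration $\eta_p$ of {\sc BFEnum}. The algorithm sets $D_1$ from the first enumerated $\eta_k$, say $\eta_k^0$, and then for every subsequent $\eta_k$ it computes $D_2$ and returns ${\tt SAT}$ as soon as $D_1\neq D_2$. By Fact (ii), this inner loop triggers ${\tt SAT}$ iff there is some $\eta_k$ with $\sem{P}^\Oo_{(\eta_p,\eta_k)}\neq\sem{P}^\Oo_{(\eta_p,\eta_k^0)}$; since two distinct distributions cannot both coincide with the fixed reference $\sem{P}^\Oo_{(\eta_p,\eta_k^0)}$, this is equivalent to the existence of \emph{some} pair $\eta_k,\eta_k'$ with $\sem{P}^\Oo_{(\eta_p,\eta_k)}\neq\sem{P}^\Oo_{(\eta_p,\eta_k')}$ (vacuously false when the computations of $\Oo$ use no private variable). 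Combining over all $\eta_p$, Algorithm~\ref{alg:bfalg} returns ${\tt SAT}$ iff there exist $\eta_p$ and $\eta_k,\eta_k'$ with distinct marginals over $\Oo$. Finally, Fact (i) shows this condition matches $\Oo$-leakiness: every pair $(\eta_1,\eta_2)\in\Theta^2_{=X_p}$ restricts to such a triple, and conversely every such triple extends to a pair in $\Theta^2_{=X_p}$ (extend $\eta_p$ to all of $X_p$ the same way on both sides and extend the two $\eta_k$'s arbitrarily) without changing the marginals. Hence ${\tt SAT}$ is returned iff $P$ is $\Oo$-leaky, and the claim follows by Proposition~\ref{prop:logicalchara}.

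I expect no genuine obstacle here: the argument is bookkeeping rather than depth. The two points requiring care are (a) justifying, via Fact (i), that restricting the enumerations to the input variables actually occurring in the computations of $\Oo$ loses nothing, and (b) tracking the normalization constant $2^{\kappa N}$ in Fact (ii) so that \emph{array} equality is faithfully equivalent to \emph{distribution} equality; and in step three, the small observation that ``differs from the reference distribution'' and ``some pair of distributions differ'' coincide. If desired, the statement could instead be proved directly (without Proposition~\ref{prop:logicalchara}) by the same chain of equivalences ending at the definition of $\Omega^\Oo$.
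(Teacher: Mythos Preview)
The paper does not actually give a proof of this proposition; it is stated immediately after Algorithm~\ref{alg:bfalg} and then the text moves on to complexity considerations. So there is nothing to compare against beyond the fact that the authors regard the claim as evident from inspection of the algorithm.

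Your proposal is correct and is exactly the kind of argument one would write if the paper had spelled it out. The two facts you isolate are the right ones: Fact~(i) justifies that the algorithm may restrict its enumeration to the public and private variables actually occurring in $(\Expr(x))_{x\in\Oo}$, and Fact~(ii) (together with the uniform factor $2^{\kappa N}$) is precisely what makes array equality in {\sc Counting} equivalent to equality of the marginals $\sem{P}^\Oo_\eta$. Your observation that ``differs from the reference $\eta_k^0$'' is equivalent to ``some pair differs'' is the only non-mechanical step, and you handle it correctly, including the degenerate case where $\Oo$ uses no private variable (the inner loop then has a single iteration and never reaches the comparison). Routing through Proposition~\ref{prop:logicalchara} is natural and matches how the paper structures the surrounding material.
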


The complexity of Algorithm~\ref{alg:bfalg} is exponential in the number of (bits of) variables in
computations $(\Expr(x))_{x\in\Oo}$,
so it would experience significant performance degradation when facing a large number of variables.
We propose a GPU-accelerated parallel algorithm to boost the performance. 

\subsection{GPU-accelerated Parallel Algorithm} \label{sec:gpu}
In this subsection, we show how to leverage GPU's superior compute
capability to check satisfiability of $\Omega^\Oo$ in Eqn.~(\ref{eq:omega}).
In general, given a potential leaky observable set $\Oo$, we automatically synthesize a GPU program from the computations of observable variables in $\Oo$
such that the GPU program outputs ${\tt SAT}$ iff $\Omega^\Oo$ is satisfiable, i.e.,
the program is $\Oo$-leaky.

Our work is based on CUDA, a parallel computing platform and programming model for NVIDIA GPUs.
Specifically, we utilize Nvidia GeForce GTX 1080 (Pascal) with compute capability 6.1.
From a programming perspective,
the CUDA architecture defines three levels of threads, i.e., grid, block and warp, to organize units.
A warp consists of 32 consecutive threads 
which are executed in the Single Instruction Multiple Thread fashion on Streaming Processors,
namely, all threads execute the same instruction, and each thread carries out that operation on its own private data.
A block running on Streaming Multiprocessors contains at most $32$ wraps (giving rise to $32\times 32$ threads).
The maximum number of blocks in a grid is $65,535\times 65,535$ and each grid runs on the Scalable Streaming Processor Array. The code running on GPUs is usually referred to as \emph{Kernel}.

We parallelize Algorithm~\ref{alg:bfalg} as a CUDA program. In this work, we illustrate the idea on byte programs, i.e., each
variable is of 8-bit. 
Typically, the number of random variables is usually much larger than that of the other 
variables. Therefore, we enumerate assignments of random variables in GPUs while enumerate valuations of public and input variables in CPUs. Namely, the {\sc Counting} function in Algorithm~\ref{alg:bfalg} is implemented as a Kernel. However, it would be difficult to implement a generic
Kernel to compute distributions of sets of computations $(\Expr(x))_{x\in\Oo}$, unless
computations are evaluated by traversing their abstract syntax trees, which is control-flow intensive and would downgrade the GPU performance.
As a result, instead of designing a generic Kernel, for each observable set  $\Oo$, we automatically synthesize a CUDA program which checks whether
$\Omega^\Oo$ is satisfiable based on Algorithm~\ref{alg:bfalg}.

\begin{algorithm}[t]
\SetAlgoNoLine
\small
\SetKwProg{myfunc1}{}{}{}
{\sf\_\_device\_\_  unsigned char} {\sc op1}$(...)$\\{
\hspace*{10pt}  ...\;
}
...\\
\medskip
{\sf\_\_device\_\_  unsigned char} {\sc opj}$(...)$\\{
\hspace*{10pt}  ...\;
}
{\sf\_\_device\_\_  unsigned char} {\sc exp$_1$}$(\eta_p,\eta_k,\eta_r,{\tt threadIdx},{\tt blockIdx})$\\{
\hspace*{10pt}  ...\;
}
...\\
\medskip
{\sf\_\_device\_\_  unsigned char} {\sc exp$_m$}$(\eta_p,\eta_k,\eta_r,{\tt threadIdx},{\tt blockIdx})$\\{
\hspace*{10pt}  ...\;
}

\SetKwProg{myfunc}{int main}{}{}
\myfunc{{\sc GPUBFEnum}$(P,\Oo=\{x_1,\cdots,x_m\})$}{
    {\sf int} *$D_1$; {\sf int} *$D_2$\;
    {\sf cudaMallocManaged}$(\&D_1,256^m)$\;  {\sf cudaMallocManaged}$(\&D_2,256^m)$\;
    {\sf dim3} block(16,16)\; {\sf dim3} grid(4096/block.x,4096/block.y)\;
    \ForAll{$\eta_p:\bigcup_{x\in \Oo}X_p\cap \Var(\Expr(x)) \rightarrow \Bb$}{
        {\sf memset}$(D_1,0,{\sf sizeof(unsigned char)})$\;
        $b:={\tt false}$\;
        \ForAll{$\eta_k:\bigcup_{x\in \Oo}X_k\cap \Var(\Expr(x)) \rightarrow \Bb$}{
            {\sf memset}$(D_2,0,{\sf sizeof(unsigned char)})$\;
            \If{$b={\tt false}$}{
                {\sc KernelCounting}$<<<$grid, block$>>>(D_1,\eta_p,\eta_k)$\;
                {\sf cudaDeviceSynchronize}()\;
                $b:={\tt true}$\;
            }
            \Else{
                {\sc KernelCounting}$<<<$grid, block$>>>(D_2,\eta_p,\eta_k)$\;
                {\sf cudaDeviceSynchronize}()\;
                \If{$D_1\neq D_2$}{\Return{${\tt SAT}$}\;}
            }
        }
    }
    \Return{${\tt UNSAT}$}\;
}
\medskip
{\tt \_\_global\_\_ void} {\sc KernelCounting}$(D,\eta_p,\eta_k)$\\
\hspace*{10pt} $\{r_1,\cdots,r_h\}:=\bigcup_{x\in \Oo}\RVar(\Expr(x))$\;
\hspace*{10pt} \ForAll{$\eta_r:\{r_4,\cdots,r_h\} \rightarrow \Bb$}{
\hspace*{10pt}    $c_1:=${\sc exp$_1$}$(\eta_p,\eta_k,\eta_r,{\tt threadIdx},{\tt blockIdx})$\;
\hspace*{10pt}    ...\\
\hspace*{10pt}    $c_m:=${\sc exp$_m$}$(\eta_p,\eta_k,\eta_r,{\tt threadIdx},{\tt blockIdx})$\;
\hspace*{10pt}    ${\tt index}:=\sum_{i=0}^{m-1} c_i\times 256^i$\;
\hspace*{10pt}    {\sf atomicAdd}$(\&D[{\tt index}],1)$\;
    }

\caption{The skeleton of synthesized GPU programs}
\label{alg:bfalggpu}
\end{algorithm}

The numbers of threads per block and blocks per grid in each synthesized CUDA program are determined
by the number $R:=|\bigcup_{x\in\Oo}\RVar(\Expr(x))|$ of random variables in $(\Expr(x))_{x\in\Oo}$.
If $R=3$, we choose 2-D $(16,16)$ blocks each of which has $2^8$ threads,
and 2-D $(256,256)$ grids each of which has $2^{16}$ blocks. (Note that the number $2^{24}$ of threads exactly corresponds to the number of valuations of
three 8-bits random variables.) Moreover, we do not need to enumerate those valuations,
as the thread Id and block Id (i.e., \textsf{threadIdx} and \textsf{blockIdx} in CUDA) of each host thread in GPU exactly corresponds one of those valuations.
If 
$R<3$, we reduce
the number of blocks and/or threads such that the total number of threads is the number of valuations
of random variables. Otherwise $R>3$, 
we set $2^8$ number of threads in each block and $2^{16}$ number blocks in one grid for three random variables,
while the valuations of the rest of the random variables are enumerated in GPU.

For each operation used in computations of $(\Expr(x))_{x\in\Oo}$ but is not supported in CUDA, we synthesize
a corresponding \textsf{\_\_device\_\_} function which will be called from GPUs only and executed therein.
For each computation $\Expr(x)$, we also synthesize a \textsf{\_\_device\_\_} function
which computes the value of $\Expr(x)$ using \textsf{\_\_device\_\_} functions for operations based on
thread Id and block Id of the host thread which represent the valuations of some random variables.


For memory management, we use int arrays to store distributions, which are accessed both from
CPU for comparing distributions (read-only) and GPU for computing distributions (read and write).
We utilize unified memory provided by CUDA to allocate memory for both int arrays,  
namely, to allocate memory by invoking the \textsf{cudaMallocManaged} function,
by which the managed pointers to int arrays are valid on both the GPU and CPU.
To resolve data race, the update of int arrays in Kernel is performed in one atomic transaction (via the \textsf{atomicAdd} function in CUDA).

Concretely,
Algorithm~\ref{alg:bfalggpu} shows a skeleton of synthesized GPU programs, where the number of random variables is greater than $3$.
Other cases are similar.
The {\sf\_\_device\_\_} functions implement all the CUDA non-supported operations and expressions which are invoked and executed on GPU.
$D_1$ and $D_2$ are int arrays for storing distributions.
The function {\sc KernelCounting} is the Kernel which computes distributions
for each valuation of public and private input variables.
The function {\sc KernelCounting} is invoked at Line 20 and Line 24 for each valuation of public and private input variables.
After each invoking of  {\sc KernelCounting}, the function {\sf cudaDeviceSynchronize} is invoked which waits until all preceding commands in all streams of all host threads have completed.
In the body of {\sc KernelCounting},
the valuations of the first three random variables are implicitly represented by \textsf{threadIdx} and \textsf{blockIdx},
while $\eta_r$ denotes a valuation of other random variables.
Finally, the values of expressions are iteratively computed via calling the corresponding {\sf\_\_device\_\_} functions.
The value vector of expressions is encoded as an index to the array $D$, where the value at this index increases $1$ atomically to avoid data race.



\subsection{Method based on Pattern Matching} \label{sec:pm}
In order to avoid (costly) model-counting,
we propose a novel pattern matching based method,
which allows  to resolve potential leaky observable sets more efficiently.
This idea comes from the observation that
cryptographic programs  usually have very similar blocks and 
many observable sets share
common observable variables. As a warmup, let us first consider two observable sets $\{x,y\}$
and $\{x',y'\}$, where $\Expr(x)=r$, $\Expr(y)=k\oplus r$, $\Expr(x')=r'$, $\Expr(y)=k\oplus r'$,
$k$ is a private input and $r,r'$ are two random variables.
Then $\{\Expr(x),\Expr(y)\}$ and $\{\Expr(x'),\Expr(y')\}$ are equivalent up to renaming of random variables,
thus, observable sets $\{x,y\}$ and $\{x',y'\}$ have same distribution type.

Based on this observation,
we propose a pattern matching method for inferring distribution types of observable sets $\Oo$ from
observable sets $\Oo'$  whose distribution types are known.
Before formalizing this idea, we first introduce type-respecting bijection functions.

Given a bijective function $f:X\rightarrow X$, the function
$f$ is \emph{type-respecting} if for every $x\in X$,
$f(x)$ is public (resp. private and random) iff $x$ is public (resp. private and random).

\begin{definition}
Two sets of computations $E$ and $E'$ are {\em isomorphic respecting
the type of variables}, denoted by $E\simeq E'$, if there is a type-respecting bijection $h:\Var(E)\rightarrow \Var(E')$
such that
$E'=\{h(e)\mid e\in E\}$, where $h(e)$ denotes the computation obtained from
$e$ by renaming each variable $x$ with $h(x)$.

\end{definition}

For two observable sets $\Oo$ and $\Oo'$ with the same size,
it is easy to see that $\Oo$ and $\Oo'$ have  the same distribution type
if $\{\Expr(x)\mid x\in\Oo\}\simeq \{\Expr(x')\mid x'\in\Oo'\}$.

One may notice that constants have to be preserved in the definition of isomorphic with respect to
the type of variables. In general, changing a constant in $E$ may change
its distribution type. For instance, let us consider a family of sets $E_i$ of (simplified)
computations taking from the fourth-order masked implementation of the Sbox~\cite{SP06},
\begin{center}
$E_{i,j}:=\{x_0, \ \text{Sbox}(k \oplus j \oplus x_0) \oplus r, \ \text{Sbox}(k  \oplus i\oplus x_0) \oplus r\}$,
for $0\leq i\neq j\leq 255$.
\end{center}
where $x_0$ and $r$ are two random variables and $k$ is a private input.
In this case, for any distinct pairs of constants $(i,j)$ and $(i',j')$,  $E_{i,j}\simeq E_{i',j'}$ does not hold,
thus, we cannot infer the distribution of $E_{i,j}$ from $E_{i',j'}$, although they are almost identical.

To address this issue, we propose a generalization taking into account constants.
Our idea is inspired on the observation that some constant can be assimilated without affecting the distribution of computations.
For instance, regarding $k \oplus j$ to be $k'$, then $k \oplus i\equiv (k'\oplus j)\oplus i\equiv k'\oplus (j\oplus i)$.
Suppose the distribution of $E_{1,2}$ is known and by applying $k \oplus i\equiv k'\oplus (j\oplus i)$, $E_{1,2}$ is normalized as  (note $3=1\oplus 2$)
\begin{center}
${\tt norm}(E_{1,2}):=\{x_0, \ \text{Sbox}(k' \oplus x_0) \oplus r, \ \text{Sbox}(k'  \oplus 3\oplus x_0) \oplus r\}$.
\end{center}
  Then, for any $0\leq i\neq j\leq 255$ such that $(j\oplus i)=3$,
by applying $k \oplus i\equiv k'\oplus (j\oplus i)$, $E_{i,j}$ is also normalized as
\begin{center}
${\tt norm}(E_{i,j}):=\{x_0, \ \text{Sbox}(k' \oplus x_0) \oplus r, \ \text{Sbox}(k'  \oplus 3\oplus x_0) \oplus r\}$.
\end{center}
We can observe that $E_{i,j}\simeq E_{1,2}$, thus, $E_{i,j}$ has the same distribution as $E_{1,2}$.
This idea is formalized in the following definition.

\begin{definition}
A constant $c$ is {\em assimilable} in a set $E$ of computations  if $E$ can be transformed into
a set  $E'$ of equivalent computations by algebra laws such that
  all occurrences of the constant $c$ in $E'$ are  within the context of $x\circ c$ 
	for operator $\circ\in\{\oplus,+,-\}$ and some variable $x$, such that $x$ is either not used elsewhere or used as $x\circ c'$ 
 for some constant $c'$ (note that $c\neq c'$).

%
\end{definition}

If $c$ is assimilable in $E$, we denote by ${\tt norm}(E)$ the set of normalized computations which is obtained
from $E'$ by iteratively
\begin{enumerate}
  \item replacing all occurrences of the constant $c$ in $E'$ (as $x\circ c$) by $x$, and
  \item every possible $x\circ c'$ (for $c'\neq c$) by  $x{\circ} c''$, where $c''=c'~\widehat{\circ} ~c$, $\widehat{+}=-$,
$\widehat{-}=-$ and $\widehat{\oplus}=\oplus$. 
\end{enumerate}
By this replacement, ${\tt norm}(E)=E'[x/(x\circ c)][\forall (x\circ c')~in~E':(x{\circ} c'')/(x\circ c')]$, one can reduce the number of constants in $E$.


\begin{example}
Let us consider $e=(x\oplus 1)+ (x\oplus 2)+ (y\oplus 1)$. The constant $1$ is \emph{not} assimilable because there are two occurrences of $1$ which are within two different contexts $x\oplus 1$ and $y\oplus 1$ respectively.
However, $2$ is assimilable  (by $x$), as $2$ occurs in the context of $x\oplus 2$ and $x$ occurs elsewhere in $x\oplus 1$.
After replacing $x\oplus 2$ by $x$ and $x\oplus 1$ by $x\oplus 3$,
we get $(x\oplus 3)+ x+ (y\oplus 1)$ in which $1$ becomes assimilable by $y$.
Finally, ${\tt norm}(E)= (x\oplus 3)+ x+ y$. 
\end{example}


\begin{theorem}\label{thm:matching}
For observable sets $\Oo$ and $\Oo'$,
if ${\tt norm}(\{\Expr(x)\mid x\in\Oo\})\simeq {\tt norm}(\{\Expr(x)\mid x\in\Oo'\})$, then $\Oo$ and $\Oo'$ have same the distribution types.
\end{theorem}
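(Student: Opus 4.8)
The plan is to reduce the statement to the isomorphism case noted just before the statement, namely that $\{\Expr(x)\mid x\in\Oo\}\simeq\{\Expr(x)\mid x\in\Oo'\}$ already forces $\Oo$ and $\Oo'$ to have the same distribution type (a type-respecting bijection sends random variables to random variables, hence preserves joint uniformity and mutual independence, and it induces a bijection on valuations preserving $\approx_{X_p}$, so $\Oo$-uniform, $\Oo$-$\SI$ and $\Oo$-leaky all transfer). Granting this, it suffices to establish separately that a set of computations $E$ and its normalization ${\tt norm}(E)$ always induce the same distribution type; then $\{\Expr(x)\mid x\in\Oo\}$, ${\tt norm}(\{\Expr(x)\mid x\in\Oo\})$, ${\tt norm}(\{\Expr(x)\mid x\in\Oo'\})$ and $\{\Expr(x)\mid x\in\Oo'\}$ all share the same distribution type by transitivity, using the hypothesis for the middle link.

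To show that $E$ and ${\tt norm}(E)$ have the same distribution type I would process the assimilated constants one at a time and iterate (the iteration is finite, since each step strictly decreases the number of constant occurrences). Fix one assimilable constant $c$ and first replace $E$ by the algebraically equivalent set $E'$ from the definition of assimilability; since algebraic rewriting preserves the value of each computation pointwise (equivalently, by the soundness proposition for $\simply_{\Alg}$, $\simply_{\Dom}$, $\simply_{\Col}$), $E$ and $E'$ induce the same distribution on $\Oo$, hence the same type. In $E'$ every occurrence of $c$ sits in a context $x\circ c$ for one fixed variable $x$ with $\circ\in\{\oplus,+,-\}$, and $x$ occurs only in contexts $x\circ c'$. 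The crucial step, a one-line check using $c''=c'\,\widehat{\circ}\,c$, is that substituting $x\circ c$ for $x$ in ${\tt norm}(E)$ --- which rewrites each bare $x$ back to $x\circ c$ and each $x\circ c''$ back to $x\circ c'$ --- yields exactly $E'$. In other words $E'$ and ${\tt norm}(E)$ differ only by the reparametrization $\psi\colon\Bb\to\Bb$, $\psi(v)=v\circ c$, of the single variable $x$, and $\psi$ is a bijection because xor-, addition- and subtraction-by-a-fixed-constant are all bijections on $\Bb$.

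It then remains to argue that such a reparametrization preserves the distribution type, by a case split on the type of $x$. If $x\in X_r$: applying the bijection $\psi$ to a uniformly distributed value again gives a uniformly distributed value, so $\sem{E'}^{\Oo}_\eta=\sem{{\tt norm}(E)}^{\Oo}_\eta$ for every $\eta$ and the type is literally unchanged. If $x\in X_p\cup X_k$: let $\Psi\colon\Theta\to\Theta$ be the bijection agreeing with $\eta$ away from $x$ and replacing $\eta(x)$ by $\psi(\eta(x))$; then $\sem{{\tt norm}(E)}^{\Oo}_{\Psi(\eta)}=\sem{E'}^{\Oo}_\eta$ for all $\eta$, so the two families of marginals coincide as sets (giving equivalence of $\Oo$-uniformity), and $\Psi$ restricts to a bijection on $\Theta^2_{=X_p}$ --- trivially if $x\in X_k$, and if $x\in X_p$ because $\psi$ is injective, so $\eta_1(x)=\eta_2(x)\iff\psi(\eta_1(x))=\psi(\eta_2(x))$ --- which gives equivalence of $\Oo$-$\SI$ and hence of $\Oo$-leakiness. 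I expect this final case analysis to be the only real obstacle: one must verify that $\psi$ respects the $X_p$-equivalence relation underlying the $\SI$ notion and handle the three variable types uniformly; it is also precisely here that the restriction $\circ\in\{\oplus,+,-\}$ is used, since for operators such as $\odot$ or $\wedge$ the map $v\mapsto v\circ c$ need not be a bijection.
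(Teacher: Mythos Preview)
Your proposal is correct and follows essentially the same approach as the paper: induction on the number of assimilated constants, with the base case reducing to the pure isomorphism observation and the inductive step handled via the bijective reparametrization $v\mapsto v\circ c$ of the single variable $x$. Your decomposition (show $E\sim{\tt norm}(E)$ and $E'\sim{\tt norm}(E')$ separately, then invoke isomorphism and transitivity) is a minor repackaging of the paper's single joint induction, and your case split on the type of $x$ is in fact more complete than the paper's sketch, which writes $\eta_1[(\eta_1(x)\circ c)/x]$ and so implicitly treats only $x\in X_p\cup X_k$; your explicit handling of $x\in X_r$ and of the uniformity type fills small gaps the paper leaves tacit.
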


\begin{proof}
Let $E=\{\Expr(x)\mid x\in\Oo\}$ and $E'=\{\Expr(x)\mid x\in\Oo'\}$.
Let $n$ be the number of constants assimilated when computing ${\tt norm}(E)$ and ${\tt norm}(E')$.
We prove by applying induction on $n$.
\begin{itemize}[topsep=1pt,leftmargin=*]
  \item {\bf Base case} $n=0$. Then, ${\tt norm}(E)$ and ${\tt norm}(E')$ are isomorphic respecting
the type of variables. Let $h:\Var(E)\rightarrow \Var(E')$ be the type-respecting bijection, then for every pair $(\eta_1,\eta_2)\in \Theta^2_{=X_p}$, there exists
a pair $(\eta_1',\eta_2')\in \Theta^2_{=X_p}$ such that
$\eta_i(x)=\eta_i'(h(x))$ for all $i\in\{1,2\}$ and $x\in (X_p\cup X_k)\cap \Var(E)$.
Moreover, $\sem{P}^\Oo_{\eta_1}=\sem{P}^\Oo_{\eta_2}$ iff $\sem{P}^{\Oo'}_{\eta_1'}=\sem{P}^{\Oo'}_{\eta_2'}$.

\smallskip
For every pair $(\eta_1,\eta_2)\in \Theta^2_{=X_p}$, there exists
a pair $(\eta_1',\eta_2')\in \Theta^2_{=X_p}$ such that
$\eta_i(h^{-1}(x))=\eta_i'(x)$ for all $i\in\{1,2\}$ and $x\in (X_p\cup X_k)\cap \Var(E')$,
and $\sem{P}^\Oo_{\eta_1}=\sem{P}^\Oo_{\eta_2}$ iff $\sem{P}^{\Oo'}_{\eta_1'}=\sem{P}^{\Oo'}_{\eta_2'}$.
Thus, the result immediately follows.

  \item {\bf Inductive set} $n\geq 1$. Without loss of generation, we assume
  that $c$ is assimilated by $x$ as $x\circ c$ in $E$, and $x\circ c_1,\cdots, x\circ c_k$ are
  all the occurrences of $x$ with constants $c_1,\cdots, x_k$.
  Then, for every $\eta_1\in \Theta$,
  $\sem{P}^\Oo_{\eta_1}$ using $E$ and $\sem{P}^\Oo_{\eta_1[(\eta_1(x)\circ c)/x]}$
  using $E[x/(x\circ c)][(x\circ c_1')/(x\circ c_1),\cdots, (x\circ c_k')/(x\circ c_k)]$
  have same distribution, where $c_i'=c_i\widehat{\circ} c$ for all $i$.

  By symmetry, for every $\eta_2\in \Theta$,
  $\sem{P}^\Oo_{\eta_2}$ using $E$ and $\sem{P}^\Oo_{\eta_2[(\eta_2(x)\circ c)/x]}$
  using $E[x/(x\circ c)][(x\circ c_1')/(x\circ c_1),\cdots, (x\circ c_k')/(x\circ c_k)]$
  have same distribution.

  Therefore, for every $(\eta_1,\eta_2)\in \Theta^2_{=X_p}$,
  $\sem{P}^\Oo_{\eta_1}=\sem{P}^\Oo_{\eta_2}$ using $E$
  iff $\sem{P}^\Oo_{\eta_1[(\eta_1(x)\circ c)/x]}=\sem{P}^\Oo_{\eta_2[(\eta_2(x)\circ c)/x]}$ using $E[x/(x\circ c)][(x\circ c_1')/(x\circ c_1),\cdots, (x\circ c_k')/(x\circ c_k)]$

By applying the induction hypothesis: for every $(\eta_1,\eta_2)\in \Theta^2_{=X_p}$,
there exist a pair $(\eta_1',\eta_2')\in \Theta^2_{=X_p}$ such that
$\eta_i(h^{-1}(x))=\eta_i'(x)$ for all $i\in\{1,2\}$ and $x\in (X_p\cup X_k)\cap \Var(E')$,
and $\sem{P}^\Oo_{\eta_1}=\sem{P}^\Oo_{\eta_2}$ using $E$ iff $\sem{P}^{\Oo'}_{\eta_1'}=\sem{P}^{\Oo'}_{\eta_2'}$.
Hence, $\sem{P}^\Oo_{\eta_1[(\eta_1(x)\circ c)/x]}=\sem{P}^\Oo_{\eta_2[(\eta_2(x)\circ c)/x]}$ using $E[x/(x\circ c)][(x\circ c_1')/(x\circ c_1),\cdots, (x\circ c_k')/(x\circ c_k)]$ iff $\sem{P}^{\Oo'}_{\eta_1'}=\sem{P}^{\Oo'}_{\eta_2'}$.
Thus, the result immediately follows.
\end{itemize}
\end{proof}

Remark that pattern matching based method could be used to match secure sets and
for program debugging. When a new program is just a minor revision of a verified program,
this method may be able to quickly check many observable sets.

%
%

\section{Implementation and Evaluation} \label{sec:exper}
We have implemented our methods in a tool  \tool. We use Z3~\cite{MB08} as the underlying SMT solver (fixed size bit-vector theory) for the SMT-based method. The tool works as follows:
\begin{enumerate}
\item  apply Algorithm~\ref{alg:DivHOMaRer} to compute the set of potential leaky observable sets;
\item  for each procedure call {\sf Check}$( \{\mathcal{C}_{i,\Oo}\}_{(i,\Oo)\in \mathcal{Y}} )$ at Line 14, when the type inference fails to derive any
distribution type of $\bigcup_{(i,\Oo)\in \mathcal{Y}} \mathcal{C}_{i,\Oo}$,  check whether there is a recorded set of computations $E'$ such that
${\tt norm}(E)\simeq {\tt norm}(E')$ via the pattern matching based method, where $E$ is the set of computations $\{\Expr(x)\mid x \bigcup_{(i,\Oo)\in \mathcal{Y}} \mathcal{C}_{i,\Oo}\}$
after transformations (e.g., $\simply_{\Alg}$, $\simply_{\Dom}$ and $\simply_{\Col}$), if $E'$ exists, then
returns the distribution type of $E'$;
\item  if $E'$ does not exist, apply mode-counting methods to the set $E$ and record $E$ with its corresponding distribution type for later pattern matching.
\end{enumerate}
Finally, ${\tt PLS}$ contains exactly the set of leaky observable sets.
Note that we do not apply pattern matching and model-counting based methods to observable sets whose size is greater than the security order $d$
for efficiency consideration.



\medskip

The experiments were conducted on arithmetic programs over the byte domain. We used a server with 64-bit Ubuntu 16.04.4 LTS, Intel Xeon CPU E5-2690v4, 2.6GHz and 256GB RAM (only one core is used in our computation). For GPU based algorithms, we use NVIDIA GeForce GTX 1080 with compute capability 6.1, as mentioned in Section~\ref{sec:gpu}.




\subsection{Evaluation on Higher-Order Masking} \label{ex:hom}

We evaluate our methods on implementations of masked arithmetic algorithms, ranging from multiplication algorithms to (round-reduced or full) AES/MAC-Keccak. Some of them are provided by authors of~\cite{BBDFGS15}, while the others are implemented according to the published masked algorithms.

The results of our type inference (Algorithm~\ref{alg:DivHOMaRer} without applying model-counting and pattern matching based methods) are presented in Table~\ref{tab:higherorder}.
Column 1 shows the reference and description of the program,
where A2B and B2A denote the implementations of conversion algorithms from Boolean to arithmetic masking
and arithmetic to Boolean masking respectively, SecH and SecR denote the implementations of the non-linear transformation and the round function of {\sc Simon}~\cite{TangZZQ15},
and DOM AND is a $GF(2^8)$ version from~\cite{GrossMK17}.
Columns 2-7 show the statistics of Algorithm~\ref{alg:DivHOMaRer},
including the numbers of potential leaky observable sets,
tuples that should be considered with respect to masking order $d$ (i.e., all non-empty subsets of $X_o$ with size $\leq d$) in order to compare with the tool of~\cite{BBDFGS15},
sets actually checked by Algorithm~\ref{alg:DivHOMaRer},
sets whose verification involves the $\simply_{\Dom}$ and $\simply_{\Col}$ transformations,
and verification time (\emph{excluding} program parsing).
Likewise, Columns 8--11 show the results reported in~\cite{BBDFGS15}, the unique sound (but incomplete) approach that is able to automatically verify masked implementations of higher-order
arithmetic programs under an equivalent leakage model of the ISW model.
Since the tool of~\cite{BBDFGS15} is unavailable, in Columns 8--11, we simply replicate
the statistics of the BBDFGS algorithm from the paper~\cite{BBDFGS15} when it is available (N/A is marked otherwise).
Recall that \cite{BBDFGS15} used a different experimental setup: a headless VM with a dual
core 64-bit processor clocked at 2GHz (only one core is used in the computation).
Note that Sbox~\cite{SP06} under fourth-order masking is verified under third-order security only in order to compare
with~\cite{BBDFGS15}, while other benchmarks are verified under their masking orders.

\begin{table}[t]
\centering \setlength{\tabcolsep}{2pt} 
\caption{Experimental results of type inference on masked programs.}
\label{tab:higherorder}
 \scalebox{0.72}{
\begin{tabular}{c|cccccc|cccc}
\toprule
\multirow{2}{*}{\bf Description}   & \multicolumn{6}{c|}{\bf \tool}  & \multicolumn{4}{c}{\cite{BBDFGS15}}  \\\cline{2-11}
           &     {\bf Result}   & {\bf $\sharp$Tuples} & {\bf $\sharp$Sets} & {\bf $\sharp\simply_{\Dom}$} & {\bf $\sharp\simply_{\Col}$}     & {\bf Time (s)}   &  {\bf Result} & {\bf $\sharp$Tuples}  & {\bf $\sharp$Sets}    & {\bf Time (s)}     \\ \midrule
\multicolumn{11}{c}{\bf First-Order Masking}   \\  \midrule \rowcolor{gray!20}
Multiplication~\cite{RP10}    & 0      & 13       & 6      &   5     &  0      & $\approx$0  & 0 & 13        & 7       & $\approx$0        \\
Sbox (4)~\cite{CPRR13}        & 0      & 73       & 15     & 14      & 0       & $\approx$0  & 0& 64        & 17      & $\approx$0         \\  \rowcolor{gray!20}
Full AES (4)~\cite{CPRR13}    &  0     & 20,060   &{\bf 515}&  514 & 0       & {\bf 2}     &  0&17,206    & 3,342   & 128                \\
Full Keccak~\cite{BBDFGS15} &  0     & 18,218   & {\bf 2,813}  & 2,813  &  0     & {\bf 83}  & 0   & 13,466    & 5,421   & 405                    \\  \rowcolor{gray!20}
B2A~\cite{Goubin01}           &{\bf 1}  & 10       & 4      &    2    &   0     & $\approx$0  & \multicolumn{4}{c}{\bf N/A} \\
A2B ~\cite{Goubin01}          &{\bf 37} & 48       & 39     &  1      &  0      & 0.15        & \multicolumn{4}{c}{\bf N/A}  \\  \rowcolor{gray!20}
A2B~\cite{CGV14}              &  0     &  1,448      &  14     &  13      &  0      &  $\approx$0 & \multicolumn{4}{c}{\bf N/A}   \\
B2A~\cite{CGV14}              &  0     & 2,494          & 2        & 1       & 0        & $\approx$0            & \multicolumn{4}{c}{\bf N/A}   \\  \rowcolor{gray!20}
A2B~\cite{CGTV15}             &{\bf 45}& 86       & 56      & 12       & 0      & $\approx$0  & \multicolumn{4}{c}{\bf N/A}   \\
B2A~\cite{BCZ18}              & 0      & 19       & 3      &   2     & 0       &  $\approx$0 & \multicolumn{4}{c}{\bf N/A}    \\  \rowcolor{gray!20}
B2A~\cite{Coron17}            &{\bf 1}      & 14       & 4      &   2     & 0       &  $\approx$0 & \multicolumn{4}{c}{\bf N/A}    \\ \midrule

\multicolumn{11}{c}{\bf Second-Order Masking}      \\ \midrule \rowcolor{gray!20}
Sbox~\cite{SP06}              &   0    &1,188,111 & {\bf 1,285}  & 1,284   & 256     & {\bf 1.073}        &  0&1,188,111   & 4,104   & 1.649       \\
Multiplication~\cite{RP10}    &  0     & 435      & 52     & 51      & 0       & 0.001        &  0& 435         & 92      & 0.001      \\  \rowcolor{gray!20}
Sbox~\cite{RP10}              &{\bf 2} & 7,503    & {\bf 270}    & 267     &  0      & 0.05         & 2& 7,140        & 866    & 0.045      \\
Key schedule~\cite{RP10}      &  0     &31,828,231& {\bf 475,943} & 475,942   & 0  &  {\bf 3,087}& 0 & 23,041,866   & 771,263 & 340,745 \\  \rowcolor{gray!20}
B2A~\cite{BCZ18}             & 0      &  1,653   &  25     &   23    &  0     & $\approx$0   & \multicolumn{4}{c}{\bf N/A}         \\
{B2A~\cite{SchneiderPOG19}}             & 0      &  780   &  15     &   13    &  0     & $\approx$0   & \multicolumn{4}{c}{\bf N/A}         \\ \rowcolor{gray!20}
{SecH (2)~\cite{TangZZQ15}}             & 0      &  1,770   &  14     &   13    &  0     & $\approx$0   & \multicolumn{4}{c}{\bf N/A}         \\
{SecR~\cite{TangZZQ15}}             & 0      &  3,003   &  25     &   24    &  0     & $\approx$0   & \multicolumn{4}{c}{\bf N/A}         \\  \rowcolor{gray!20}
DOM AND~\cite{GrossMK17}             & 0      &  435   &  46     &   45    &  0     & $\approx$0   & \multicolumn{4}{c}{\bf N/A}         \\ \midrule

\multicolumn{11}{c}{\bf Third-Order Masking}    \\  \midrule \rowcolor{gray!20}
Multiplication~\cite{RP10}    & 0      & 24,804   & 713     &   712   &  0     & {\bf 0.021}        & 0& 24,804        & 1,410    & 0.033   \\
Sbox (4)~\cite{CPRR13}        & 0      &6,784,540 & {\bf 18,734}  & 18,733  & 0      & {\bf 2.021}        & 0& 4,499,950     & 33,075   & 3.894  \\  \rowcolor{gray!20}
Sbox (5)~\cite{CPRR13}        &  0     &6,209,895 & {\bf 10,470}  & 10,469  & 0      & {\bf 3.757}        & 0& 4,499,950     & 39,613   & 5.036  \\
B2A~\cite{CGV14}              &  0     &274,884,292,760 & 7     &  6   &   0    & 0.11        & \multicolumn{4}{c}{\bf N/A}    \\  \rowcolor{gray!20}
B2A~\cite{BCZ18}              &  0     & 457,310  & 816     &  807    & 0      & 0.052        & \multicolumn{4}{c}{\bf N/A}    \\
{B2A~\cite{SchneiderPOG19}}             & 0      &  59,640   &  133     &   132    &  0     & $\approx$0   & \multicolumn{4}{c}{\bf N/A}         \\  \rowcolor{gray!20}
DOM AND~\cite{GrossMK17}             & 0      &  23,426   &  572     &   571    &  0     & $\approx$0   & \multicolumn{4}{c}{\bf N/A}         \\ \midrule

\multicolumn{11}{c}{\bf Fourth-Order Masking}   \\ \midrule \rowcolor{gray!20}
Sbox~\cite{SP06}              & {\bf 98,176} & 4,874,429,560  & {\bf 1,087,630}   & 924,173  & 821,888    & {\bf 702}         &  98,176& 4,874,429,560        & 35,895,437              & 22,119    \\
Multiplication~\cite{RP10}    & 0      & 2,024,785      & 12,845      & 12,844   & 0          & {\bf 0.534}       &  0&2,024,785            & 33,322                  & 1.138      \\  \rowcolor{gray!20}
Sbox (4)~\cite{CPRR13}        & 0      & 3,910,710,930  & {\bf 1,159,295}   & 1,159,294  &  0       & {\bf 376}         &  0&2,277,036,685       & 3,343,587               & 879          \\
B2A~\cite{BCZ18}              & 0      & 387,278,970    & 62,570      &  62,561               & 0           & 10.7                   & \multicolumn{4}{c}{\bf N/A}          \\  \rowcolor{gray!20}
{B2A~\cite{SchneiderPOG19}}             & 0      &  6,438,740   &  1,271     &   1,270    &  0     & 0.11   & \multicolumn{4}{c}{\bf N/A}         \\
DOM AND~\cite{GrossMK17}             & 0      &  2,024,785   &  10,626     &   10,625    &  0     & 0.71   & \multicolumn{4}{c}{\bf N/A}         \\ \midrule

\multicolumn{11}{c}{\bf Fifth-Order Masking}      \\ \midrule \rowcolor{gray!20}
Multiplication~\cite{RP10}    &  0    & 216,071,394    & {\bf 281,731}     & 281,730               & 0           & {\bf 15}            &  0&216,071,394     & 856,147    & 45      \\
Sbox (4)~\cite{CPRR13}        &  0    & 2,782,230,535,161   & 99,996,680   & 99,996,679                  &    0           & 49,598                & \multicolumn{4}{c}{\bf N/A}        \\  \rowcolor{gray!20}
{B2A~\cite{SchneiderPOG19}}             & 0      &  901,289,592   &  29,926     &   29,838    &  0     & 3.03   & \multicolumn{4}{c}{\bf N/A}         \\ \bottomrule
\end{tabular}}
\end{table}

\smallskip\noindent{\bf Results on common benchmarks.}
All the programs not marked as N/A in Columns 8--11 are 
provided by the authors of~\cite{BBDFGS15}.
We only did necessary pre-processing, e.g., transformed them into SSA form.
Because of this,
from Columns 3 and 9 (i.e., $\sharp$Tuples), one can see that we considered
more tuples in several benchmarks (e.g., Full AES (4)~\cite{CPRR13}, Full MAC-Keccak~\cite{BBDFGS15},
Sbox~\cite{RP10}, Key schedule~\cite{RP10}, Sbox (4)~\cite{CPRR13}, Sbox (5)~\cite{CPRR13}) than \cite{BBDFGS15},
namely, we considered more observable variables than \cite{BBDFGS15}.

%
%

From the experimental results, 
we can observe that there are two benchmarks (i.e., Sbox~\cite{RP10} under second-order masking and Sbox~\cite{SP06} under fourth-order masking) which have potential leaky observable sets,  and  Algorithm~\ref{alg:DivHOMaRer} produces the same number 
as \cite{BBDFGS15}.
This demonstrates that Algorithm~\ref{alg:DivHOMaRer} is at least as precise as the one in~\cite{BBDFGS15}.
We will report in Section~\ref{sec:exmc} the results of resolving these potential leaky observable sets using our model-counting and pattern
matching methods.


From Columns 4 and 10 (i.e., $\sharp$Sets),
one can observe that the number of observable sets actually verified by Algorithm~\ref{alg:DivHOMaRer}
is less than the one in \cite{BBDFGS15} on all the common benchmarks (despite there are  more observable variables to be considered
in several benchmarks). The differences are noticeable on several benchmarks (e.g., Full AES (4), Full Keccak,
Sbox, Multiplication, Sbox~\cite{RP10,CPRR13} and Key schedule).
Reducing the number of verified observable sets
allows us to verify 5th-order Sbox (4)~\cite{CPRR13} which has not been done in \cite{BBDFGS15}.
Furthermore, from Columns 7 and 11 (i.e., Time),
we observe that Algorithm~\ref{alg:DivHOMaRer} is faster than  \cite{BBDFGS15} on almost all the benchmarks, and the improvement
is significant on larger benchmarks (e.g., 110X, 64X and 31X speed-up for Key schedule, Full AES (4) and 4th-order Sbox~\cite{SP06}).
These results demonstrate that our type inference algorithm is superior to 
the one in~\cite{BBDFGS15}.
Furthermore, the algorithm presented in~\cite{BBDFGS15} has an issue which may miss the verification
of some observable sets. (We have informed some authors of~\cite{BBDFGS15}.)



\medskip
\noindent{\bf Results on new benchmarks.}
All the programs marked as N/A are new benchmarks. 
We note that B2A~\cite{Coron17} in Common Lisp has been \emph{semi-automatically} verified under the ISW model by Coron~\cite{Coron18},
the AES implementation~\cite{CPRR13} including Sbox (4) has been \emph{semi-automatically} proved under the $d$-NI model~\cite{BBDFGSZ16}.
Some of the first-order A2B and B2A (except A2B~\cite{CGTV15}) 
have been verified in~\cite{GXZSC19}.
All the other higher-order benchmarks have not been verified by computer-aided tools.  

From Table~\ref{tab:higherorder}, we can observe
that almost all benchmarks 
can be proved secure using our type inference algorithms in a few seconds. The exceptions include
B2A~\cite{Goubin01}, A2B ~\cite{Goubin01}, A2B~\cite{CGTV15} and B2A~\cite{Coron17} which respectively have 1, 37, 45 and 1 potential leaky observable set(s). We shall see in Section~\ref{sec:exmc} that these potential leaky observable sets are actually spurious using  model-counting.
To our knowledge, it is the first time that these higher-order programs are \emph{automatically} proved secure by computer-aided tools. Recall that A2B and B2A are two kinds of conversion algorithms between arithmetic and Boolean masking.
Our tool could be used to verify masked implementations of cryptographic algorithms that use A2B and/or B2A conversion algorithms.

\medskip
\noindent{\bf Usage of the transformations $\simply_{\Dom}$ and $\simply_{\Col}$.}
Columns 5 and 6 show the number of sets whose verification involves the transformations $\simply_{\Dom}$ and $\simply_{\Col}$, respectively.
We can see that $\simply_{\Dom}$ is heavily used,
while $\simply_{\Col}$ is used only in one benchmark (i.e., second-order and fourth-order Sbox~\cite{SP06}),
which allows to prove lots of observable sets (e.g., 256 on second-order Sbox~\cite{SP06}) without invoking model-counting.
Moreover, $\simply_{\Dom}$ and $\simply_{\Col}$ and simplify the expressions of the 98,176 potential leaky observable sets for Sbox~\cite{SP06},
so that pattern matching and model-counting methods can be easily applied.
Remark that statistics of the transformation $\simply_{\Alg}$ is not reported,
as its complexity is of constant-time and is negligible.

\subsection{Comparison of Model-Counting Methods} \label{sec:exmc}
One important component of our approach is the model-counting method on which we rely to resolve potential leaky observable sets.
As mentioned in Section~\ref{sec:intro}, we consider two baseline algorithms (based on SMT encoding and brute-force methods) and a novel GPU-accelerated parallel algorithm. For the sake of evaluation, we carry out experiments only on programs that have potential leaky observable sets reported by our type inference algorithm (cf.\ Result in Table~\ref{tab:higherorder}).
We also implemented two programs for computing $k^3$ and $k^{254}$, which contain
$1$ private input variable, $3$ and $5$ random input variables, respectively.
These programs are taken from the first-order secure exponentiation~\cite{RP10} without the first RefreshMask function.

\begin{table}[t]
\centering
\caption{Comparison of three model-counting methods. O.T. denotes run out of time (three hours).}
\label{tab:firstorder}
\scalebox{0.85}{
\begin{tabular}{ccccccc}
\toprule
{\bf Description} & {\bf Order} & {\bf $\sharp$CNT} & {\bf Result} & {\bf SMT}              & {\bf BFEnum} & {\bf GPU}            \\ \rowcolor{gray!20}
 $k^{3}$~\cite{RP10}    & 1     & 2   & 2       &   96m      &  {\bf 0.2s}      &  0.43s                \\
 $k^{254}$~\cite{RP10}  & 1     & 4   & 4       &  O.T. & 30m  & {\bf 7.03s} \\   \rowcolor{gray!20}
 B2A~\cite{Goubin01}    & 1     & 1   & 0       & 17s         &   2s   &  {\bf 0.86s}              \\
 A2B~\cite{Goubin01}    & 1     & 37  & 0       &   O.T.        &   O.T.      &  {\bf 33.18s}      \\ \rowcolor{gray!20}
 A2B~\cite{CGTV15}     & 1     & 45  & 0        &   O.T.        &   O.T.      & {\bf 160m}       \\
 B2A~\cite{Coron17}     & 1     & 1   & 0       &   1m 35s        &   10m 59s      &  {\bf 3.17s}      \\  \rowcolor{gray!20}
 Sbox~\cite{RP10}       & 2     & 2   & 1(1)    &   O.T.        &    O.T.     &  {\bf 3,600s}          \\
 Sbox~\cite{SP06}       & 4     & 766 & 98176    &  O.T.        &    O.T.      &  {\bf 323s}                   \\  \bottomrule
\end{tabular}}
\end{table}

Table~\ref{tab:firstorder} shows the statistics of the three model-counting methods, with time limited to three hours per program.
Column 1 shows the reference and description of the program.
Column 2 shows the security order.
Column 3 ($\sharp$CNTs) shows the time of the model-counting method.
Column 4 shows the number of genuine leaky observable sets.
Columns 5--7 show the verification time (\emph{excluding} the time for type inference algorithm) of the SMT-based, (na\"{i}ve) brute-force and GPU-accelerated parallel methods, respectively.

The resolution shows that all potential leaky observable sets
of B2A~\cite{Goubin01}, A2B~\cite{Goubin01}, A2B~\cite{CGTV15} and B2A~\cite{Coron17} are spurious, while all potential leaky observable sets of Sbox~\cite{SP06}
are genuine. On program Sbox~\cite{RP10}, we resolved one of two potential leaky observable sets as
a genuine one in 1 hour, but the other set cannot be resolved in 2 hour, which is the only case which was unsuccessful in our experiments.

In detail, the GPU-accelerated parallel method significantly outperforms the other two methods on large programs.
In particular, the SMT-based and brute-force methods runs out of time on five and four programs, respectively. On the small program $k^3$,
the brute-force method is significantly faster than the SMT-based one, and is also faster than the GPU-accelerated one.
The latter is because that the GPU-accelerated method synthesizes a GPU program for each expression and
the involved I/O cost is remarkable in small programs. The GPU-accelerated algorithm provides two orders of magnitude improvements on the program $k^{254}$.
A2B~\cite{Goubin01} has been verified in~\cite{GXZSC19} based on the oracle provided by the authors.
However, it is not always the case that one can find such an oracle luckily.
It runs out of time if we use the SMT-based method or the brute-force method without the oracle,
while the GPU-accelerated method can verify this program in less than 1 minute.
As a conclusion, when model-counting is concerned, we recommend the GPU-accelerated algorithm. 

For the fourth-order Sbox~\cite{SP06} which is faulty, it only took 4 minutes to automatically resolve all the 98,176 potential leakage sets as genuine ones. Therefore, our tool is still faster than~\cite{BBDFGS15}
albeit it needs to invoke the model-counting method to resolve those
sets which cannot be determined by type inference.
It should be emphasized that this was not possible without the patter matching based method described in Section~\ref{sec:pm}. Indeed,
we estimate (based on the experiment) that each set takes approximately 0.5s, and in total they would require approximately 14 hours.
Instead, we identified 766 ($255\times 2+256$) patterns which can be used to handle all 98,176 potential leaky observable sets. As a result, only 766 times of model-counting are needed, which took less than 7 minutes, i.e., two orders of magnitude faster.

The 766 patterns are summarized as follows:
\begin{itemize}
	\item[(1)] $\{x_0, \ \text{Sbox}(k \oplus x_0) \oplus r, \ \text{Sbox}(k  \oplus i\oplus x_0) \oplus r\}$;
	\item[(2)] $\{\text{Sbox}(k) \oplus r, \ \text{Sbox}(k \oplus x_0) \oplus r, \ \text{Sbox}(k \oplus i\oplus x_0) \oplus r\}$;
	\item[(3)] $\{x_0, \ \text{Sbox}(k) \oplus r, \ \text{Sbox}(k \oplus x_0\oplus j) \oplus r\}$;
\end{itemize}
where $0< i\leq 255$ and $0\leq j\leq 255$, $x_0$ is a random variable, $k$ is a private input, and $r$ is a random variable which is introduced by our transformations. 
The family in Item (1) captures 65,280 observable sets, namely, 256 observable sets for each  $0< i\leq 255$,
the family in Item (2) captures 32,640 observable sets, namely, 128 observable sets for each  $0< i\leq 255$,
and the family in Item (3) captures 256 observable sets, 1 observable set for each  $0\leq j\leq 255$.

Barthe et al.~\cite{BBDFGS15} manually analyzed the 98,176 potential leaky observable sets which
are summarized by four families. These are similar to our automatically computed patterns except for the patterns in Item (3),
which is $\{x_0, \ y_0, \ \text{Sbox}(k \oplus x_0\oplus j) \oplus r\}$ with $y_0=\text{Sbox}(x_0)$ in~\cite{BBDFGS15} (note the third expression is adjusted for sake of presentation).
After manually analyzing source code of Sbox~\cite{SP06} under fourth-order masking, we confirm that
our pattern is correct while the pattern in~\cite{BBDFGS15} is not correct.
This demonstrates that it is hard to manually examine potential leaky observable sets.

\subsection{Comparison with  {\tt maskVerif}}\label{sec:comparsionmaskverif}
Our tool  \tool is designed to tackle arithmetic programs, but it is also interesting to evaluate its performance on Boolean programs, for which we   compare with the latest version of the open source tool {\tt maskVerif}~\cite{BBFG18},
which is limited to Boolean programs. To the best of our knowledge, {\tt maskVerif} is the only open source tool for verifying higher-order Boolean programs.
We experiment on the largest 6 Boolean programs (P12--P17) from~\cite{EWS14b} which are one-round versions of
the full 24-round MAC-Keccak~\cite{BBDFGS15},
together with randomly selected benchmarks from {\tt maskVerif}.

In our experiment, {\tt maskVerif}  reported ``stack overflow'' error on P12--P17. (We have reported this issue to the developers of {\tt maskVerif}.)
For the sake of experiments, we removed the last 5,000 assignments for each program when testing {\tt maskVerif} while our tool \tool is still tested on the whole programs P12--P17. (For the abridged version no ``stack overflow'' error was reported from  {\tt maskVerif}.)
We also revised DOM AND~\cite{GrossMK17} and DOM Keccak Sbox~\cite{GSM17} by introducing the following extra dummy variables and statements:
\begin{center}
$t_1=r_1\wedge x; \ t_2 = (\neg r_1)\wedge(\neg x); \ t_3 = t_1\wedge t_2; \ t_4 = t_2\wedge r_3; \ ... \ t_{18} = t_{16} \wedge r_{17}; \ t_{19} = t_{17} \wedge r_{18};$
\end{center}
where $r_1$--$r_{17}$ are fresh random variables,
and $x$ denotes a share of a private input variable.
Obviously, $t_3$--$t_{19}$ are always $0$.

\begin{table}[t]
	\centering
	\caption{Comparison with {\tt maskVerif}.  \label{tab:commaksverif}}
\scalebox{0.85}{
		\begin{tabular}{c|c|c|c}
			\toprule
			\multirow{2}{*}{{\bf Description}} & \multicolumn{2}{c|}{{\bf Time (s)}} & \multirow{2}{*}{{\bf Result}} \\ \cline{2-3}
			& {\tt maskVerif}~\cite{BBFG18}       & \tool        &                         \\   \midrule
			\multicolumn{4}{c}{\bf First-Order Masking}                                            \\    \rowcolor{gray!20}
			DOM AND~\cite{GrossMK17}                    & 0.01            & 0.01        &  0                       \\
			DOM Keccak Sbox~\cite{GSM17}           & 0.01            & 0.01        & 0                        \\     \rowcolor{gray!20}
			DOM AES Sbox~\cite{GrossMK17}               & {\bf 0.23}           & 4.52        & 0                        \\
			TI Fides-192 APN~\cite{BilginBKMW13}           & {\bf 86.61}           & 139.40       & 0                        \\  \rowcolor{gray!20}
			P12~\cite{EWS14b}           & 3,223           &  {\bf 2.9}     & 0                        \\
			P13~\cite{EWS14b}           & 3,257(1,234)          & {\bf 122}       &  4.8k                       \\    \rowcolor{gray!20}
			P14-P17~\cite{EWS14b}       & O.T.($\leq$12)           & {\bf 72-168}       & 1.6k-17.6k                        \\ \midrule

			\multicolumn{4}{c}{\bf Second-Order Masking}                                           \\ \midrule    \rowcolor{gray!20}
			DOM AND~\cite{GrossMK17}                    & 0.01            & 0.01        &  0                       \\
			DOM AND (Revised)~\cite{GrossMK17}                    & 16.82            & {\bf 0.89}        &  0                       \\   \rowcolor{gray!20}
			DOM Keccak Sbox~\cite{GSM17}           & {\bf 0.01}               & 0.05        &  0                       \\
			DOM Keccak Sbox(Revised)~\cite{GSM17}           & 16.62               & {\bf 1.93}        & 0                        \\   \rowcolor{gray!20}
			DOM AES Sbox~\cite{GrossMK18}              & {\bf 61.59}           & 7,385        &  0                       \\  \midrule

			\multicolumn{4}{c}{\bf  Third-Order Masking}                                            \\ \midrule  \rowcolor{gray!20}
			DOM AND~\cite{GrossMK17}                   & {\bf 0.01}            & 0.02        &0                         \\
			DOM AND(Revised)~\cite{GrossMK17}                   & 828.70            & {\bf 6.39}        &0                         \\  \rowcolor{gray!20}
			DOM Keccak Sbox~\cite{GSM17}           & {\bf 0.33}       & 1.26        &0                         \\
			DOM Keccak Sbox(Revised)~\cite{GSM17}           & 1,041.67           & {\bf 27.40}       & 0                        \\ \midrule

			\multicolumn{4}{c}{\bf  Fourth-Order Masking}                                           \\ \midrule  \rowcolor{gray!20}
			DOM AND~\cite{GrossMK17}                    & {\bf 0.13}            & 0.40         &  0                       \\
			DOM AND(Revised)~\cite{GrossMK17}                    & O.T.            & {\bf 77.40}         &  0                       \\  \rowcolor{gray!20}
			DOM Keccak Sbox~\cite{GSM17}           & {\bf 16.35}           & 78.13       & 0                        \\
			DOM Keccak Sbox(Revised)~\cite{GSM17}           & O.T.           &  {\bf 690.79}       & 0                        \\  \bottomrule
	\end{tabular}}
\end{table}

Table~\ref{tab:commaksverif} presents the results, with time being limited to two hours per program.
Column 1 gives the programs under comparison.
Columns 2-3 show the verification time of {\tt maskVerif} and our tool respectively.
Column 4 gives the number of leaky observable sets.
On the programs taken from~{\tt maskVerif}, {\tt maskVerif} performs better (up to 5$\times$) than \tool. 
We note that there is  one benchmark (second-order) DOM AES Sbox for which {\tt maskVerif} performs exceptionally well. 
The major reason is that an ad hoc rule is used therein but could not  be used in \tool because it is tailored for Boolean programs.
It is perhaps worth pointing out that we have identified some bugs of {\tt maskVerif}. For instance, when {\tt maskVerif} verifies DOM AND (under second-order),
the leaky observable set  $\{(k \oplus r_0 \oplus r_1) \wedge r_2, r_0, r_1, r_2\}$ where $k$ is private and $r_0, r_1, r_2$ are random variables is considered secure. This bug 
has inadvertently reduced the verification time of {\tt maskVerif} as less sets of variables need to be examined. (We have reported this issue to the developers of {\tt maskVerif}.)

On the programs P12--P17 and revised programs DOM AND~\cite{GrossMK17} and DOM Keccak Sbox~\cite{GSM17},
\tool significantly outperforms {\tt maskVerif}.
Specifically, on the secure program P12, \tool takes 2.9s while {\tt maskVerif} takes 3,223s on the reduced version.
On the insecure program P13, \tool identified all the flaws of the program in 122s,
while {\tt maskVerif} identified 1,234 flaws of the reduced version in 3,257s.
On the insecure programs P14--P17, \tool identified all the flaws using at most 168s,
while {\tt maskVerif} runs out of time (2 hours) and identified at most 12 flaws.
On second-/third-order revised programs, \tool is 8.6--130$\times$ faster than {\tt maskVerif}.
On fourth-order revised programs, {\tt maskVerif} ran out of time.

In conclusion, even for Boolean programs, \tool demonstrates largely comparable performance on the benchmarks tested by {\tt maskVerif}, and indeed considerably better performance on the new benchmarks.

\section{Related Work} \label{sect:relatedwork}

In this section, we review related work on masking countermeasures in
general, as well as existing techniques on the analysis of masked programs and the
detection/mitigation of other types of side-channel leaks.

\smallskip
\noindent \textbf{Masking.}
Boolean and arithmetic masking schemes~\cite{Mess00,Goubin01,ISW03,BGK04,OMPR05,SP06,CB08,RP10,MPLPW11,PR13,RBNGV15,BelaidBPPTV17,FGPPS17,WangYS19}
have been widely investigated in the past two decades with differences in
adversary models, masking schemes, cryptographic algorithms and compactness.
Secure conversion algorithms between Boolean and arithmetic maskings
 have also been investigated~\cite{Goubin01,CGV14,CGTV15,Coron17,SchneiderPOG19,BCZ18,HT19}.
These countermeasures and conversion algorithms are often  designed manually for
specific cryptographic algorithms. In this context, the common problem is the lack of efficient and effective tools for automatically proving their
correctness~\cite{CPR07,CPRR13}. Our work aims to bridge this gap.

\smallskip
\noindent \textbf{Testing.}
The predominant approach addressing security of (masked) implementations of cryptographic algorithms is the empirical leakage assessment by
statistical significance tests or launching state-of-the-art side-channel attacks~\cite{Mangard04,GJJR11,OPB16,BCPST19,WPB19,CCRB18,MBBB16,OrenW16,OrenWW14,BGMVV11,DDGNFRV12,PPEBJ14,MBLM12,BHW12,KimK13}.
These approaches are valuable in  identifying flaws even without any knowledge of the leakage model,
but can neither prove their absence nor identify all flaws, due to the limitation in measurement setup and/or explored traces. This paper purses an alternative, formal verification based approach which is largely complementary to the work based on testing.

\smallskip
\noindent \textbf{Formal Verification.}
Formal verification approaches, which are able to prove the absence of side-channel leaks, have been proposed in the prior work~\cite{MOPT12,BRNI13,EWS14a,EWS14b,BBDFGS15,BBDFGSZ16,BMZ17,OMHE17,BYT17,BBFG18,Coron18,BGIKMW18,ZGSW18,BelaidGR18,WSW19,GXZSC19,GaoZSW19}.  However, as we have explained earlier, these existing formal verification
methods are limited in applicability (i.e., Boolean program, stronger leakage model or first-order security only) and accuracy
(i.e., false alarms).

Early work via type-based proof system refer to~\cite{MOPT12,BRNI13}, which checks if a computation
result is logically dependent of the secret data and, at the same time, logically independent of any
random variable used for masking the secret data. However, these incomplete approaches
only support verification of first-order arithmetic programs and may be even unsound under the ISW model, as pointed out in~\cite{EWS14b}.

To improve accuracy, Eldib et al. proposed model-counting based method~\cite{EWS14a,EWS14b}, which is both sound and complete under the ISW model. This method reduces the verification problem to a series of satisfiability problems encoding model-counting constraints, which is solved by leveraging SMT solvers. However, it is limited to the first-order Boolean programs only. Blot et al. extended this SMT-based method to verify higher-order programs~\cite{BYT17}. The SMT-encoding is exponential in the number of bits of random variables and the number of orders, hence is short of scalability and limited to Boolean programs only.
Our SMT-based method can be seen as an generalization of these methods. Nevertheless, our GPU-accelerated parallel algorithm significantly outperforms the SMT-based method.

To improve efficiency,  Barthe et al. introduced the notion of $d$-NI to characterize security of masked programs
and proposed a sound proof system to verify higher-order masked programs~\cite{BBDFGS15}. The $d$-NI notion was later extended to $d$-SNI~\cite{BBDFGSZ16} which enables compositional verification. 
However, these approaches are incomplete, namely, it may produce
spurious leaky observable sets. Furthermore, as mentioned in Section~\ref{ex:hom}, these approaches may miss the
verification of some observable sets. In this direction, Bisi et al.~\cite{BMZ17} proposed a technique for verifying higher-order masking, which was
limited to Boolean programs with linear operations only. Ouahma et al.\ generalized the approach of~\cite{BBDFGS15} to verify assembly-level
code~\cite{OMHE17}, but is incomplete and limited to first-order programs only. Coron~\cite{Coron18} proposed two complementary
semi-automatic approaches via elementary circuit transforms, and showed how to generate security proofs automatically, for simple circuits, but are also incomplete. 
Barthe et al. developed a unified framework {\tt maskVerif}~\cite{BBFG18} for both software and hardware implementations taking into account glitch and transitions into account,
but is limited to Boolean programs only and their tool missed the verification of some observable sets in our experiments.

As a matter of fact, the most efficient masked programs do not achieve $d$-SNI directly, as mentioned by Bloem et al.~\cite{BGIKMW18}.
Thus, Bloem et al. proposed a sound approach ~\cite{BGIKMW18} via Fourier analysis, which considers the Fourier expansion of the Boolean
functions and reduces the verification to checking whether certain coefficients of the Fourier expansion are zero or not~\cite{BGIKMW18}. They studied the security problem of Boolean programs/hardware circuits in the $d$-threshold probing model~\cite{ISW03} and its extension with glitches for any given $d$. The verification problem is solved by leveraging SAT solvers. However, they considered Boolean programs/hardware circuits only.
Furthermore, it was shown by Barthe et al.~\cite{BBFG18} that {\tt maskVerif} outperforms~\cite{BGIKMW18}.
Bela\"{\i}d et al. proposed another compositional verification approach in~\cite{BelaidGR18} to overcome the limitation of $d$-SNI~\cite{BBDFGSZ16}, but can only verify Boolean programs composed of ISW multiplication functions, sharewise addition functions and $d$-SNI refresh functions.

In our prior work~\cite{ZGSW18,GXZSC19,GaoZSW19}, we have proposed gradual refinement based approaches for verifying
masked Boolean and arithmetic programs respectively, which integrate the semantic type system and model-counting based methods hence bring the best of both worlds. This semantic type system was leveraged by Wang et al.~\cite{WSW19} to identify transition-based flaws.
All these approaches are limited to first-order security only. It is challenging to generalize these approaches to higher-order masked arithmetic
program, which is addressed by the current work.

Compared to the above existing formal verification approaches,
the current work studies formal verification of arithmetic programs
against $d$-threshold probing model for any given $d$.
Both our type system and model-counting based method significantly improve the applicability and efficiency.
Our pattern matching based method is novel and effective at reducing the cost of model-counting
and summarizing patterns of leaky observable sets which can be used for diagnosis and debugging.
Putting them together, our hybrid formal verification approach goes significantly beyond the state-of-the-art in terms of applicability, accuracy and efficiency.

\smallskip
\noindent \textbf{Automated mitigation of power side-channel flaws.}
Automated mitigation techniques have been proposed to repair power side-channel flaws~\cite{BRBSI11,MOPT12,BBP12,EW14,BBDFGSZ16,BYT17,WangS17,WSW19}.
For example, techniques proposed
in~\cite{BRBSI11,MOPT12,BBP12,BBDFGSZ16} rely on compiler-like pattern
matching, whereas the ones proposed in~\cite{EW14,BYT17,WangS17} use
inductive program synthesis and the one in~\cite{WSW19} constraints register allocation.
All these works either rely upon existing formal verification techniques, hence have similar limitations as described above,
or do not use formal verification techniques, thus, correctness can not be guaranteed.
It would be interesting to investigate whether our new approach can aid in the
mitigation of power side-channel flaws, effectively making countermeasures better, as done in~\cite{EW14,BYT17}.

\smallskip
\noindent \textbf{Other types of side channels.}
In addition to power side-channel attacks, there are other types of side-channel attacks against cryptographic programs, where the side channels can be in the form of, e.g., CPU time, faults and cache behaviors. Techniques for verification and mitigation of these types of side-channel attacks have been studied in the literature, such as~\cite{Kocher96,ABBDE16,PasareanuPM16,BangAPPB16,PhanBPMB17,BrennanSB18,ChenFD17,AntonopoulosGHK17,WGS18,WuW19} for timing side-channel attacks,
\cite{GGP07,KopfMO12,DFKMR13,BKMO14,ChuJM16,ChattopadhyayBRZ17,WangWLZW17,SungPW18,WGS18,GWW18,ChattopadhyayR18,BasuAWC20} for cache side-channel attacks and \cite{BS97,BDFGZ14,BHL17,EWW16,HouBZL19,BHL20} for fault attacks. Each type of side-channel has unique characteristics, which usually requires specific verification techniques, so these results are orthogonal to our work.

\section{Conclusion}
\label{sec:concl}
In this work, we have proposed a hybrid formal verification approach for higher-order masked arithmetic programs.
The approach comprise a sound proof system equipped with an efficient algorithm for type inference which significantly outperforms the approach~\cite{BBDFGS15} for arithmetic programs, as well as novel model-counting and pattern matching based methods for resolving potential leaky observable sets automatically that cannot be accomplished by the existing tools. Experimental results show that our approach is not only significantly faster
but also is applicable to  more cryptographic implementations that could not be proved secure automatically before.

Future work includes extending our methods to verifying programs with inherent branching and loops, and/or under other
leakage models such as $d$-NI/SNI, $d$-threshold probing model, as well as their extensions with glitches and transitions
as done in~~\cite{BBFG18,BGIKMW18}







\end{document}